\documentclass[12pt,letterpaper,notitlepage]{article}
\usepackage{eurosym}
\usepackage{amsmath}
\usepackage{float}
\usepackage[left=1in,
            right=1in,
            top=1in,
            bottom=1in]{geometry}
\usepackage[footnotesize]{caption}
\usepackage{subcaption}
\usepackage{color}
\usepackage[pdftex]{graphicx}
\usepackage{graphicx}
\usepackage{psfrag}
\usepackage{Here}
\usepackage{multirow}
\usepackage{pgf}
\usepackage{color}
\usepackage{pgf,pgfarrows,pgfnodes,pgfautomata,pgfheaps,pgfshade}
\usepackage[T1]{fontenc}
\usepackage{color}
\usepackage{amsfonts}
\usepackage{amssymb, amsmath, amsthm,bm}
\usepackage{lscape}
\usepackage{natbib}
\usepackage{enumerate}
\usepackage{bbm}
\usepackage{lscape}
\usepackage{diagbox}
\usepackage{comment}
\usepackage{makecell}
\usepackage{titling}
\usepackage{setspace}
\usepackage{rotating}
\usepackage{mathtools}
\usepackage{xr-hyper}
\usepackage{afterpage}
\usepackage{lipsum}
\usepackage{enumitem}
 \usepackage[colorlinks,linkcolor={red},citecolor={blue},urlcolor={red},filecolor={red}]{hyperref}
\allowdisplaybreaks

\numberwithin{equation}{section}

\newtheorem{theorem}{Theorem}[section]

\newtheorem{assumption}{Assumption}

\newtheorem{lemma}{Lemma}[section]

\newtheorem{proposition}{Proposition}[section]
\newtheorem{remark}{Remark}

\makeatletter
\newcommand{\specialcell}[1]{\ifmeasuring@#1\else\omit$\displaystyle#1$\ignorespaces\fi}
\makeatother

\title{A Two-Way Cluster-Robust Variance Estimator for Quantile Regression\thanks{We are grateful for the helpful comments provided by Antonio F. Galvao, Carlos Lamarche, Harold Chiang, and Yuya Sasaki. All the remaining errors are ours.} }
\author{ {\large \textbf{Ulrich Hounyo}\thanks{%
Department of Economics, University at Albany -- State University of New
York, Albany, NY 12222, United States.}} \and {\large \textbf{Jiahao Lin}%
\thanks{%
Department of Economics, University at Albany -- State University of New
York, Albany, NY 12222, United States.} }}

\makeatother

\begin{document}
\maketitle 
\begin{abstract}
 This paper studies inference for linear quantile regression with two-way clustered data. Using a separately exchangeable array framework and a projection decomposition of the quantile score, we characterize regime-dependent convergence rates and establish a self-normalized Gaussian approximation. We propose a two-way cluster-robust sandwich variance estimator with a kernel-based density ``bread'' and a projection-matched ``meat'', and prove consistency and validity of inference in Gaussian regimes. We also show an impossibility result for uniform Gaussian inference in a non-Gaussian interaction regime, clarifying the limits of two-way cluster-robust inference for quantile regression. Monte Carlo simulations show that the proposed procedures provide more reliable size control than conventional alternatives. An empirical application to teacher-licensing restrictions suggests that licensing stringency is negatively associated with right-tail teacher quality at lower conditional quantiles, while the association is weak and statistically insignificant at higher quantiles.

\bigskip
 \textbf{JEL Classification}: C15, C23, C31, C80

 \medskip
 \textbf{Keywords}: Clustered
data, cluster-robust variance estimator, two-way clustering, quantile regression. 
\end{abstract}
%\date{February, 2023}

\vspace*{-0.5cm}

\vfill

\thispagestyle{empty} \pagebreak

\section{Introduction}
\label{sec:introduction}

Understanding how covariates shape the entire distribution of outcomes, rather than just the mean, is central to modern empirical economics. Quantile regression (QR), introduced by \citet{koenker1978regression}, provides a leading framework for this purpose. In many applications, however, observations are indexed by multiple clustering dimensions and may exhibit dependence along each of them. Canonical examples include worker--firm matched data, exporter--destination panels, and teacher--student interactions, where latent shocks induce correlation within rows and columns of a two-way array. While inference for QR under one-way clustering is now well understood, extending valid inference to multiway clustered environments remains an open and substantively important problem.

This paper develops asymptotic theory and feasible inference for linear QR under two-way clustering. Using a separately exchangeable array framework, we show that inference is governed by a projection decomposition of the score into row, column, and interaction components, yielding regime-dependent convergence rates. We propose a two-way cluster-robust variance estimator that adapts to these regimes and establish its validity under Gaussian limits, while also showing that uniform inference is impossible when limits are non-Gaussian.

The difficulty stems from several features that arise only when non-smooth estimation is combined with multiway dependence. First, unlike mean regression, the non-smooth quantile score complicates
uniform control of score fluctuations in neighborhoods of $\beta_{0}(\tau)$. Second, the Jacobian depends on the conditional density at the target quantile and must be estimated nonparametrically, introducing additional bias and sampling variability. Third, two-way clustering fundamentally alters the structure of the empirical process: the score cannot be reduced to a sum of independent or weakly dependent terms along a single dimension. Finally, the strength of dependence across clustering dimensions may vary, so that the rate of convergence of the estimator is not fixed but depends on the underlying dependence regime. These four challenges are not additive. In our setting, non-smooth
scores and kernel Jacobian estimation must be handled \emph{simultaneously} with
regime-dependent rates and genuinely two-way dependence, requiring a uniform
analysis of both the score and the Jacobian that remains valid across
dependence regimes.

These features place our setting outside the scope of existing multiway inference methods. In particular, although \citet{davezies2018asymptotic} develop variance estimation procedures for multiway clustered GMM estimators, their approach relies on a plug-in Jacobian that presumes knowledge of the population derivative. In quantile regression, by contrast, the Jacobian depends on an unknown conditional density and must be estimated nonparametrically, rendering their approach not directly applicable. Moreover, unlike the setting emphasized in \citet{davezies2018asymptotic}, we do not impose nondegeneracy of the asymptotic variance. Instead, the rate of convergence is allowed to vary with the strength of clustering, and inference must remain valid across such regimes.

We formalize these ideas within a separately exchangeable array framework based on the Aldous--Hoover--Kallenberg representation \citep{aldous1981representations,hoover1979relations,kallenberg1989representation}, which has become a standard device for analyzing multiway dependent data \citep[e.g.,][]{davezies2021empirical,menzel2021bootstrap,chiang2023standard,graham2024sparse}. The resulting projection decomposition reveals that the asymptotic behavior of the estimator is \emph{regime-dependent}: the convergence rate and limiting distribution are determined by the dominant projection component.

Building on this insight, we propose a feasible two-way cluster-robust variance estimator (CRVE) of sandwich form,
\[
\widehat{\Sigma}(\tau)
= \widehat{D}(\tau)^{-1}\,\widehat{\Omega}(\tau)\,\widehat{D}(\tau)^{-1}.
\]
The ``bread'' $\widehat{D}(\tau)$ is a kernel-based estimator of the conditional density at the target quantile, adapted to accommodate two-way clustering, while the ``meat'' $\widehat{\Omega}(\tau)$ aggregates row, column, and interaction contributions in a manner that mirrors the projection structure of the score. The proposed estimator is designed to adapt to the underlying dependence regime and is shown to be valid whenever a Gaussian approximation holds.

A further implication of our framework is that Gaussian approximations need not hold uniformly. When the interaction component of the score dominates and the row and column components are weak, the limiting distribution of the estimator can be non-Gaussian. In this case, we show that uniform consistency of inference is impossible over a natural class of data-generating processes. This impossibility result highlights a fundamental limitation of inference under two-way clustering and clarifies the conditions under which standard asymptotic methods can be relied upon.

Our analysis relates to and complements several strands of the literature. Recent work has emphasized that multiway dependence can generate nonstandard asymptotic behavior even in simple settings: \citet{menzel2021bootstrap} show that sample means may exhibit non-Gaussian limits under two-way clustering, while \citet{chiang2024extremal} study extremal quantiles under such dependence, focusing on rare-event behavior. Their results demonstrate that extremal quantiles can remain robust even in degenerate regimes. Our paper complements this line of work by focusing on \emph{interior quantiles}: whereas \citet{chiang2024extremal} analyze $\widehat{\beta}(\tau)$ as $\tau \to 0$, we consider fixed $\tau \in (0,1)$, where the non-smooth score and the interaction of clustering dimensions generate fundamentally different asymptotic behavior.

More broadly, our results contribute to the literature on quantile regression under dependence \citep{kato2012asymptotic,parente2016quantile,hagemann2017cluster} and to recent advances in dependence-robust covariance estimation \citep{galvao2024hac}. Compared with these studies, our framework is tailored to two-way clustering dependence, while also encompassing the i.i.d.\ and one-way clustering settings as special cases. In particular, by accommodating degeneracy and regime-dependent convergence rates, our analysis goes beyond standard settings and provides inference procedures that remain valid across a broad range of two-way clustering regimes.

We complement the theoretical results with Monte Carlo evidence demonstrating that conventional QR standard errors can severely understate uncertainty under two-way clustering, whereas the proposed CRVE delivers reliable coverage across a wide range of dependence configurations.\footnote{The MATLAB and Stata codes to implement the proposed method are available at \url{https://jiahaoecon.github.io/webpage/research/}.} An empirical application revisits the relationship between teacher-licensing stringency and the supply of high-quality teachers, uncovering substantial heterogeneity across the outcome distribution that is masked by mean-based analysis.

The remainder of the paper is organized as follows. Section~\ref{sec:model} introduces the model and develops the asymptotic theory. Section~\ref{sec:variance} presents the proposed variance estimator and establishes its validity. Section~\ref{sec:mc} reports Monte Carlo evidence, Section~\ref{sec:empirical} presents the empirical application, and Section~\ref{sec:conclusion} concludes. Mathematical derivations are provided
in the Appendix.

\section{Two-Way Clustering in Quantile Regression}

\label{sec:model} 

\subsection{Model Setting}
Let $\{(y_{ghi},X_{ghi}^{\top}):g=1,\dots,G,\;h=1,\dots,H,\;i=1,\dots,N_{gh}\}$
denote a two-way clustered array of observations, where $y_{ghi}\in\mathbb{R}$
is the scalar outcome and $X_{ghi}\in\mathbb{R}^{d}$ is a vector of regressors. The indices $g$ and $h$ label clusters along two dimensions, so that $(g,h)$ identifies a cell (e.g., unit $\times$ time). Let $N_{gh}$ denote the number of observations in cell $(g,h)$.  
%The first index $g$ identifies the cluster in the first dimension
%(the $g$-cluster), and the second index $h$ identifies the cluster
%in the second dimension (the $h$-cluster). The index pair $(g,h)$
%therefore labels a cell formed by the intersection of a $g$-cluster
%and an $h$-cluster (e.g., unit $\times$ time). Let $N_{gh}\in\mathbb{N}$
%denote the number of observations within cell $(g,h)$.

Fix a quantile index $\tau\in(0,1)$. We consider the quantile regression
model 
\begin{equation}
Q_{y_{ghi}}(\tau\vert X_{ghi})=X_{ghi}^{\top}\beta_{0}(\tau),\qquad g=1,\dots,G,\;h=1,\dots,H,\;i=1,\dots,N_{gh},\label{eq:qr-model}
\end{equation}
where $Q_{y_{ghi}}(\tau\vert X_{ghi})$ denotes the conditional $\tau$-quantile
of $y_{ghi}$ given $X_{ghi}$. The quantile error $e_{ghi}(\tau)$
is defined as $e_{ghi}(\tau):=y_{ghi}-X_{ghi}^\top\beta_{0}(\tau)$. Let
$\rho_{\tau}(u):=u\bigl(\tau-\mathbf{1}\{u\le0\}\bigr)$ denote the
check loss. For two-way clustered data, the QR estimator solves the following convex objective function
\[
\hat{\beta}(\tau):=\arg\min_{\beta\in\Theta}\frac{1}{\sum_{g,h}N_{gh}}\sum_{g=1}^{G}\sum_{h=1}^{H}\sum_{i=1}^{N_{gh}}\rho_{\tau}\!\bigl(y_{ghi}-X_{ghi}^{\top}\beta\bigr),
\]
with respect to $\beta\in\Theta\subset\mathbb{R}^{d}$, where $\Theta$
is compact.

For later use, define the quantile score 
\begin{equation}
\psi_{ghi}(\beta,\tau):=X_{ghi}\Bigl(\tau-\mathbf{1}\{y_{ghi}\le X_{ghi}^{\top}\beta\}\Bigr),\qquad\Psi_{ghi}(\tau):=\psi_{ghi}\bigl(\beta_{0}(\tau),\tau\bigr).\label{eq:qr-score}
\end{equation}
The function $\Psi_{ghi}(\tau)$ is nonlinear due to the indicator
function, which plays a central role in the asymptotic analysis. For each cell $(g,h)$, let $X_{gh}$ be the $N_{gh}\times d$ matrix
with $i^{\text{th}}$ row $X_{ghi}$, and let $y_{gh}$ and
$e_{gh}(\tau)$ be the corresponding $N_{gh}\times 1$ vectors with
$i^{\text{th}}$ elements $y_{ghi}$ and $e_{ghi}(\tau)$.  We impose the conditional quantile restriction $Q_{e_{ghi}}(\tau\vert X_{ghi})=0$, i.e., the conditional $\tau$-quantile of $e_{ghi}(\tau)$ given $X_{ghi}$ equals zero.  For simplicity, we focus on the case where each cell contains exactly
one observation, that is, $N_{gh}=1$ for all $g,h$, and suppress
the replicate index $i$. Extensions to heterogeneous $N_{gh}$ are
provided in Internet Appendix IB.

To model two-way dependence, we adopt the Aldous--Hoover--Kallenberg (AHK) representation, which provides a canonical framework for separately exchangeable arrays; see, for example, \citet{davezies2021empirical,mackinnon2021wild,chiang2023standard}.

\begin{assumption}[Two-way clustered data with the AHK representation]
\label{ass:ahk} There exist measurable functions $\Gamma$ such that
\[
(y_{gh},X_{gh})=\Gamma(U_{g},V_{h},W_{gh}),
\]
where $\{U_{g}\}_{g\ge1}$, $\{V_{h}\}_{h\ge1}$, and $\{W_{gh}\}_{g,h\ge1}$
are mutually independent sequences of i.i.d.\ random variables. Without
loss of generality, each latent variable is uniformly distributed
on $[0,1]$. The function $\Gamma$ may vary with $(G,H)$, allowing for triangular-array sequences of DGPs.
 \end{assumption}

Under Assumption~\ref{ass:ahk}, the array $(y_{gh},X_{gh})$ is separately exchangeable across $(g,h)$, and hence identically distributed, although generally dependent. The quantile index $\tau$ affects the model only through the conditional quantile restriction and does not enter the regressor process. There
exists a measurable function $\Psi(U,V,W;\tau)$ such that $\Psi_{gh}(\tau)=\Psi(U_{g},V_{h},W_{gh};\tau).$
The score then admits the Hoeffding type decomposition 
\begin{equation}
\Psi_{gh}(\tau)=E[\Psi_{gh}(\tau)]+\Psi^{(\mathrm{I})}(U_{g},\tau)+\Psi^{(\mathrm{II})}(V_{h},\tau)+\Psi^{(\mathrm{III})}(U_{g},V_{h},\tau)+\Psi^{(\mathrm{IV})}(U_{g},V_{h},W_{gh},\tau),\label{eq:anova_decomposition_main}
\end{equation}
where 
\begin{align*}
\Psi^{(\text{I})}(U_{g},\tau) & :=E[\Psi_{gh}(\tau)\vert U_{g}]-E[\Psi_{gh}(\tau)],\\
\Psi^{(\text{II})}(V_{h},\tau) & :=E[\Psi_{gh}(\tau)\vert V_{h}]-E[\Psi_{gh}(\tau)],\\
\Psi^{(\text{III})}(U_{g},V_{h},\tau) & :=E[\Psi_{gh}(\tau)\vert U_{g},V_{h}]-E[\Psi_{gh}(\tau)]-\Psi^{(\text{I})}(U_{g},\tau)-\Psi^{(\text{II})}(V_{h},\tau),\\
\Psi^{(\text{IV})}(U_{g},V_{h},W_{gh},\tau) & :=\Psi_{gh}(\tau)-E[\Psi_{gh}(\tau)\vert U_{g},V_{h}].
\end{align*}

This decomposition follows from $L^{2}$ projection theory for separately
exchangeable arrays and is unique in $L^{2}$. It isolates the distinct sources of dependence—row, column, and interaction components—which play a central role in determining the asymptotic behavior of the estimator. Closely
related decompositions for nonlinear statistics under AHK dependence
have been developed recently for U-statistics on bipartite and row--column
exchangeable arrays; see \citet{le2025hoeffding}. When convenient,
we write $\Psi_\bullet^{(j)}$ for $\Psi^{(j)}(\cdot;\tau)$, $j=\text{I},\ldots,\text{IV}$.
We suppress the dependence on $\tau$ to conserve space.

By construction, $E[\Psi^{(j)}_\bullet]=0$ for each $j$ and $E[\Psi_{\bullet}^{(j)}\Psi_{\bullet}^{(j')\top}]=0$
for $j\neq j'$. Although $(U_{g},V_{h},W_{gh})$ are independent,
the components $\Psi_{g}^{(\text{I})}$, $\Psi_{h}^{(\text{II})},$
$\Psi_{gh}^{(\text{III})}$, and $\Psi_{gh}^{(\text{IV})}$ need not
be. These components are, however, pairwise orthogonal in $L^{2}$,
which suffices to characterize asymptotic variances and limit distributions.

\subsection{Asymptotic Distribution}

\label{sec:asymptotics} For $j\in\{\text{I},\text{II},\text{III},\text{IV}\}$,
define the component variances 
\[
\sigma_{j,\Gamma}^{2}:=E\!\left[\Psi^{(j)}_{\bullet}\Psi^{(j)\top}_\bullet\right].
\]
The subscript $\Gamma$ emphasizes that these quantities depend on
the underlying DGP and may vary with $(G,H)$. To simplify notation, we suppress the explicit $(G,H)$ dependence.

A standard argument yields the Bahadur representation 
\begin{align}\label{eq:bahadur}
\hat{\beta}-\beta_{0}(\tau)
= D(\tau)^{-1}\,\bar{\Psi}_{GH} + o_{P}\!\left(\bigl\|\hat{\beta}-\beta_{0}(\tau)\bigr\|\right),
\end{align}
where 
\[
D(\tau):=E\!\left[f_{e\vert X}(0\vert X_{gh})\,X_{gh}X_{gh}^{\top}\right],\qquad\bar{\Psi}_{GH}:=\frac{1}{GH}\sum_{g=1}^{G}\sum_{h=1}^{H}\Psi_{gh}.
\]
Using \eqref{eq:anova_decomposition_main}, we decompose $\bar{\Psi}_{GH}$
as 
\begin{align*}
\bar{\Psi}_{GH} & =\frac{1}{G}\sum_{g=1}^{G}\Psi_{g}^{(\text{I})}+\frac{1}{H}\sum_{h=1}^{H}\Psi_{h}^{(\text{II})}+\frac{1}{GH}\sum_{g=1}^{G}\sum_{h=1}^{H}\Bigl(\Psi_{gh}^{(\text{III})}+\Psi_{gh}^{(\text{IV})}\Bigr)\\
 & :=\bar{\Psi}^{(\text{I})}+\bar{\Psi}^{(\text{II})}+\bar{\Psi}^{(\text{III})}+\bar{\Psi}^{(\text{IV})}.
\end{align*}
Observe that the arrays
$\{\Psi_g^{(\mathrm{I})}\}_{g=1}^G$ and
$\{\Psi_h^{(\mathrm{II})}\}_{h=1}^H$
are i.i.d. across clusters,
and, conditional on $\{U_g,V_h\}$,
$\{\Psi_{gh}^{(\mathrm{IV})}\}_{g,h}$ are independent across cells. Consequently, after appropriate normalization,
the sums associated with $\bar{\Psi}^{(\text{I})}$, $\bar{\Psi}^{(\text{II})}$,
and $\bar{\Psi}^{(\text{IV})}$ are asymptotically Gaussian, whereas
$\bar{\Psi}^{(\text{III})}$ may admit a non-Gaussian limit.

For each $j$, we impose a common-order restriction on the eigenvalues of $\sigma_{j,\Gamma}^{2}$ and use $\sigma_{j,1\Gamma}^{2}$ only as a convenient shorthand for the representative magnitude. Let $\lambda_{\max}(\cdot)$ and $\lambda_{\min}(\cdot)$ denote the maximum and minimum eigenvalues of the input matrix.

\begin{assumption}[Homogeneous order]\label{as:homogeneous}
For each nonzero component \(j\in\{\mathrm{I},\mathrm{II},\mathrm{III},\mathrm{IV}\}\),
\(
\lambda_{\max}(\sigma_{j,\Gamma}^{2})/\lambda_{\min}(\sigma_{j,\Gamma}^{2})=O(1).
\)
\end{assumption}

Assumption \ref{as:homogeneous} can be relaxed (but not removed) to
allow the diagonal elements of the variance components $\sigma_{j,\Gamma}^2$ to have heterogeneous orders across coordinates, as we demonstrate in Internet Appendix IC. We nevertheless maintain the current form in the main text for expositional simplicity. This restriction is standard and has also been imposed, either implicitly or explicitly, in the recent two-way clustering literature; see, for example, \citet{mackinnon2021wild}, \citet{chiang2023standard}, and Assumption 5 of \citet{davezies2025analytic}. In particular, an innovative Example 2 in \citet{davezies2025analytic} further demonstrates why such a restriction is needed: in its absence, the usual least-squares approximation can fail under two-way clustered dependence.

The same consideration arises in our quantile setting. To see this, note that the rate of a given coordinate of ${D}(\tau)^{-1}\bar\Psi_{GH}$ need not coincide with the rate of the corresponding coordinate of $\widehat{D}^{-1}\bar\Psi_{GH}$, because slow components may cancel only after multiplication by $D(\tau)^{-1}$. Moreover, when coordinates operate on different scales, this cancellation need not be reproduced by the feasible analogue based on $\widehat D^{-1}$.

For a simple illustration, consider median regression with $X_{gh}=(1,\alpha_g)^\top$ and $u_{gh}=\xi_h$, where $E[\alpha_g]=\mu$ and $\mathrm{Median}(\xi_h)=0$. Let
\[
\psi_h:=\frac12-1\{\xi_h\le 0\},\qquad
\bar\psi_H:=\frac1H\sum_{h=1}^H\psi_h,\qquad
\bar\alpha_G:=\frac1G\sum_{g=1}^G\alpha_g.
\]
Then
\(
\bar\Psi_{GH}
=\frac1{GH}\sum_{g=1}^G\sum_{h=1}^H X_{gh}\psi_h
=
(
\bar\psi_H, 
\bar\alpha_G\,\bar\psi_H
)^\top.
\)
Moreover, the population Jacobian is
\(
D(\tau)=f_\xi(0)
\begin{pmatrix}
1 & \mu\\
\mu & E[\alpha_g^2]
\end{pmatrix},
\)
which implies
\(
\bigl[D(\tau)^{-1}\bar\Psi_{GH}\bigr]_2
=
\frac{(\bar\alpha_G-\mu)\bar\psi_H}{f_\xi(0)Var(\alpha_g)}.
\)
By contrast, the feasible Jacobian is
\[
\widehat D
=
\widehat f_\xi(0)
\begin{pmatrix}
1 & \bar\alpha_G\\
\bar\alpha_G & \overline{\alpha_G^2}
\end{pmatrix},
\qquad
\overline{\alpha_G^2}:=\frac1G\sum_{g=1}^G\alpha_g^2,
\]
with $\widehat f_\xi(0)\overset{P}{\to}f_\xi(0)$, and therefore
\(
\bigl[\widehat D^{-1}\bar\Psi_{GH}\bigr]_2=0.
\)
Hence, the second coordinate of $D(\tau)^{-1}\bar\Psi_{GH}$ is proportional to $(\bar\alpha_G-\mu)\bar\psi_H$, whereas the corresponding coordinate of $\widehat D^{-1}\bar\Psi_{GH}$ vanishes exactly. Thus, once heterogeneous componentwise rates are allowed, $D(\tau)^{-1}\bar\Psi_{GH}$ and $\widehat D^{-1}\bar\Psi_{GH}$ need not share the same first-order behavior.

 Let $f_{e\vert X}(e\vert x)$ denote the conditional density of $e_{gh}$
given $X_{gh}=x$. $f_{e\vert X}^{(1)}(e\vert x)$ and $f_{e\vert X}^{(2)}(e\vert x)$
denote the corresponding first and second derivatives, respectively.
We impose a natural two-way array analogue of the standard moment,
smoothness, and nonsingularity conditions used in i.i.d. quantile
regression.

\begin{assumption}[Moments, smoothness, and nonsingularity]\label{as:moment}
(i) $E[\Psi_{gh}]=0$, $E\|X_{gh}\|^{q}<\infty$ for some $q\ge5$, $\sup_{U_g,V_h}E\bigl(\Vert X_{gh}\Vert^{2}\vert U_{g},V_{h}\bigr)<\infty$, and $E(X_{gh}X_{gh}^{\top})$
is nonsingular. (ii) There exists $\varrho>0$ such that whenever $\sigma_{j,\Gamma}^{2}>0$,  $E\|\sigma_{j,\Gamma}^{-1} \Psi^{(j)}\|^{4+\varrho}<\infty$ for each $j\in\{\mathrm{I},\mathrm{II},\mathrm{III},\mathrm{IV}\}$. (iii) The map $e\mapsto f_{e\vert X}(e\vert x)$ is
twice continuously differentiable for every $x$, and $\sup_{e,x}\big|f_{e\vert X}(e\vert x)\big|<\infty$
as well as $\sup_{e,x}\big|f_{e\vert X}^{(1)}(e\vert x)\big|<\infty$.
(iv) The conditional density at zero is uniformly bounded away from
zero: $\inf_{x}f_{e\vert X}(0\vert x)>0$.  (v) $\beta_{0}(\tau)$ lies in the interior
of a compact parameter space $\Theta$. \end{assumption} 

Let
\(
r_{GH}:=\min\left\{
\frac{G}{\sigma_{\mathrm{I},1\Gamma}^{2}},
\frac{H}{\sigma_{\mathrm{II},1\Gamma}^{2}},
GH
\right\},
\)
with ratios involving zero denominators interpreted as \(+\infty\). Then \(r_{GH}\) is the effective sample size for \(\bar{\Psi}_{GH}\), in the sense that \(\mathrm{Var}(\bar{\Psi}_{GH})\) is of order \(r_{GH}^{-1}\). Let
\(
N_{0,GH}:=
\sum_{g=1}^{G}\sum_{h=1}^{H}
\mathbf 1\{y_{gh}=X_{gh}^{\top}\widehat\beta\}
\)
denote the number of observations lying exactly on the fitted
quantile-regression hyperplane. 

\begin{assumption}[Active hyperplane condition]\label{as:hyperplane}
For the same $q$ in Assumption \ref{as:moment}(i), \(
N_{0,GH}
=
o_p\left((GH)^{1-1/q}r_{GH}^{-1/2}\right).
\)
\end{assumption}
This condition is automatically satisfied under the usual general-position
condition for quantile regression. In particular, when the relevant distributions are continuous,
a basic quantile-regression solution has at most \(d\) observations lying
exactly on the fitted hyperplane with probability 1; see, e.g.,
\citet{gutenbrunner1992regression}. The condition is much weaker than this
standard requirement. It also allows cluster-induced multiplicities of exact
fits. For example, suppose $q=5$ and there is perfect dependence along the \(G\)-dimension,
so that a fitted hyperplane may contain \(O(G)\)
observations. This case is still allowed by Assumption \ref{as:hyperplane}
provided
\(
G=o\left((GH)^{4/5}r_{GH}^{-1/2}\right).
\)
In this case the effective sample size is \(r_{GH}=H\). Hence the above
condition becomes
\(
G=o(H^{3/2}).
\)
This becomes a mild growth restriction. The case of perfect dependence along the
\(H\)-dimension is analogous.

Let the asymptotic variance of $\hat{\beta}$ be 
\[
\Sigma_{GH}:=D(\tau)^{-1}\,\Omega_{GH}(\tau)\,D(\tau)^{-1},\qquad\Omega_{GH}(\tau):=Var\!\left(\bar{\Psi}_{GH}\right).
\]
By orthogonality of the ANOVA components, 
\begin{equation}
\Omega_{GH}(\tau)=\frac{1}{GH}\Bigl(H\sigma_{\text{I},\Gamma}^{2}+G\sigma_{\text{II},\Gamma}^{2}+\sigma_{\text{III},\Gamma}^{2}+\sigma_{\text{IV},\Gamma}^{2}\Bigr).\label{eq: variance of score}
\end{equation}

\begin{assumption}[Orders of variance components]\label{as:order of variance}
Along any subsequence indexed by $(G_n,H_n)$ for which
$\bigl(H_n\sigma_{\mathrm{I},1\Gamma}^{2},G_n\sigma_{\mathrm{II},1\Gamma}^{2},\sigma_{\mathrm{III},1\Gamma}^{2},\sigma_{\mathrm{IV},1\Gamma}^{2}\bigr)$
converges in $[0,\infty]^{4}$, at least one of the following holds:
(i) $H_n\sigma_{\mathrm{I},1\Gamma}^{2}+G_n\sigma_{\mathrm{II},1\Gamma}^{2}\to\infty$, or  \[(ii)\quad \sigma_{\mathrm{III},1\Gamma}^{2}\to 0 \quad\text{ and }\quad
\lim_{n\to\infty}\Bigl(H_n\sigma_{\mathrm{I},1\Gamma}^{2}+G_n\sigma_{\mathrm{II},1\Gamma}^{2}+\sigma_{\mathrm{IV},1\Gamma}^{2}\Bigr)\;>\;0.
\]

\end{assumption}

Assumption \ref{as:order of variance} imposes one of two sufficient routes to asymptotic normality. First, clustering along at least one dimension may be sufficiently strong so that the Gaussian components
\(\bar{\Psi}^{(\mathrm{I})}+\bar{\Psi}^{(\mathrm{II})}\)
dominate. Second, if these clustering components do not dominate, the interaction variance
\(\sigma_{\mathrm{III},1\Gamma}^{2}\)
is required to be asymptotically negligible, thereby eliminating the potentially non-Gaussian contribution of
\(\bar{\Psi}^{(\mathrm{III})}\). The assumption also imposes that the asymptotic variance of $\hat{\beta}$ is not identically zero, while still allowing some components in the variance decomposition to be absent. In either case, the normalized score admits a Gaussian limit.

Assumption \ref{as:order of variance}  is imposed along every convergent subsequence because the original sequence of variance components need not itself converge. This subsequence formulation allows us to establish validity without requiring a unique limiting configuration of the variance decomposition. Assumption \ref{as:order of variance} can be viewed as a relaxation of standard sufficient conditions in the two-way clustering literature. Case (i) corresponds to the familiar Gaussian-dominance regime in which the row and/or column clustering components dominate, as in equation (16) of \citet{mackinnon2021wild} and Assumption 3(iv)(1) of \citet{chiang2023standard}. Case (ii) weakens the independence-type requirement imposed in equation (17) of \citet{mackinnon2021wild} and Assumption 3(iv)(2) of \citet{chiang2023standard}. Unlike those conditions, it does not require the interaction component to be i.i.d. or absent; it only requires its variance contribution, \(\sigma_{\mathrm{III},1\Gamma}^{2}\), to be asymptotically negligible. Nor does it require the row and column clustering components to be absent. These components may still be present, as long as they are not strong enough to generate the dominating Gaussian regime in case (i), and the overall asymptotic variance remains bounded away from zero. Therefore, the assumption accommodates variance configurations that are excluded by these standard conditions.

\begin{theorem}\label{thm:1} Let $\mathcal{B}_{0}$ denote the collection
of DGPs $\Gamma$ that satisfy Assumptions \ref{ass:ahk}--\ref{as:order of variance}.
Then 
\[
\Sigma_{GH}^{-1/2}\bigl(\hat{\beta}-\beta_{0}(\tau)\bigr)\overset{d}{\to}\mathcal{N}\!\left(0,\mathbf{I}_{d}\right)
\]
uniformly over $\Gamma\in\mathcal{B}_{0}$, as $G,H\to\infty$. \end{theorem}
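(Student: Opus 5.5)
The argument reduces, via subsequences, to two ingredients: a CLT for the self-normalized score $\Omega_{GH}^{-1/2}\bar\Psi_{GH}$, and a Bahadur representation whose remainder is $o_P(r_{GH}^{-1/2})$.

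\emph{Uniformity via subsequences.} Uniformity over $\mathcal{B}_0$ will be obtained by the standard subsequence device. Suppose the claim fails. Then there exist $\varepsilon>0$, a bounded continuous test function, a sequence $(G_n,H_n)$ and DGPs $\Gamma_n\in\mathcal{B}_0$ along which the expectation of that test function applied to $\Sigma_{GH}^{-1/2}(\hat\beta-\beta_0)$ differs from its Gaussian value by at least $\varepsilon$. By compactness of $[0,\infty]^4$ we may pass to a further subsequence along which
\[
\bigl(H_n\sigma_{\mathrm{I},1\Gamma}^{2},\,G_n\sigma_{\mathrm{II},1\Gamma}^{2},\,\sigma_{\mathrm{III},1\Gamma}^{2},\,\sigma_{\mathrm{IV},1\Gamma}^{2}\bigr)
\]
converges. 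Along it, Assumption~\ref{as:order of variance}(i) or (ii) holds, so it suffices to prove the CLT along such a subsequence in each case, reaching a contradiction.

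\emph{Score CLT.} By \eqref{eq: variance of score} and Assumption~\ref{as:homogeneous}, $\Omega_{GH}$ has all eigenvalues of order $r_{GH}^{-1}$. Write $\Omega_{GH}^{-1/2}\bar\Psi_{GH}$ as the sum of the four projection sums. In case (i), $\bar\Psi^{(\mathrm{I})}+\bar\Psi^{(\mathrm{II})}$ dominates the variance. The sums $\bar\Psi^{(\mathrm{III})}$ and $\bar\Psi^{(\mathrm{IV})}$ have variance $(\sigma^2_{\mathrm{III}}+\sigma^2_{\mathrm{IV}})/(GH)$, which is negligible relative to $(H\sigma^2_{\mathrm{I}}+G\sigma^2_{\mathrm{II}})/(GH)$ since their variances are bounded relative to the latter diverging sum; for $\sigma^2_{\mathrm{IV}}$ I would use the moment bounds to keep it $O(1)$ after normalization. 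A Lyapunov CLT (using the $4+\varrho$ moments of Assumption~\ref{as:moment}(ii)) applies to the independent sum of two i.i.d. arrays, with Cramér--Wold for the vector. In case (ii), $\bar\Psi^{(\mathrm{III})}$ has variance $\sigma^2_{\mathrm{III}}/(GH)=o(\lambda_{\min}(\Omega_{GH}))$ because the limit in (ii) is positive, so it is negligible by Chebyshev. The remaining term $\bar\Psi^{(\mathrm{IV})}$ is a martingale-difference or conditionally independent sum given $\{U_g,V_h\}$. I would treat it with a conditional Lindeberg CLT, showing that the conditional variance $(GH)^{-2}\sum E[\Psi^{(\mathrm{IV})}\Psi^{(\mathrm{IV})\top}\mid U_g,V_h]$, normalized, concentrates at $\sigma^2_{\mathrm{IV}}/(GH)$; this is a variance computation for a two-way average of a function of $(U_g,V_h)$, controlled by the $4+\varrho$ moments. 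Independence of $\{\Psi^{(\mathrm{IV})}\}$ from the $U,V$ sigma-field is only conditional, so joint convergence with the $\mathrm{I},\mathrm{II}$ parts follows by conditioning: those are measurable with respect to $U,V$, and the conditional CLT gives a limit not depending on them.

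\emph{Bahadur step (main obstacle).} We need \eqref{eq:bahadur} with remainder $o_P(r_{GH}^{-1/2})$ uniformly, including $\hat\beta-\beta_0=O_P(r_{GH}^{-1/2})$. I would proceed in three parts:
\begin{enumerate}
\item \emph{Consistency.} Use convexity: the population objective has a unique minimizer by Assumption~\ref{as:moment}(iii)--(v), and the sample objective converges pointwise by a two-way law of large numbers.
\item \emph{Subgradient condition.} Establish $\|\bar\psi_{GH}(\hat\beta)\|\le N_{0,GH}\max\|X_{gh}\|/(GH)$. Since $\max\|X\|=O_P((GH)^{1/q})$, this is $o_P(r_{GH}^{-1/2})$ exactly by Assumption~\ref{as:hyperplane}.
\item \emph{Stochastic equicontinuity.} Show $\sup_{\|\beta-\beta_0\|\le \delta}\|\bar\psi_{GH}(\beta)-\bar\psi_{GH}(\beta_0)-E[\cdot]\|=o_P(r_{GH}^{-1/2}+\|\beta-\beta_0\|)$ for $\delta\to0$.
\end{enumerate}
Part 3 is the hard part under two-way dependence. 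I would apply the same Hoeffding decomposition to the increment class $\{\psi(\beta)-\psi(\beta_0)\}$ and bound each projection separately. The row and column projections are i.i.d.\ empirical processes whose $L^2$ norms are at most $\sigma$ times a factor of $\|\beta-\beta_0\|^{1/2}$ by density boundedness. The interaction and idiosyncratic parts have an extra $(GH)^{-1/2}$ scaling. Maximal inequalities for separately exchangeable arrays, as in \citet{davezies2021empirical}, applied to the VC class of half-space indicators with envelope $\|X\|$, would give the rate. The delicate point is that the increment's row projection need not scale with $\sigma_{\mathrm{I}}$. I would bound it crudely by $\sqrt{\|\beta-\beta_0\|/G}$ and use $r_{GH}\le G\cdot O(1)$ when $\sigma_{\mathrm{I}}$ is bounded, or combine it with the rate itself, to get $o(r_{GH}^{-1/2})$.

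\emph{Conclusion.} A Taylor expansion of $E\psi(\beta)$ with the bounded $f^{(1)}$ then gives $D(\tau)(\hat\beta-\beta_0)=\bar\Psi_{GH}+o_P(r_{GH}^{-1/2})$. Combining this with the score CLT and $\Sigma_{GH}^{-1/2}D^{-1}\Omega_{GH}^{1/2}$ being orthogonal up to $O(1)$ factors (by Assumption~\ref{as:homogeneous}) yields the result.
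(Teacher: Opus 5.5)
Your architecture is the paper's: the subsequence device, the two-case score CLT (Lyapunov for the row and column sums, a conditional CLT for the $\mathrm{IV}$ part, joint convergence by conditioning on $\{U_g,V_h\}$), the subgradient bound via Assumption~\ref{as:hyperplane}, and a Bahadur expansion driven by stochastic equicontinuity. Getting the rate and the expansion together from an $o_P(r_{GH}^{-1/2}+\|\beta-\beta_0\|)$ bound, instead of the paper's Knight-identity and convexity step, is a harmless variation.

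\textbf{The gap.} It is in part~3, exactly where you say the argument is delicate, and your fix does not close it. The crude bound $\|E[\psi_{gh}(\beta)-\psi_{gh}(\beta_0)\mid U_g]\|_{L^2}\lesssim\|\beta-\beta_0\|^{1/2}$ makes the row part of order $\sqrt{\|\beta-\beta_0\|/G}\asymp r_{GH}^{-1/4}G^{-1/2}$ at the relevant scale $\|\beta-\beta_0\|\asymp r_{GH}^{-1/2}$. This delivers $o(r_{GH}^{-1/2})$ only if $r_{GH}=o(G^2)$, and the column part likewise needs $r_{GH}=o(H^2)$. Note also that $r_{GH}\lesssim G$ holds when $\sigma_{\mathrm{I},1\Gamma}^2$ is bounded \emph{away from zero}, not when it is merely bounded. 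Whenever Assumption~\ref{as:order of variance}(i) fails (the paper's Case~2), $H\sigma^2_{\mathrm{I},1\Gamma}+G\sigma^2_{\mathrm{II},1\Gamma}=O(1)$ and $r_{GH}\asymp GH$. Your two requirements then become $H=o(G)$ and $G=o(H)$, which are incompatible. ``Combining with the rate'' cannot help, because the rate has already been used.

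\textbf{What is needed, and why it is not available.} You need a Lipschitz bound $\|E[\psi_{gh}(\beta)-\psi_{gh}(\beta_0)\mid U_g]\|_{L^2}\lesssim\|\beta-\beta_0\|$, and the same given $V_h$. This follows from a uniform bound near zero on the densities of $e_{gh}$ given $(X_{gh},U_g)$ and given $(X_{gh},V_h)$, which Assumption~\ref{as:bandwidth}(i) implies. With it, the row and column parts are $O_P(\|\beta-\beta_0\|G^{-1/2})$ and $O_P(\|\beta-\beta_0\|H^{-1/2})$ uniformly, hence $o_P(\|\beta-\beta_0\|)$.

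Assumptions~\ref{ass:ahk}--\ref{as:order of variance} bound only $f_{e\vert X}$, and the gap is not merely technical. Here is a counterexample.
\begin{itemize}
\item Take $\tau=1/2$ and $X_{gh}\equiv 1$. Let $e_{gh}$ be i.i.d.\ over $h$ given $U_g$, with conditional law symmetric about zero, so $\Psi^{(\mathrm{I})}=\Psi^{(\mathrm{II})}=\Psi^{(\mathrm{III})}=0$ and $r_{GH}=GH$.
\item With probability $1-\epsilon$ the conditional law is $N(0,1)$. With probability $\epsilon$, draw $L_g\sim U(0,\epsilon)$ and use the density $(1-m)\phi(e)+\frac{m}{2}\{k_{L_g}(e)+k_{L_g}(-e)\}$. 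Here $k_L(e)=(\eta L)^{-1}K((e-L)/(\eta L))$, $\eta\in(0,1)$, and $K$ is a smooth density on $[-\frac12,\frac12]$.
\item By scale invariance, $E[k_{L_g}(s)]=A/\epsilon$ for all small $s>0$, where $A=\int K(u)(1+\eta u)^{-1}\,du$. So $f_{e\vert X}$ is smooth with bounded derivatives, $f_{e\vert X}(0)=(1-\epsilon m)\phi(0)+mA/2$, and all of Assumptions~\ref{ass:ahk}--\ref{as:order of variance} hold.
\item If $H/G\to\infty$, then with probability tending to one no row has a spike within $O((GH)^{-1/2})$ of zero. The sample median therefore behaves as if the Jacobian were $(1-\epsilon m)\phi(0)$.
\item Consequently $\Sigma_{GH}^{-1/2}(\widehat\beta-\beta_0)$ has limiting variance $\{f_{e\vert X}(0)/((1-\epsilon m)\phi(0))\}^2>1$.
\end{itemize}

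The paper's own proof has the same hole. In Lemma~\ref{lemma: stochastic equi main normal}, Case~2 says the row and column terms ``follow Case 1.'' But there $\sqrt{r_{GH}/G}=\sqrt{H}$ multiplies an unnormalized row process, so the Case~1 argument does not transfer.
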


Theorem \ref{thm:1} establishes asymptotic normality under self-normalization.
This normalization accommodates the possibility that the convergence
rate of $\hat{\beta}$ varies with the clustering structure.  In particular, Theorem \ref{thm:1} and equation \eqref{eq: variance of score} together imply that the (infeasible) convergence rate of $\widehat{\beta}(\tau)$ is $r_{GH}^{1/2}$. 
Thus, the convergence rate is determined by the projection component that dominates the variance decomposition in \eqref{eq: variance of score}. In particular, under standard one-way clustering (e.g., along the first
dimension), where $\sigma_{\mathrm{I},1\Gamma}^{2}$ is fixed and positive
definite and $\sigma_{\mathrm{II},1\Gamma}^{2}=0$, the rate reduces to $G^{1/2}$.
Under i.i.d.\ sampling, where $\sigma_{\mathrm{I},1\Gamma}^{2}
=\sigma_{\mathrm{II},1\Gamma}^{2}=0$, it reduces to $(GH)^{1/2}$.

\section{Cluster-Robust Variance Estimator (CRVE)}

\label{sec:variance} The two-way cluster-robust variance estimator for quantile regression takes the usual sandwich form 
\[
\widehat{\Sigma}=\widehat{D}^{-1}\widehat{\Omega}\,\widehat{D}^{-1},
\]
where $\widehat{D}$ is a consistent estimator of $D(\tau)$, and
$\widehat{\Omega}$ is consistent for the deterministic target ${\Omega}_{GH}$.

\subsection{Estimating $D(\tau)$.}

The matrix $D(\tau)=E\!\left[f_{e\vert X}(0\vert X_{gh})X_{gh}X_{gh}^{\top}\right]$
captures the impact of conditional heteroskedasticity through the
conditional density at the target quantile. We estimate $D(\tau)$
using Powell's (nonparametric) kernel estimator, 
\[
\widehat{D}=\frac{1}{GH\,\ell}\sum_{g=1}^{G}\sum_{h=1}^{H}K\!\left(\frac{y_{gh}-X_{gh}^{\top}\widehat{\beta}}{\ell}\right)X_{gh}X_{gh}^{\top},
\]
where $\ell>0$ is a bandwidth and $K(u)=\tfrac{1}{2}\,\mathbf{1}\{|u|\le1\}$
is the uniform kernel. Notably, the form of $\widehat{D}$ is identical
to that used under i.i.d. sampling; the difference lies entirely in
the dependence structure that governs its asymptotic behavior.

\citet{kato2012asymptotic} establishes consistency of Powell's estimator
under weak dependence. Extending this result to two-way clustered
arrays is non-trivial for three reasons. First, the convergence rate
of $\widehat{\beta}(\tau)$, denoted $r_{GH}$, can vary across dependence
regimes, and this rate enters $\widehat{D}$ in an essential way.
Second, $\widehat{D}$ itself may converge at a different regime-dependent
rate, say $r_{GH,D}$, and its leading asymptotic component may change
with the regime. The rates $r_{GH}$ and $r_{GH,D}$ need not coincide.
If $r_{GH,D}$ is relatively small, the nominal leading term
in the expansion of $\widehat{D}$ may be dominated by
remainder terms driven by the estimation error of $\widehat{\beta}$ (with a rate of $r_{GH}$). Third, dependence
arises along both cluster dimensions, so the analysis must disentangle
the row- and column-cluster components.

Let $Q_{gh}:=vech\!\left(X_{gh}X_{gh}^{\top}\right)\in\mathbb{R}^{d(d+1)/{2}},$
and denote the conditional density of $e_{gh}=e$ given subvectors of $(X_{gh}^\top,U_g,V_h)$ by $f_{e\vert X,U}(e\vert X_{gh},U_g)$, $f_{e\vert X,V}(e\vert X_{gh},V_h)$, and $f_{e\vert X,U,V}(e\vert X_{gh},U_g,V_h)$. We now impose the density, stronger moment, and bandwidth conditions
that ensure consistency of $\widehat{D}$. Let $R:=\min\{G,H\}$.

\begin{assumption}[Density and bandwidth]\label{as:bandwidth}
(i) There exist $\varepsilon_{0}>0$ 
such that, uniformly over $(x,u,v)$ and all $|e|\le\varepsilon_{0}$,
$ f_{e\vert X,U,V}(e\vert x,u,v)\le C<\infty.$ (ii) $E\!\left(\|X_{gh}\|^{4}\vert U_{g},V_{h}\right)<\infty$
uniformly over $(U_{g},V_{h})$. (iii) $\sup_{e,x}\bigl|f_{e\vert X}^{(2)}(e\vert x)\bigr|<\infty$
(iv) As $R\to\infty$, $\ell\to0$ and $R^{1/2}\ell/\log R\to\infty$.
(v) $E\!\big[Q_{gh}Q_{gh}^{\top}f_{e\vert X}(0\vert X_{gh})\big]$ is positive definite. \end{assumption}
Assumptions \ref{as:bandwidth}(i)-(ii) require uniform boundedness
of the conditional density around $e=0$ and the conditional fourth
moments of the regressors. Assumption \ref{as:bandwidth}(iii) imposes
bounded second derivative to ensure the dominated convergence. Assumption
\ref{as:bandwidth}(iv) is a standard bandwidth restriction; it is
the two-way clustered analogue of Assumption~3 in \citet{kato2012asymptotic}.
Finally, Assumption \ref{as:bandwidth}(v) imposes a nonsingularity
condition to ensure that $E\!\big[Q_{gh}Q_{gh}^{\top}f_{e\vert X}(0\vert X_{gh})\big]$
is positive definite, and hence the limiting variance is not identically
zero in the worst case.

To establish the asymptotic normality of $\widehat{D}$, we first derive a Hoeffding-type decomposition, which naturally leads to the following definitions:
\begin{align*}
\sigma_{\mathrm{I},Q}^{2} 
&:= Var\!\Big( E\!\big[\,Q_{gh}f_{e\vert X,U}(0\vert X_{gh},U_g)\,\big|\,U_g\big] \Big),\\
\sigma_{\mathrm{II},Q}^{2} 
&:= Var\!\Big( E\!\big[\,Q_{gh}f_{e\vert X,V}(0\vert X_{gh},V_h)\,\big|\,V_h\big] \Big).
\end{align*}
For $j\in\{\mathrm{I},\mathrm{II}\}$, the matrix $\sigma_{j,Q}^{2}$ may depend on $G$ and $H$. 
As before, we use $\sigma_{j,1Q}^{2}$ to denote the order of its first diagonal element.

\begin{theorem}\label{thm:Jacobian}
Let $\mathcal{B}_{1}$ denote the collection of DGPs $\Gamma$ satisfying Assumptions \ref{ass:ahk}--\ref{as:bandwidth}. Then the following statements hold uniformly over $\Gamma\in\mathcal{B}_{1}$.

\begin{enumerate}[label=(\arabic*)]
\item \textbf{Consistency and rate.}
\begin{align}
\widehat{D}-D(\tau)=o_{P}\!\left(r_{GH}^{-1/2}\ell^{-1/2}\right)+O_P\left(r_{GH}^{-1/2}+R^{-1/2}\ell^{-1/2}+\ell^{2}\right)=o_{P}(1).
\label{eq: consistency D}
\end{align}

\item \textbf{Asymptotic normality.} Suppose, in addition, that 
(i) 
$\lambda_{\max}(\sigma_{j,Q}^{2})/\lambda_{\min}(\sigma_{j,Q}^{2})=O(1),$ for each $j\in\{\mathrm{I},\mathrm{II}\}$;
(ii) At least one of the following two conditions holds: 
\begin{align}\label{eq: variance cp}
\sigma_{\mathrm{I},1\Gamma}^{2}/(\ell\sigma_{\mathrm{I},1Q}^{2})=O(1)
\quad \text{and} \quad
\sigma_{\mathrm{II},1\Gamma}^{2}/(\ell\sigma_{\mathrm{II},1Q}^{2})=O(1),\end{align}or \begin{align}\label{eq: variance cp1}H\sigma_{\mathrm{I},1\Gamma}^{2}+G\sigma_{\mathrm{II},1\Gamma}^{2}=O(1).\end{align}
Then, as $G,H\to\infty$,
\begin{align*}
V_D^{-1/2}\Bigg(
vech(\widehat{D})-vech\!\big(D(\tau)\big)
-\frac{\ell^{2}}{6}E\!\big[f_{e\vert X}^{(2)}(0\vert X_{gh})\,Q_{gh}\big]
+o(\ell^{2})
\Bigg)
\overset{d}{\to}
\mathcal{N}\!\left(
\bm{0}_{\frac{d(d+1)}{2}\times 1},
\,\mathbf{I}_{\frac{d(d+1)}{2}}
\right),
\end{align*}
where
\begin{align}\label{eq: V_D}
V_D
=
\frac{\sigma_{\mathrm{I},Q}^{2}}{G}
+
\frac{\sigma_{\mathrm{II},Q}^{2}}{H}
+
\frac{1}{2GH\ell}\,
E\!\big[Q_{gh}Q_{gh}^{\top}f_{e\vert X}(0\vert X_{gh})\big].
\end{align}
\end{enumerate}
\end{theorem}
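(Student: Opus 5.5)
The proof splits $\widehat{D}-D(\tau)$ into an oracle part evaluated at $\beta_0(\tau)$ and a part coming from the estimation error $\widehat\beta-\beta_0$. Write
\[
\widehat{D}(\beta):=\frac{1}{GH\ell}\sum_{g,h}K\!\left(\frac{y_{gh}-X_{gh}^{\top}\beta}{\ell}\right)X_{gh}X_{gh}^{\top},
\qquad \bar{D}(\beta):=E[\widehat D(\beta)],
\]
and decompose
\[
\widehat D-D(\tau)=\bigl[\widehat D(\beta_0)-\bar D(\beta_0)\bigr]+\bigl[\bar D(\beta_0)-D(\tau)\bigr]+\bigl[\bar D(\widehat\beta)-\bar D(\beta_0)\bigr]+\bigl[(\widehat D-\bar D)(\widehat\beta)-(\widehat D-\bar D)(\beta_0)\bigr].
\]
The four brackets are handled in turn.

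\textbf{Bias term.} The second bracket is a pure bias. Condition on $X_{gh}$, change variables $e=\ell u$, and Taylor expand $f_{e\vert X}$ to second order using Assumptions \ref{as:moment}(iii) and \ref{as:bandwidth}(iii). Since the uniform kernel has $\int u^2K(u)\,du=1/3$, this gives
\[
\bar D(\beta_0)-D(\tau)=\tfrac{\ell^2}{6}E\bigl[f^{(2)}_{e\vert X}(0\vert X)XX^\top\bigr]+o(\ell^2),
\]
which is the $O(\ell^2)$ term in part (1) and the centering in part (2).

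\textbf{Drift term.} The third bracket is deterministic given $\beta$. By the mean value theorem, $\bar D(\beta)-\bar D(\beta_0)=O(\|\beta-\beta_0\|)$, because $\partial_\beta\bar D$ involves $E[\ell^{-1}\int K(u)f^{(1)}_{e\vert X}(\ell u+\cdots)\,du\,X X^\top X]$, which is bounded by $\sup|f^{(1)}|$ and $E\|X\|^3<\infty$. Theorem \ref{thm:1} gives $\|\widehat\beta-\beta_0\|=O_P(r_{GH}^{-1/2})$, so this bracket is $O_P(r_{GH}^{-1/2})$.

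\textbf{Oracle fluctuation.} For the first bracket, apply the Hoeffding/AHK decomposition to the kernel-weighted summand
\[
k_{gh}:=\ell^{-1}K\bigl(e_{gh}/\ell\bigr)Q_{gh},
\]
exactly as in \eqref{eq:anova_decomposition_main}.
\begin{itemize}
\item The row projection is $E[k_{gh}\mid U_g]-E[k_{gh}]$. It equals $E[Q f_{e\vert X,U}(0\vert X,U_g)\mid U_g]$ up to $O(\ell^2)$, so its average has variance $\sigma^2_{\mathrm{I},Q}/G+o(1/G)$.
\item The column projection is symmetric and gives $\sigma^2_{\mathrm{II},Q}/H$.
\item The interaction component from $(U_g,V_h)$ has second moment $O(1)$ by Assumption \ref{as:bandwidth}(i),(ii), so its average contributes $O_P((GH)^{-1/2})$.
\item The idiosyncratic $W_{gh}$ component has $E\|k\|^2\approx \ell^{-1}\int K^2\cdot E[f QQ^\top]=(2\ell)^{-1}E[QQ^\top f]$. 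Its average therefore has variance $(2GH\ell)^{-1}E[QQ^\top f]$.
\end{itemize}
Together these give $O_P(R^{-1/2}+(GH\ell)^{-1/2})\subset O_P(R^{-1/2}\ell^{-1/2})$.

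For part (2), I would follow the Theorem \ref{thm:1} argument along subsequences. The row and column parts are i.i.d. sums (Lindeberg CLT). The $W$ part is conditionally independent given $\{U_g,V_h\}$, so apply Lyapunov conditionally; the $4$-th moment ratio is $O((GH\ell)^{-1})\to0$ by the bandwidth condition. The interaction part, which may be non-Gaussian, must be negligible relative to $V_D$. This holds because its variance $O((GH)^{-1})$ is $o((GH\ell)^{-1})$ and the last term of $V_D$ is bounded below by Assumption \ref{as:bandwidth}(v). Condition (i) plays the role of Assumption \ref{as:homogeneous} to allow self-normalization by $V_D^{-1/2}$. The terms are orthogonal, so the Cramér--Wold device finishes.

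\textbf{Stochastic equicontinuity (main obstacle).} The fourth bracket is the hard step. I would bound
\[
\sup_{\|\beta-\beta_0\|\le M r_{GH}^{-1/2}}\bigl\|(\widehat D-\bar D)(\beta)-(\widehat D-\bar D)(\beta_0)\bigr\|
\]
over the class of indicator-difference functions. The plan is:
\begin{itemize}
\item Apply the same AHK decomposition to the increments.
\item Handle the row and column projections by a maximal inequality for i.i.d. sums; the function class is VC-type, from half-spaces times $XX^\top$.
\item Handle the $W$ part by a conditional Bernstein inequality with a grid over the shrinking ball, using the $\log R$ slack in Assumption \ref{as:bandwidth}(iv).
\end{itemize}
The increment envelope has $L^2$ norm of order $(\ell^{-1}\cdot r_{GH}^{-1/2})^{1/2}\ell^{-1/2}$. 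This yields the stated $o_P(r_{GH}^{-1/2}\ell^{-1/2})$ plus terms absorbed into $O_P(R^{-1/2}\ell^{-1/2})$; I would use Davezies et al.'s empirical process bounds for separately exchangeable arrays here.

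\textbf{Conclusion.} Combining the four brackets gives \eqref{eq: consistency D}. Each term is $o(1)$, using $R^{1/2}\ell\to\infty$ and $r_{GH}\ge R$ up to constants. For part (2), the remainder must be $o_P(V_D^{1/2})$. Under \eqref{eq: variance cp}, $r_{GH}^{-1/2}$ is dominated by the row and column components of $V_D^{1/2}$. Under \eqref{eq: variance cp1}, $r_{GH}\asymp GH$, so $r_{GH}^{-1/2}=o((GH\ell)^{-1/2})$. This case split is exactly what the two alternative conditions buy. The delicate point is verifying that the equicontinuity remainder is also $o_P(V_D^{1/2})$ under both conditions.
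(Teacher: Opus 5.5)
Your proposal is correct and follows the paper's proof essentially step for step. It uses the same four-term split into the equicontinuity term at $\widehat\beta$ (Lemma \ref{lem: stochastic equicontinuity}, giving $o_P(r_{GH}^{-1/2}\ell^{-1/2})$), the oracle fluctuation at $\beta_0$ (Hoeffding/ANOVA decomposition plus the subsequence joint CLT, with the interaction part negligible by Assumption \ref{as:bandwidth}(v)), the $O(r_{GH}^{-1/2})$ plug-in drift via the mean value theorem, and the $\tfrac{\ell^2}{6}$ kernel bias. The differences are minor:
\begin{itemize}
\item The paper proves the equicontinuity lemma by a variance bound, symmetrization, Sauer's lemma and conditional Hoeffding, not by your componentwise maximal inequalities. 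Your route would also have to cover the $(U_g,V_h)$ interaction projection of the increments, which your plan omits.
\item The derivative of $\bar D(\beta)$ carries no $\ell^{-1}$ factor.
\item Your ``delicate point'' reduces, as in the paper, to $r_{GH,D}\lesssim r_{GH}\ell$. This holds termwise under \eqref{eq: variance cp} and trivially under \eqref{eq: variance cp1}.
\end{itemize}
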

Theorem \ref{thm:Jacobian}(1) shows that $\widehat{D}$ is a consistent estimator of $D(\tau)$. 
Theorem \ref{thm:Jacobian}(2) further establishes an asymptotic linear expansion and a central limit theorem for $vech(\widehat{D})$, normalized by $V_D^{-1/2}$. 
To clarify the stochastic order, define the following infeasible convergence rate for $\widehat{D}$:
\[
r_{GH,D}:=\min\left\{ \frac{G}{\sigma_{\mathrm{I},1Q}^{2}},\frac{H}{\sigma_{\mathrm{II},1Q}^{2}},GH\ell\right\}.
\]
The rate $r_{GH,D}$ is reminiscent of $r_{GH}$, since $\widehat{D}$ also admits a three-way decomposition into row, column, and interaction components. 
However, the two rates may behave quite differently. 
The reason is that there is no direct relationship between $\sigma_{\mathrm{I},1\Gamma}^{2}$ and $\sigma_{\mathrm{I},1Q}^{2}$, nor between $\sigma_{\mathrm{II},1\Gamma}^{2}$ and $\sigma_{\mathrm{II},1Q}^{2}$. 
Moreover, the variance components enter the two rates differently. 
For example, when the observations are i.i.d. across intersections $(g,h)$, one may have $r_{GH}=GH$ while $r_{GH,D}=GH\ell$, so the two rates differ by a factor of $\ell$. 
By contrast, the first two components of both $r_{GH}$ and $r_{GH,D}$ do not involve $\ell$.

For asymptotic normality in Theorem \ref{thm:Jacobian}(2), we impose two additional conditions. 
First, condition (2)(i) requires that the eigenvalues of $\sigma_{j,Q}^{2}$ are of the same order, as in the earlier assumptions. 
Second, condition (2)(ii) ensures that the leading stochastic term is not dominated by remainder terms arising from the estimation error in $\widehat{\beta}$. 
Intuitively, \eqref{eq: consistency D} suggests that, for the asymptotic normality result to hold, we need
\begin{align}\label{eq: rate cp}
r_{GH}^{-1}\ell^{-1}r_{GH,D}=O(1).
\end{align}
We consider two cases. 
When clustering dependence is sufficiently strong along at least one dimension, the additional condition \eqref{eq: variance cp} guarantees \eqref{eq: rate cp}. 
By contrast, when condition \eqref{eq: variance cp1} holds, clustering dependence is weak or absent along both dimensions (for example, under independence across intersections), and in this case \eqref{eq: rate cp} holds automatically. Note that these additional conditions are not needed for consistency of
\(\widehat{D}\) and do not enter the first-order analysis of
\(\widehat{\beta}\). They are used only to describe the leading
bias-variance tradeoff of the kernel estimator of the nuisance Jacobian.
\begin{remark}
In contrast to the score-based limit theory for $\widehat{\beta}$, no additional condition on the potentially non-Gaussian interaction component is required for the kernel-based estimator. This is because, in $\widehat{D}$, the interaction component is of smaller order than the leading terms and hence does not contribute to the first-order asymptotic distribution. Therefore, the limit theory is driven entirely by the dominant Gaussian components, yielding a Gaussian limit automatically.
\end{remark}

Theorem \ref{thm:Jacobian} also provides an AMSE-guided bandwidth rule
for estimating the nuisance Jacobian \(D(\tau)\). Specifically, the
leading approximation to the MSE of \(vech(\widehat D)\) is
\[
\text{AMSE}_{D}\left(\ell\right)
=
\frac{\ell^{4}}{36}
\left\Vert
E\!\big[f_{e\vert X}^{(2)}(0\vert X_{gh})\,Q_{gh}\big]
\right\Vert ^{2}
+
tr\left\{ Var\left(vech(\widehat{D})\right)\right\}.
\]
The bandwidth that minimizes this AMSE criterion is
\[
\ell_{\text{AMSE}}
=
\left(GH\right)^{-1/5}
\left(
\frac{
4.5\cdot tr\left(E\!\big[Q_{gh}Q_{gh}^{\top}
f_{e\vert X}(0\vert X_{gh})\big]\right)
}{
E\!\big[f_{e\vert X}^{(2)}(0\vert X_{gh})\,Q_{gh}\big]^{\top}
E\!\big[f_{e\vert X}^{(2)}(0\vert X_{gh})\,Q_{gh}\big]
}
\right)^{1/5}.
\]
This criterion is aimed at estimating the nuisance matrix \(D(\tau)\).\footnote{We do not claim that it is optimal for coverage or for the final
studentized quantile regression statistic. An inference-optimal bandwidth
would require a higher-order expansion of the full quantile regression
estimator and of the resulting variance estimator, including the effect
of estimating \(D(\tau)\) inside the sandwich formula. Such an analysis is
beyond the scope of the present paper.} We  follow the standard
plug-in approach used in kernel estimation of quantile-regression
Jacobian matrices and use the Gaussian location rule
\[
\widehat{\ell}_{\text{AMSE}}
=
\widehat{\sigma}\left(GH\right)^{-1/5}
\left(
\frac{
4.5\cdot\frac{1}{GH}\sum_{g,h}\left\Vert Q_{gh}\right\Vert ^{2}
}{
\alpha\left(\tau\right)
\left\Vert \frac{1}{GH}\sum_{g,h}Q_{gh}\right\Vert ^{2}
}
\right)^{1/5},
\]
with
\(\widehat{\sigma}=\text{MAD}\left(\left\{ \widehat{e}_{gh}\right\} \right)/0.6745\)
and
\(\alpha\left(\tau\right)=\left(1-\Phi^{-1}\left(\tau\right)\right)^{2}
\phi\left(\Phi^{-1}\left(\tau\right)\right)\).
Here, \(\text{MAD}(\cdot)\) is the median absolute deviation, and
\(\Phi\) and \(\phi\) are the distribution function and density function
of the standard normal distribution. The simulation results below show
that inference based on this plug-in rule is stable across the dependence
regimes considered.\footnote{In unreported sensitivity checks, we also considered several fixed
multiples of the plug-in bandwidth, such as smaller and larger fractions
of \(\widehat\ell_{\mathrm{AMSE}}\). The proposed bandwidth delivered
the best, or nearly the best, finite-sample inference performance across
the designs considered.}

\subsection{Consistency of Quantile Regression CRVE}
In contrast to $\widehat{D}$, the construction of $\widehat{\Omega}$
must account explicitly for two-way clustering and therefore differs
from the i.i.d.\ case. Recall the (estimated) quantile score 
\[
\widehat{\Psi}_{gh}=X_{gh}\Bigl(\tau-\mathbf{1}\{y_{gh}\le X_{gh}^{\top}\widehat{\beta}\}\Bigr).
\]
We estimate $\Omega_{GH}(\tau)$ by aggregating row-, column-, and idiosyncratic
components: 
\[
\widehat{\Omega}:=\widehat{\Omega}_{\mathrm{I}}+\widehat{\Omega}_{\mathrm{II}}+\widehat{\Omega}_{\mathrm{III,IV}},
\]
where 
\begin{align*}
\widehat{\Omega}_{\mathrm{I}} & :=\frac{1}{G^{2}H^{2}}\sum_{g=1}^{G}\sum_{h=1}^{H}\sum_{\substack{h'=1\\
h'\neq h
}
}^{H}\widehat{\Psi}_{gh}\widehat{\Psi}_{gh'}^{\top},\\
\widehat{\Omega}_{\mathrm{II}} & :=\frac{1}{G^{2}H^{2}}\sum_{h=1}^{H}\sum_{g=1}^{G}\sum_{\substack{g'=1\\
g'\neq g
}
}^{G}\widehat{\Psi}_{gh}\widehat{\Psi}_{g'h}^{\top},\\
\widehat{\Omega}_{\mathrm{III,IV}} & :=\frac{1}{G^{2}H^{2}}\sum_{g=1}^{G}\sum_{h=1}^{H}\widehat{\Psi}_{gh}\widehat{\Psi}_{gh}^{\top}.
\end{align*}
This estimator is the quantile-regression analogue of the two-way
CRVE for simple OLS estimator proposed by \citet{cameron2011robust}. In practice, we follow \citet{cameron2011robust} and apply the operator
$\text{EVC}(\cdot)$ to $\widehat{\Omega}$, which denotes the eigenvalue correction (e.g., projection
onto the cone of positive semidefinite matrices) applied to ensure
a positive semidefinite estimate.\footnote{An inference based on this adjustment is generally not invariant under affine transformations.}

Let $f(e_{gh},e_{gh'}\vert X_{gh},X_{gh'},U_{g},V_{h},V_{h'})$ and $f(e_{gh},e_{g'h}\vert X_{gh},X_{g'h},U_{g},U_{g'},V_{h})$
denote the conditional joint densities of $(e_{gh},e_{gh'})$ and
$(e_{gh},e_{g'h})$, respectively. Define $f(e_{gh}\vert X_{gh},X_{gh'})$ and $f(e_{gh}\vert X_{gh},X_{g'h})$ as the conditional marginal densities of $e_{gh}$, respectively. For integers $l,m\ge0$, define
the mixed partial derivatives 
\begin{align*}
f^{(l,m)}(e_{gh},e_{gh'}\vert X_{gh},X_{gh'}) & :=\frac{\partial^{\,l+m}}{\partial e_{gh}^{\,l}\,\partial e_{gh'}^{\,m}}\,f(e_{gh},e_{gh'}\vert X_{gh},X_{gh'}),\\
f^{(l,m)}(e_{gh},e_{g'h}\vert X_{gh},X_{g'h}) & :=\frac{\partial^{\,l+m}}{\partial e_{gh}^{\,l}\,\partial e_{g'h}^{\,m}}\,f(e_{gh},e_{g'h}\vert X_{gh},X_{g'h}).
\end{align*}

We impose the following conditions for validity of $\widehat{\Omega}$.

\begin{assumption}[Strong moments and smoothness]\label{as:moment-strong}
There exist a constant $C_{1}>0$ and integrable envelope
functions $D_{1}(\cdot)$ and $D_{2}(\cdot)$ such that: 

\begin{enumerate}[label=(\roman*)]
\item \label{as:ms-i}  $\max_{g\le G}\max_{h\le H}\|X_{gh}\|\le C_{1}R^{1/8}$ and $\sup_{U_g,V_h}E(\|X_{gh}\|^6\vert U_g,V_h)<\infty$.
\item \label{as:ms-iii} The conditional marginal densities are uniformly bounded:
\(
\sup_{e_{gh},X_{gh},X_{g'h}}\bigl|f(e_{gh}\vert X_{gh},X_{g'h})\bigr|<\infty\)
 and \(
\sup_{e_{gh},X_{gh},X_{gh'}}\bigl|f(e_{gh}\vert X_{gh},X_{gh'})\bigr|<\infty
\).  The conditional joint densities are uniformly bounded:
\[
\sup_{e_{1},e_{2},x_{1},x_{2},U_{g},V_{h},U_{g'},V_{h'}}\bigl|f(e_{1},e_{2}\vert x_{1},x_{2},U_{g},V_{h},U_{g'},V_{h'})\bigr|<\infty,
\]
where $(e_{1},e_{2},x_{1},x_{2})$ denotes either $(e_{gh},e_{gh'},X_{gh},X_{gh'})$
or $(e_{gh},e_{g'h},X_{gh},X_{g'h})$. 
\item \label{as:ms-iv} For $l,m\in\{1,2\}$, 
\[
\sup_{e_{2},x_{1},x_{2}}\bigl|f^{(l,0)}(e_{1},e_{2}\vert x_{1},x_{2})\bigr|\le D_{1}(e_{1}),\qquad\sup_{e_{2},x_{1},x_{2}}\bigl|f^{(0,m)}(e_{1},e_{2}\vert x_{1},x_{2})\bigr|\le D_{2}(e_{1}),
\]
for both pairs $(e_{1},e_{2},x_{1},x_{2})=(e_{gh},e_{gh'},X_{gh},X_{gh'})$
and $(e_{gh},e_{g'h},X_{gh},X_{g'h})$. 
\end{enumerate}
\end{assumption} Assumption \ref{as:moment-strong}(i) imposes standard boundedness
conditions on the regressors. Assumptions \ref{as:moment-strong}(ii)--(iii)
impose smoothness and boundedness conditions on the relevant conditional
joint densities around the target quantile. These conditions are standard
in quantile regression because uniform expansions of the nonsmooth
indicator function require local density regularity. Related smoothness conditions are also
imposed in recent work on HAC estimation for quantile regression, such as
\citet{galvao2024hac}. These restrictions should be viewed as maintained regularity conditions,
rather than conditions that can be verified exactly in finite samples. In
applications, their plausibility can be assessed by inspecting the
estimated residual density near zero and checking for influential
observations or leverage points. 

\begin{theorem}\label{thm:2-1} Let $\mathcal{B}_{2}$ denote the
collection of DGPs $\Gamma$ that satisfy Assumptions \ref{ass:ahk}-\ref{as:moment-strong}.
Then, 
\[\widehat{\Sigma}^{-1/2}\bigl(\hat{\beta}-\beta_{0}(\tau)\bigr)\ \overset{d}{\to}\ \mathcal{N}\!\left(0,\mathbf{I}_{d}\right),
\]
uniformly over $\Gamma\in\mathcal{B}_{2}$, as $G,H\to\infty$.
\end{theorem}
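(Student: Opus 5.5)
The plan is to reduce the statement to Theorem~\ref{thm:1} through a Slutsky-type argument, with everything normalized at the regime-dependent rate. Write
\[
\widehat{\Sigma}^{-1/2}(\hat\beta-\beta_0)=\bigl(\widehat{\Sigma}^{-1/2}\Sigma_{GH}^{1/2}\bigr)\,\Sigma_{GH}^{-1/2}(\hat\beta-\beta_0).
\]
Theorem~\ref{thm:1} gives the Gaussian limit of the second factor uniformly over $\mathcal{B}_0\supset\mathcal{B}_2$. It therefore suffices to show $\Sigma_{GH}^{-1/2}\widehat{\Sigma}\,\Sigma_{GH}^{-1/2}\to_P \mathbf{I}_d$ uniformly.

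Because $\Sigma_{GH}$ shrinks at rate $r_{GH}^{-1}$, the comparison has to be relative, not absolute. Assumption~\ref{as:homogeneous} and the fact that $D(\tau)$ has eigenvalues bounded away from $0$ and $\infty$ (Assumption~\ref{as:moment}) give $\lambda_{\min}(\Omega_{GH})\gtrsim r_{GH}^{-1}$, with all eigenvalues of the same order. It is then enough to prove two things:
\begin{enumerate}[label=(\alph*)]
\item $\widehat D-D(\tau)=o_P(1)$, which is Theorem~\ref{thm:Jacobian}(1);
\item $r_{GH}\bigl(\widehat\Omega-\Omega_{GH}\bigr)=o_P(1)$.
\end{enumerate}
Combining them through $\widehat D^{-1}\widehat\Omega\widehat D^{-1}$ gives the claim. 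The eigenvalue correction $\mathrm{EVC}(\cdot)$ changes $\widehat\Omega$ only by its negative part, which is itself $o_P(r_{GH}^{-1})$ once (b) holds. Uniformity follows by the usual device: argue along an arbitrary sequence $\Gamma_n\in\mathcal{B}_2$ and pass to subsequences along which the vector $(H\sigma_{\mathrm I}^2,G\sigma_{\mathrm{II}}^2,\sigma_{\mathrm{III}}^2,\sigma_{\mathrm{IV}}^2)$ converges, as permitted by Assumption~\ref{as:order of variance}.

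For (b), first replace $\widehat\Psi_{gh}$ by $\Psi_{gh}$, then handle the infeasible estimator.

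\textbf{The infeasible estimator.} Insert the Hoeffding decomposition \eqref{eq:anova_decomposition_main} into $\widehat\Omega_{\mathrm I}$, $\widehat\Omega_{\mathrm{II}}$ and $\widehat\Omega_{\mathrm{III,IV}}$ and compute expectations.
\begin{itemize}
\item $E\widehat\Omega_{\mathrm I}$ equals $\tfrac{H(H-1)}{G H^2}\cdot\tfrac{1}{H}\cdot H\sigma^2_{\mathrm I}$, up to the normalizations, i.e.\ approximately $\sigma^2_{\mathrm I}/G$ plus lower order. The analogous statement holds for $\widehat\Omega_{\mathrm{II}}$.
\item $E\widehat\Omega_{\mathrm{III,IV}}=\bigl(\sigma^2_{\mathrm I}+\sigma^2_{\mathrm{II}}+\sigma^2_{\mathrm{III}}+\sigma^2_{\mathrm{IV}}\bigr)/(GH)$.
\end{itemize}
The total bias relative to \eqref{eq: variance of score} is therefore $O\bigl((\sigma^2_{\mathrm I}+\sigma^2_{\mathrm{II}})/(GH)\bigr)$, which is $o(r_{GH}^{-1})$. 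For the variances, apply fourth-moment bounds from Assumption~\ref{as:moment}(ii). Each piece is a degenerate or non-degenerate U-type sum over the array, so its variance is of order $r_{GH}^{-2}\cdot o(1)$. For example, $\frac1{G^2}\sum_g(\bar\Psi^{(\mathrm I)}_g)^{\otimes2}$ concentrates at relative rate $G^{-1/2}$. The cross terms between components, such as $\Psi^{(\mathrm I)}_g\Psi^{(\mathrm{III})\top}_{gh'}$, are mean-zero and have variance of smaller order by orthogonality.

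\textbf{The estimation-error term; this is the main obstacle.} One must show
\[
r_{GH}\,\|\widehat\Omega(\widehat\beta)-\widehat\Omega(\beta_0)\|=o_P(1).
\]
The scores are indicators, so write
\[
\widehat\Psi_{gh}-\Psi_{gh}=-X_{gh}\bigl(\mathbf 1\{e_{gh}\le X_{gh}^\top(\widehat\beta-\beta_0)\}-\mathbf 1\{e_{gh}\le 0\}\bigr).
\]
I would work on the localized class $\{\beta:\|\beta-\beta_0\|\le Mr_{GH}^{-1/2}\}$ and, for each of the three sums, split
\[
\widehat\Psi_{gh}\widehat\Psi^\top_{gh'}-\Psi_{gh}\Psi^\top_{gh'}
\]
into a conditional-mean part plus a centered empirical-process part. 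The conditional mean involves pairs $(e_{gh},e_{gh'})$ that share $U_g$, which is exactly why the joint densities and their derivatives in Assumption~\ref{as:moment-strong}(ii)--(iii) are needed. A second-order Taylor expansion in the two thresholds gives a drift of order $\|\beta-\beta_0\|\cdot\bigl(\text{size of }\Omega_{\mathrm I}\bigr)+\|\beta-\beta_0\|^2$. I expect this to be $o(r_{GH}^{-1})$ after multiplying by the $G^{-2}H^{-2}$ normalization and counting terms.

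The centered part is to be controlled by a maximal inequality for separately exchangeable arrays. I would do this by conditioning on $(U_g,V_h)$, so that cells become independent, and then using symmetrization, a chaining bound over a VC class with envelope $\max\|X_{gh}\|^2\le C_1^2R^{1/4}$ from Assumption~\ref{as:moment-strong}(i), and a union bound over a grid. The $R^{1/8}$ growth bound is what keeps the envelope times the grid resolution negligible.

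Finally, the active-hyperplane condition (Assumption~\ref{as:hyperplane}) handles the difference between $\mathbf 1\{y\le X^\top\widehat\beta\}$ and its limit from below, i.e.\ the ties on the fitted hyperplane. Combining the three steps yields (b), and hence the theorem.
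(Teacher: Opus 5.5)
Your proposal is correct and follows essentially the same route as the paper: an oracle meat $\widetilde\Omega$ built from the true scores, with $r_{GH}(\widetilde\Omega-\Omega_{GH})=o_P(1)$ coming from the Hoeffding decomposition (the paper's Lemma~\ref{lemma: order of oracle variance}; in fact $E\widetilde\Omega=\Omega_{GH}$ exactly), a plug-in error $r_{GH}(\widehat\Omega-\widetilde\Omega)$ split into a drift governed by the joint densities of Assumption~\ref{as:moment-strong} and a centered part handled by conditional concentration over a grid with the $R^{1/8}$ envelope, and then Slutsky's lemma with Theorems~\ref{thm:1} and~\ref{thm:Jacobian}(1).

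The one step to state more precisely is the order of the drift. Cells $(g,h)$ and $(g,h')$ are independent given $U_g$, so for $h\neq h'$ one has $E[\psi_{gh}(\beta,\tau)\psi_{gh'}(\beta,\tau)^{\top}-\Psi_{gh}\Psi_{gh'}^{\top}]=E[\mu_\beta(U_g)\mu_\beta(U_g)^{\top}]-E[\Psi^{(\mathrm I)}_g\Psi^{(\mathrm I)\top}_g]$ with $\mu_\beta(u):=E[\psi_{gh}(\beta,\tau)\mid U_g=u]$. By Cauchy--Schwarz this is $O(\sigma_{\mathrm{I},1\Gamma}\|\beta-\beta_0\|+\|\beta-\beta_0\|^2)$, i.e.\ proportional to $\sigma_{\mathrm{I},1\Gamma}$ rather than to the size of $\Omega_{\mathrm I}$, and it yields $r_{GH}G^{-1}(\sigma_{\mathrm{I},1\Gamma}r_{GH}^{-1/2}+r_{GH}^{-1})\lesssim G^{-1/2}$ in every regime. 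A crude $O(\|\beta-\beta_0\|)$ bound, by contrast, only gives $O(r_{GH}^{1/2}/G)$, which does not vanish when, e.g., $r_{GH}\asymp GH$ with $H\gtrsim G$. In that regime the centered term must likewise use $H^{-1}\sum_h\Psi_{gh}=O_P(\sigma_{\mathrm{I},1\Gamma}+H^{-1/2})$ rather than an envelope bound; the paper's reduction to $\sigma^2_{\mathrm{I},1\Gamma}=1$, $r_{GH}=G$ does not address this case explicitly.
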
 Theorem \ref{thm:2-1} establishes the uniform validity
of the proposed two-way CRVE. Consequently, standard large-sample
inference procedures can be implemented using the quantile regression
estimator $\widehat{\beta}$ together with the variance estimator
$\widehat{\Sigma}$.

Note that if Assumption \ref{as:order of variance}(ii) fails, the limiting distribution may be non-Gaussian, with a convergence rate of $\sqrt{GH}$. This case is substantially more delicate and has only recently begun to be analyzed in a systematic way; see, for example, \citet{menzel2021bootstrap}, \citet{hounyo2025projection}, and \citet{davezies2025analytic}. In particular, \citet{menzel2021bootstrap} (cf. Proposition~4.1) provides a sharp and highly influential characterization of the asymptotic distribution for sample means. Building on this insight, we show that a closely related impossibility phenomenon extends beyond sample means to uniform inference in two-way clustered quantile regression.

For vectors $a,t\in\mathbb R^d$, the notation $a\le t$ is understood componentwise.
\begin{proposition}[Impossibility of uniform consistency]\label{prop:impossibility_uniform}
For each data-generating process $\Gamma$, let $P_\Gamma$ denote the probability measure induced by $\Gamma$ on the underlying sample space. Let $\mathcal{B}_{3}$ be the class of DGPs $\Gamma$ satisfying Assumptions \ref{ass:ahk}-\ref{as:hyperplane}, \ref{as:bandwidth}, \ref{as:moment-strong}, and $\lim\inf_{G,H\to\infty}\Bigl(H\sigma_{\mathrm{I},\Gamma}^{2}+G\sigma_{\mathrm{II},\Gamma}^{2}+\sigma_{\mathrm{III},\Gamma}^{2}+\sigma_{\mathrm{IV},\Gamma}^{2}\Bigr)>0$.
Let $\mathcal{E}$ be the collection of all measurable maps of the observed sample
\(
\{y_{gh}^{(\Gamma)},X_{gh}^{(\Gamma)}\}_{g\le G,\;h\le H}.
\)
Then there exist $\varepsilon>0$ and $\delta>0$ such that
\[
\liminf_{G,H\to\infty}\inf_{\widehat E\in\mathcal E}\sup_{\Gamma\in\mathcal B_3}
P_\Gamma\!\left(
\sup_{t\in\mathbb R^d}
\left|
P_\Gamma\!\left(\sqrt{GH}\,(\widehat\beta-\beta_0(\tau))\le t\right)
-
\widehat E\!\left(\{y_{gh}^{(\Gamma)},X_{gh}^{(\Gamma)}\}_{g\le G,\;h\le H};t\right)
\right|
>\varepsilon
\right)\ge\delta.
\]
\end{proposition}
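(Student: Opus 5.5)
This is a two-point (Le Cam-type) argument in the spirit of Proposition~4.1 of \citet{menzel2021bootstrap}. We exhibit two sequences of DGPs $\Gamma_0,\Gamma_1\in\mathcal B_3$ with two properties. First, the induced laws of the observed sample $\{y_{gh},X_{gh}\}_{g\le G,h\le H}$ remain asymptotically indistinguishable: their total variation distance is bounded away from one, or, more simply, the two laws coincide exactly. Second, the laws of $\sqrt{GH}(\widehat\beta-\beta_0(\tau))$ under $\Gamma_0$ and $\Gamma_1$ differ in Kolmogorov distance by at least $3\varepsilon$ along a subsequence. Given such a pair, for any $\widehat E\in\mathcal E$ the triangle inequality gives
\[
\bigl\{\sup_t|F_{\Gamma_0}(t)-\widehat E(t)|\le\varepsilon\bigr\}\cap\bigl\{\sup_t|F_{\Gamma_1}(t)-\widehat E(t)|\le\varepsilon\bigr\}=\emptyset .
\]
Hence $P_{\Gamma_0}(A_0^c)+P_{\Gamma_1}(A_1^c)\ge 1-\mathrm{TV}(P_{\Gamma_0},P_{\Gamma_1})$, where $A_k$ is the event that the estimated CDF is $\varepsilon$-close under $\Gamma_k$. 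Taking $\delta=(1-\limsup\mathrm{TV})/2$ and maximizing over $\Gamma\in\{\Gamma_0,\Gamma_1\}$ yields the claim, uniformly in $\widehat E$.

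For the construction we work in the degenerate interaction regime that Assumption~\ref{as:order of variance} excludes. Take the location model $X_{gh}=1$, $y_{gh}=\beta_0+e_{gh}$, with $e_{gh}=\Lambda(U_g,V_h)+\kappa\,W'_{gh}$. The key is to make the interaction $\Lambda$ indistinguishable from idiosyncratic noise at the level of the observables. Let $\Lambda(U_g,V_h)=a(U_g)b(V_h)$ with $a,b$ mean-zero sign variables (Rademacher) and $a,b$ independent of the noise. Then $\Psi^{(\mathrm{I})}=\Psi^{(\mathrm{II})}=0$ by symmetry, and $\sigma^2_{\mathrm{III},\Gamma}>0$ is fixed, so the $\liminf$ condition of $\mathcal B_3$ holds. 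The competing DGP $\Gamma_0$ uses the same marginal law of $e_{gh}$ but with $\Lambda$ replaced by an i.i.d.\ cell-level variable, so $\sigma^2_{\mathrm{III}}=0$. The two observed arrays have identical one-dimensional marginals but differ in their joint law, which is detectable. I would therefore instead follow Menzel's device more closely: keep $\Lambda=a(U_g)b(V_h)$ fixed, and vary only the distribution of the latent factors through a local perturbation, either $P(a=1)=1/2+c/\sqrt G$ or a perturbation of the link between $\Lambda$ and the score. Under this perturbation the array laws are contiguous, with TV bounded away from one by a Hellinger computation on the $G+H$ i.i.d.\ factors, but the limit of $\sqrt{GH}\,\bar\Psi_{GH}$ changes. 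Under the perturbation, $\sqrt{GH}\,\bar\Psi^{(\mathrm{III})}=\sqrt{GH}\,\overline{a}\,\overline{b}\cdot(\text{const})$ converges to a product of Gaussians. Its mean shifts, or it switches between a Gaussian chaos and a nondegenerate row/column Gaussian contribution, so the limits differ by a fixed Kolmogorov amount.

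Next, Theorem~\ref{thm:1}'s Bahadur representation \eqref{eq:bahadur} transfers these score limits to $\sqrt{GH}(\widehat\beta-\beta_0)$. I would reuse its proof, which requires only Assumptions~\ref{ass:ahk}--\ref{as:hyperplane} and not Assumption~\ref{as:order of variance}, to get $\sqrt{GH}(\widehat\beta-\beta_0)=D^{-1}\sqrt{GH}\bar\Psi_{GH}+o_P(1)$ under both sequences. The limits are continuous distributions (a Gaussian plus an independent product of Gaussians), so weak convergence upgrades to Kolmogorov convergence by Pólya's theorem. We also check that both DGPs satisfy Assumptions~\ref{as:bandwidth} and \ref{as:moment-strong}; this is easy for bounded $X$ and smooth noise densities.

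The main obstacle is the simultaneous requirement of contiguity of the observed-data laws and a nonvanishing difference in the estimator's limit. The perturbation must be of order $G^{-1/2}$ (resp.\ $H^{-1/2}$) in the factor distributions so that the likelihood ratio stays tight. At the same time it must shift $\sqrt{GH}\,\bar a\bar b$ by an $O(1)$ amount, which works because $\sqrt G\,\bar a$ picks up a mean shift $2c$ while $\sqrt H\,\bar b$ stays $N(0,1)$. I would first verify this via Le Cam's third lemma, since the joint limit of $(\log\text{LR},\sqrt G\bar a,\sqrt H\bar b)$ is Gaussian. I would then make sure that centering keeps $\beta_0(\tau)$ well-defined, re-centering $\beta_0$ under each DGP if necessary so that the estimand itself is identical across the two points.
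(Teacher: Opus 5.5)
Your final construction is essentially the paper's proof. It uses a sign interaction $a(U_g)b(V_h)$, perturbs one sign's law at rate $G^{-1/2}$ while keeping the other symmetric (so the median restriction holds exactly and no re-centering of $\beta_0$ is needed), passes a Hellinger bound on the latent factors to the observables, and obtains limits $k(Z_1+2c)Z_2+Z_0$, with $k>0$ a constant, whose laws differ across $c$. The paper instead takes $X_{gh}=U_g^xV_h^x$ and perturbs the column sign at rate $H^{-1/2}$. Your direct bound $P_{\Gamma_0}(A_0^c)+P_{\Gamma_1}(A_1^c)\ge 1-\mathrm{TV}$ reaches the stated $\liminf\inf\sup$ quantifier order more cleanly than the paper's contradiction via a consistent test.

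Three small corrections. First, the product limit's mean does not shift; what changes is its variance, $k^2(1+4c^2)$, and that suffices. Second, the Kolmogorov separation holds for all large $(G,H)$, not merely along a subsequence, which is what the $\liminf$ needs. Third, you cannot reuse the paper's stochastic-equicontinuity lemma for $\nu_S$ verbatim in this $\sigma^2_{\mathrm{III},\Gamma}>0$ regime, because its Case 2 uses $\sigma^2_{\mathrm{III},\Gamma}=o(1)$. With $X_{gh}\equiv 1$, however, the interaction part is immediate, since $\psi^{(\mathrm{III})}_{gh}(\beta)=\gamma(\beta)\,(a_g-E[a_g])\,b_h$ with $\gamma$ smooth.
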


Proposition \ref{prop:impossibility_uniform} establishes an impossibility result where no procedure can deliver uniformly consistent inference. Consequently, without Assumption \ref{as:order of variance}, the difficulty is not merely that the limiting distribution may be non-Gaussian; in some cases, a fundamental failure may arise, namely that uniformly consistent inference may no longer be attainable.

\section{Monte Carlo simulation}

\label{sec:mc}

In this simulation section, we assess the robustness of the proposed two-way clustered quantile regression inference procedure across a range of clustering configurations.
We evaluate the finite-sample performance of the proposed
two-way CRVE and compare
it with alternatives that only account for dependence
along the $g$-dimension, the $h$-dimension, or the $(g,h)$ intersection,
respectively. 

For each replication, we generate a two-way array $\{(y_{gh},X_{gh})\}_{g\le G,\,h\le H}$
from 
\begin{align}
y_{gh} & =\beta_{1}+\sum_{j=2}^{d}\beta_{j}X_{gh,j}+e_{gh},\label{eq: dgp}\\
X_{gh,j} & =\omega_{U}^{X}U_{g}^{X,j}+\omega_{V}^{X}V_{h}^{X,j}+\omega_{W}^{X}W_{gh}^{X,j},\\
e_{gh}&
=
\omega_U^e U_g^e+\omega_V^e V_h^e+\omega_W^e W_{gh}^e
-
\sigma_e \Phi^{-1}(\tau),\label{eq: e_gh}\\
\sigma_e&
=
\sqrt{(\omega_U^e)^2+(\omega_V^e)^2+(\omega_W^e)^2},
\end{align}
The latent components are mutually independent and i.i.d.\ standard
normal. Hence both the regressor and the regression error exhibit
additive two-way dependence through $(U_{g},V_{h})$ plus an idiosyncratic
component. In the baseline design, we set $\beta_j(\tau)=1$ for all $j=1,\ldots,d$ and conduct inference on the null hypothesis $\mathcal{H}_0:\beta_d(\tau)=1$ at $\tau=0.50$.

We also consider the case $\tau=0.25$. In addition, we study specifications in which $\beta_d(\tau)$ varies with $\tau\in(0,1)$ and test the corresponding quantile-specific null hypotheses. We further analyze a heteroskedastic design. Across all these alternative specifications, the results are qualitatively similar to those for the benchmark design reported in the main text. We therefore present them in  Internet Appendix IA.

We compute the quantile regression estimator $\widehat{\beta}_{d}(\tau)$
and the associated two-way clustered variance estimator. All results
are based on $10,000$ Monte Carlo replications.  By default, we set $d=10$, $G=H=50$, and 
$\omega_{\bullet}^{X}=\omega_{\bullet}^{e}=1$.  Nominal level is
$5\%$.

We compare the proposed two-way procedure (denoted \textbf{CTW}) with
four alternatives, described in detail in  Internet Appendix IA:
\begin{itemize}
\item \textbf{CG (cluster-$g$ only).} A one-way clustered inference method
that treats $g$ as the only clustering dimension and ignores dependence
across $h$. 
\item \textbf{CH (cluster-$h$ only).} A one-way clustered inference method
that treats $h$ as the only clustering dimension and ignores dependence
across $g$. 
\item \textbf{CI (intersection-only).} An i.i.d.-style inference method
that effectively uses only the $(g,h)$ intersection component and
ignores both two-way additive components. 
\item \textbf{CTW$_{\text{II}}$ (two-way cluster without intersection correction).
}A two-way clustered inference procedure that enforces positive semidefiniteness
without using EVC, but does not correct for the ``double-counting''
of the intersection component.
\end{itemize}
The one-way clustered quantile bootstrap of \citet{hagemann2017cluster}
exhibits qualitatively similar behavior to CG and CH in our simulations.

In this DGP, both $X_{gh,j}$ and $e_{gh}$ contain additive $g$-
and $h$-level components. Consequently, the score contributions relevant
for inference inherit dependence in \emph{both} dimensions. The proposed
estimator targets this structure by combining the $g$-level, $h$-level,
and $(g,h)$ components. In contrast, CG, CH, and CI omit at least
one of these components. Under the present scaling, the omitted component
does not vanish as $G,H$ increase and may become relatively more important
as the array grows, which leads to progressively more distorted standard
errors and hence worsening size (typically over-rejection) as $G,H$
increases. Rejection is based on the usual two-sided $t$-test.

\afterpage{ \begin{landscape} 
\begin{figure}[h!]
\centering \begin{subfigure}[t]{0.49\hsize} \subcaption{Two-way Clustering}\includegraphics[width=1\textwidth]{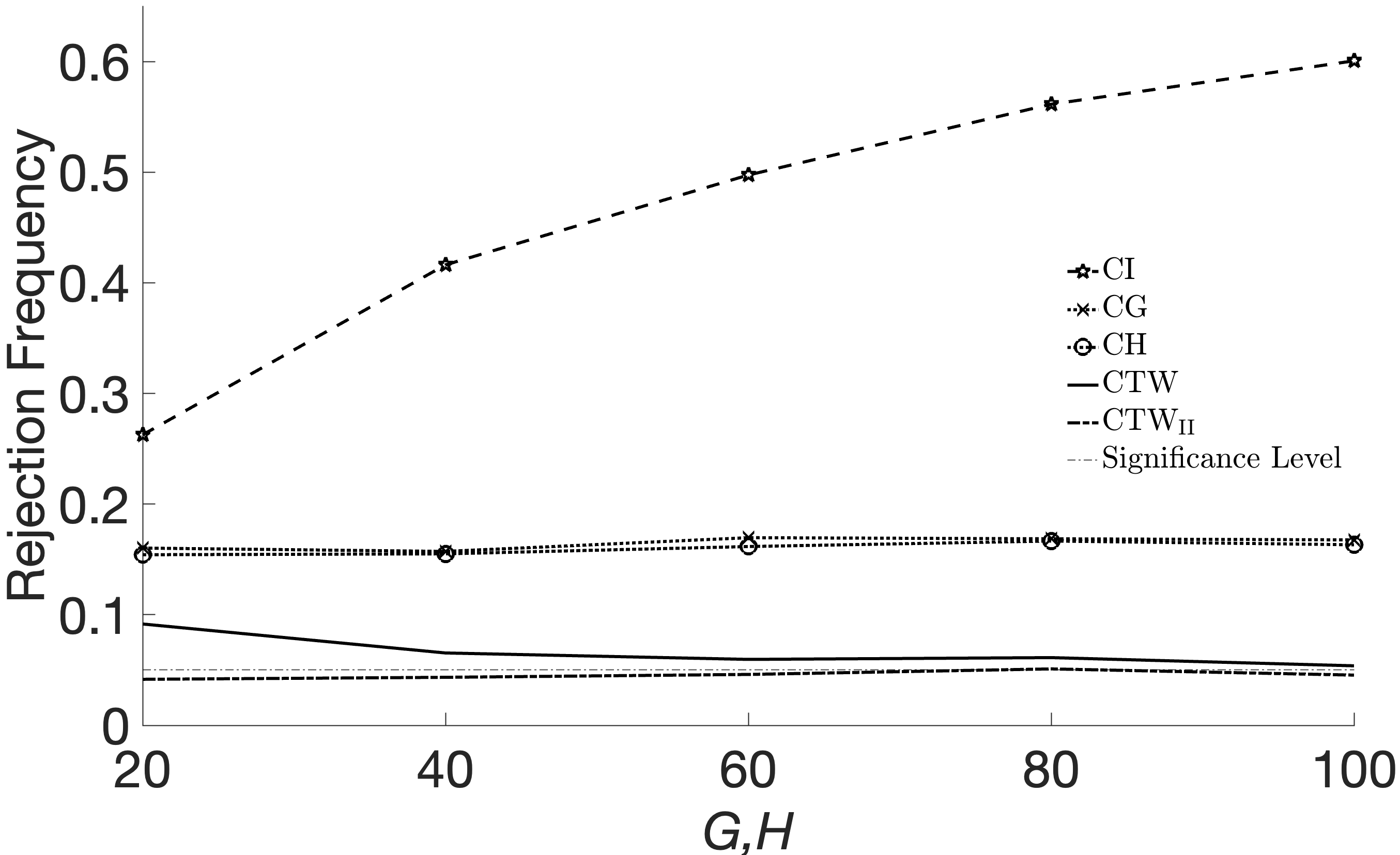}
\end{subfigure} \begin{subfigure}[t]{0.49\hsize} \subcaption{One-way Clustering along the first ($G$) dimension, $\omega_V^X=\omega_V^e=0$} \includegraphics[width=1\textwidth]{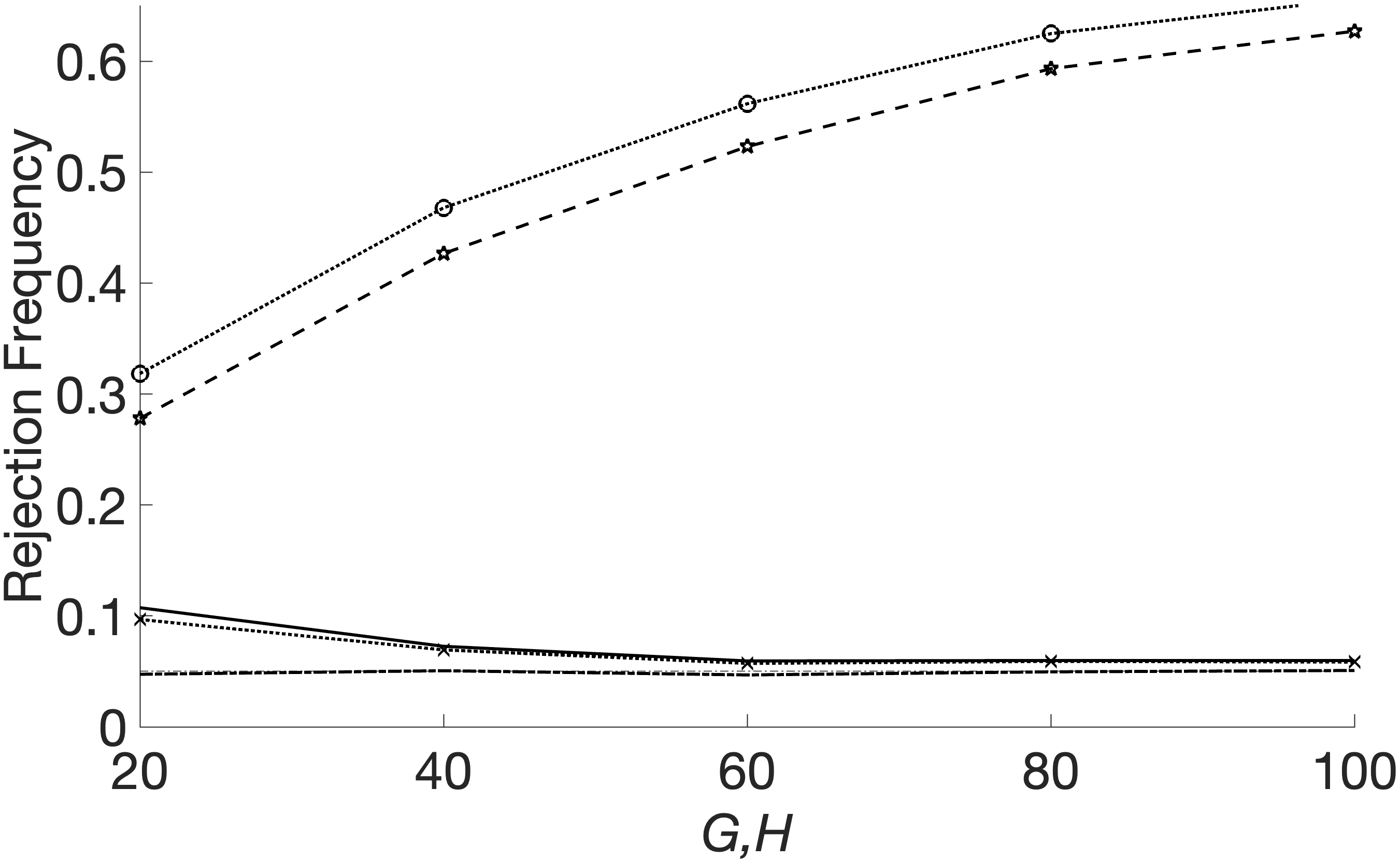}
\end{subfigure}

\begin{subfigure}[t]{0.49\hsize} \subcaption{ Independence, $\omega_U^X=\omega_U^e=\omega_V^X=\omega_V^e=0$}\includegraphics[width=1\textwidth]{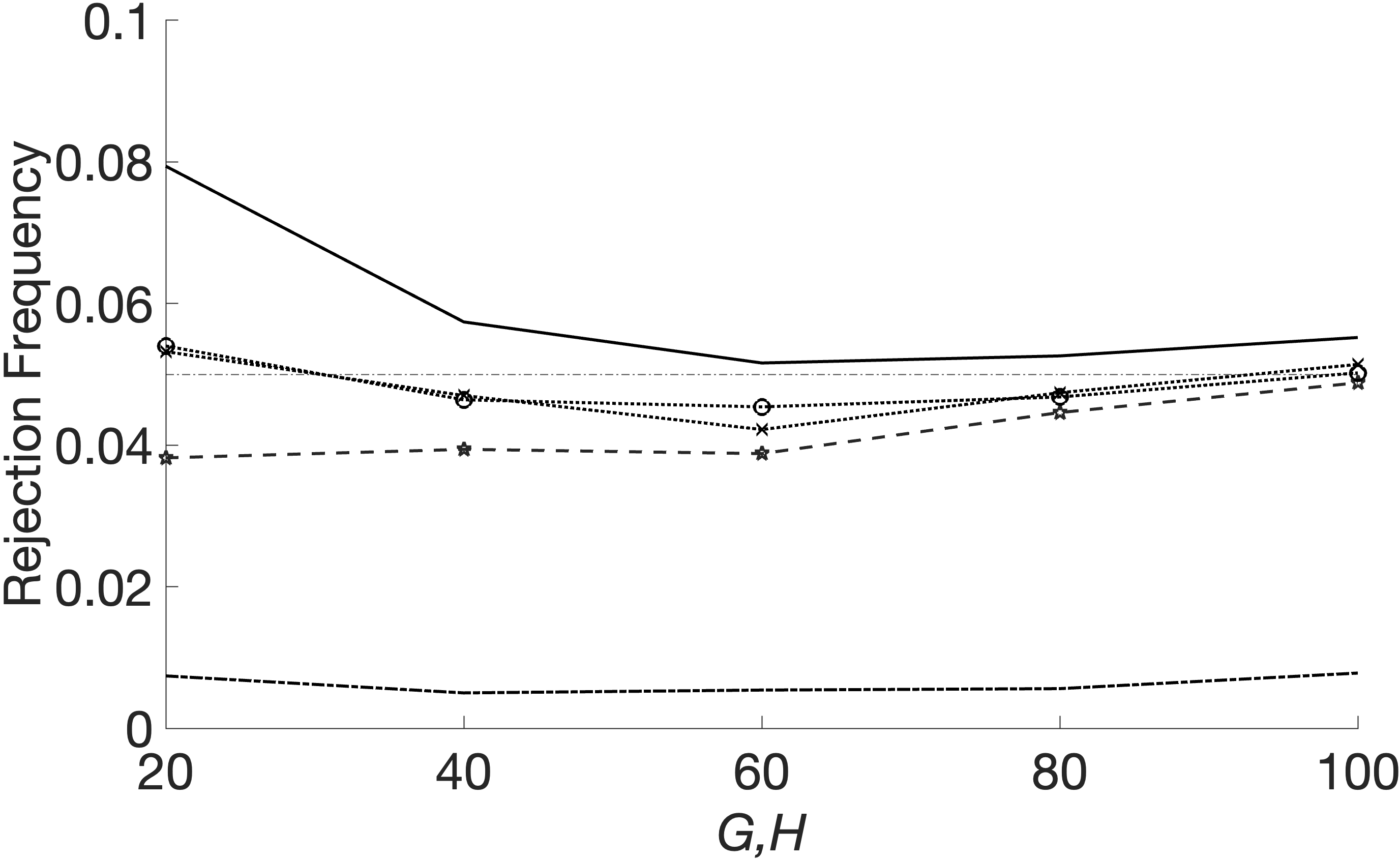}
\end{subfigure} \begin{subfigure}[t]{0.49\hsize} \subcaption{Varying Clustering Dependence along the second ($H$) dimension $\omega_V^X$ and $\omega_V^e$} \includegraphics[width=1\textwidth]{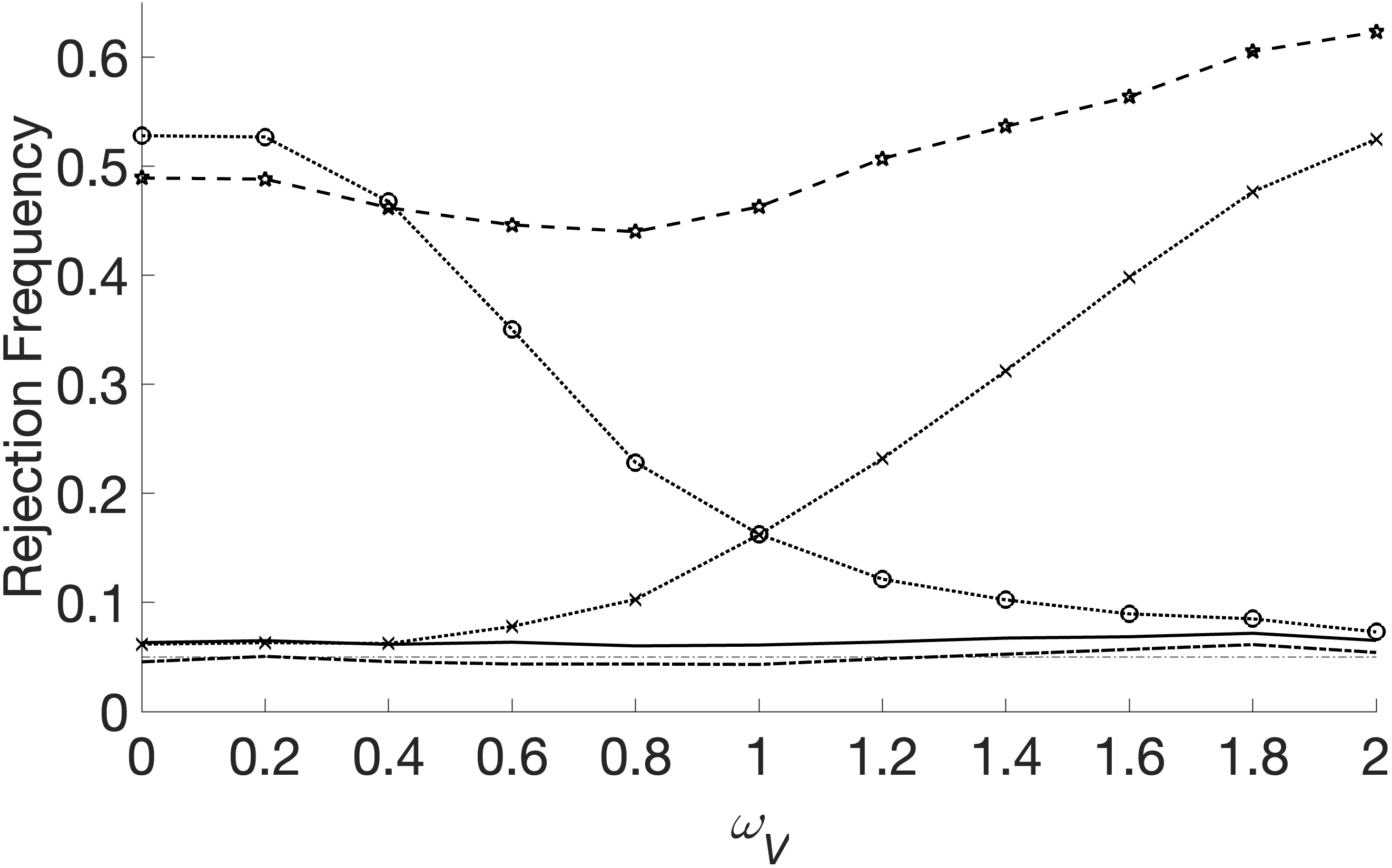}
\end{subfigure}

\caption{\textbf{Rejection frequency under varying levels of clustering dependence}.
The default setting is $\tau=0.50$, $d=10$, $G=H=50$, and $\omega_{\bullet}^{X}=\omega_{\bullet}^{e}=1$. Results are based on 10,000 Monte Carlo replicates.
The predetermined significance level is 5\%.}
\label{fig: rej frequency 1} 
\end{figure}

\end{landscape} }

Figure~\ref{fig: rej frequency 1} reports rejection frequencies under varying clustering structures. 
In Panel~(a), the data exhibit two-way clustering. The two-way CRVEs, CTW and CTW$_{\mathrm{II}}$, deliver stable and accurate size control as $G$ and $H$ increase, whereas the one-way CRVEs, CG and CH, substantially overreject, with rejection frequencies around $0.15$. Ignoring clustering altogether leads to the worst performance: CI overrejects increasingly as $G$ and $H$ grow. Between the two two-way procedures, CTW$_{\mathrm{II}}$ yields slightly lower rejection frequencies because it does not correct for the double-counting term, which inflates the estimated variance and therefore makes rejection harder.

Panel~(b) considers one-way clustering along the first ($G$) dimension only. In this case, CG, CTW, and CTW$_{\mathrm{II}}$ perform well, as each accounts for dependence in the $G$ dimension.

Panel~(c) considers the cluster-independent design. For readability, we rescale the vertical axis because all methods yield rejection frequencies below $0.10$. Here, all procedures except CTW$_{\mathrm{II}}$ provide satisfactory size control. This indicates that, while CTW$_{\mathrm{II}}$ works well under dependence, the resulting variance inflation renders it invalid (overly conservative) when clustering is absent.

Panel~(d) varies the strength of clustering dependence in the second dimension. When dependence in the second dimension is weak (small $\omega^X_V,\omega_V^e$), accounting for dependence in the first dimension is more important, and CG performs well. As dependence in the second dimension strengthens (large $\omega^X_V,\omega_V^e$), CH becomes more appropriate. In both settings, CI fails, whereas both CTW and CTW$_{\mathrm{II}}$ remain reliable across the full range of dependence strengths.

Figure~\ref{fig: rej frequency 2}, Panel~(a), further reports results for an unbalanced design in which we fix $G=50$ and vary $H$ from $20$ to $100$. We find that CH performs slightly better than CG when $H$ is small, whereas CG performs better when $H$ is large. The intuition is that when $H$ is small, each $h$-cluster contains a larger number of observations (i.e., a larger cluster size along the second dimension), so a substantial portion of the dependence is concentrated within the $H$ dimension and must be controlled; consequently, CH is more appropriate. As $H$ increases, clusters along the second dimension become smaller and less dominant, making it relatively more important to account for dependence along the first dimension, so CG improves. Panel~(b) varies the number of regressors, $d$. The qualitative patterns remain essentially unchanged, indicating that the results are not sensitive to the dimension of the covariate vector.

These patterns highlight that accounting for \emph{both} clustering
dimensions is essential in two-way array settings. Procedures that ignore any one dimension systematically
under-estimate sampling variability and over-reject. CI performs worst because it effectively treats observations as
independent across $(g,h)$ and therefore misses the dominant row/column
correlation. The one-way cluster methods (CG and CH) partially correct the problem
by capturing dependence in a single direction, which explains why
they perform better than CI, but they remain
misspecified because the neglected dimension contributes non-negligibly
to the score covariance. By construction, CTW targets the full two-way
covariance structure, which yields stable size and
a clear improvement toward the nominal level as $G$ and $H$ increase. CTW$_\mathrm{II}$ is robust to two-way clustering dependence as well, but is overly conservative when clustering is absent.

Overall, the evidence points to CTW as the preferred procedure because of its robustness across a wide range of settings. The additional results reported in the Internet Appendix IA, including those for $\tau=0.25$, designs in which $\beta_d(\tau)$ varies with $\tau$, and heteroskedastic specifications, display a similar pattern and further support the use of CTW in practice.

\begin{figure}[t!]
\centering \begin{subfigure}[t]{0.49\hsize} \subcaption{$G=50$, Varying $H$}\includegraphics[width=1\textwidth]{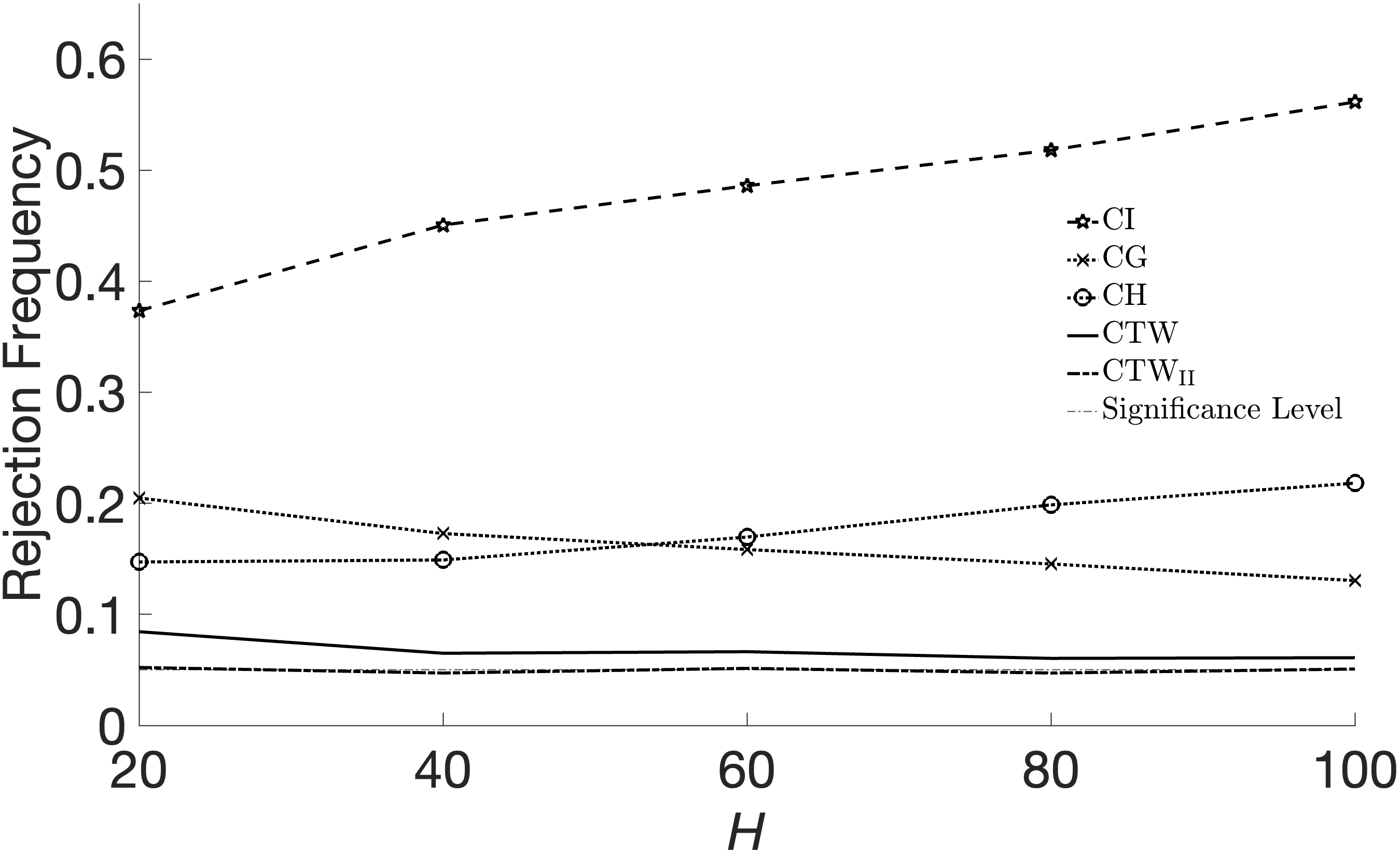}
\end{subfigure} \begin{subfigure}[t]{0.49\hsize} \subcaption{Varying $d$} \includegraphics[width=1\textwidth]{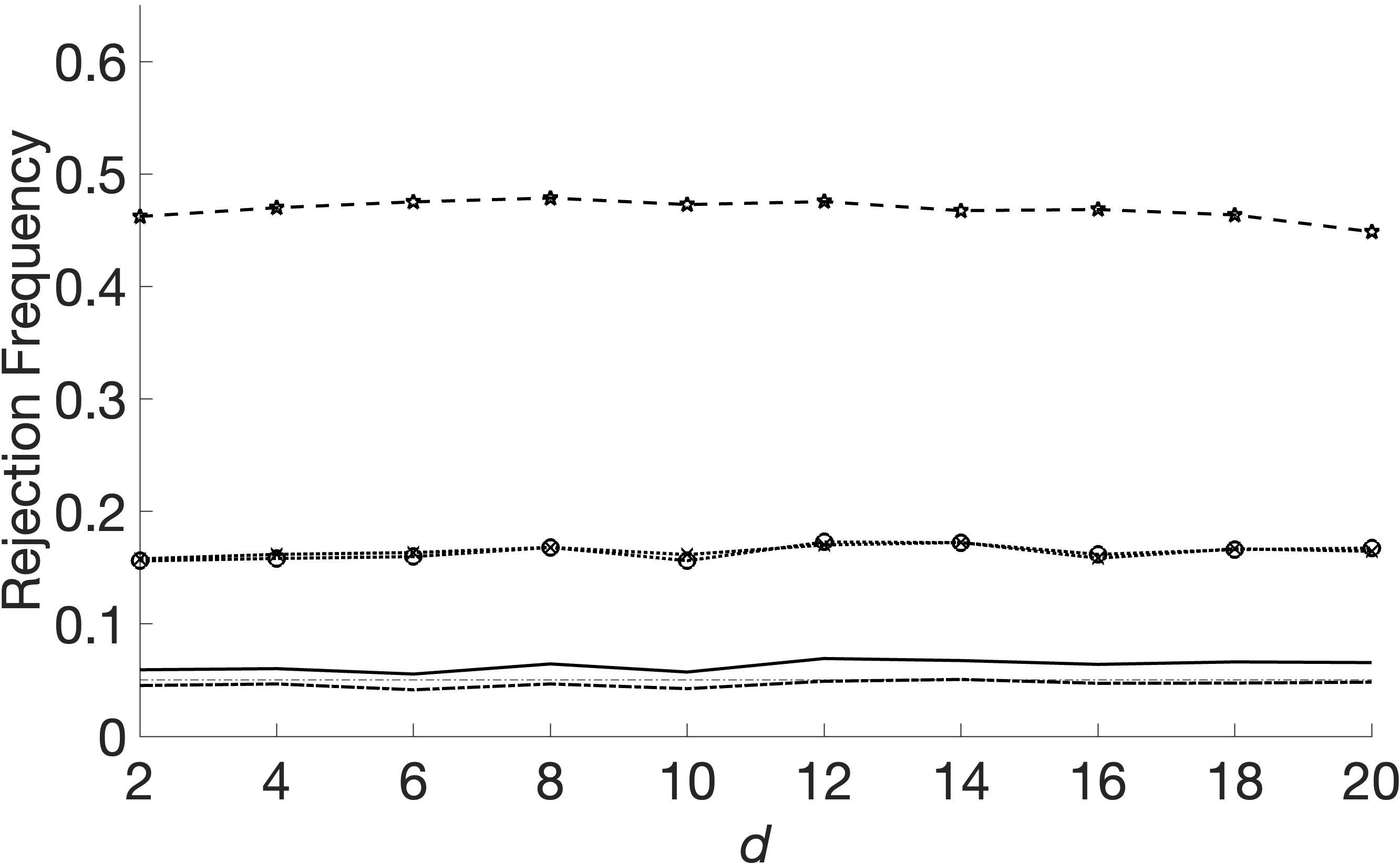}
\end{subfigure}

\caption{\textbf{Rejection frequency under different structures}.
The default setting is $\tau=0.50$, $d=10$, $G=H=50$, and $\omega_{\bullet}^{X}=\omega_{\bullet}^{e}=1$. Results are based on 10,000 Monte Carlo replicates.
The predetermined significance level is 5\%.}
\label{fig: rej frequency 2} 
\end{figure}

\section{Empirical Studies}
\label{sec:empirical}
This section uses a QR framework to study how
teacher-licensing restrictions affect teacher quality. Policy views on
licensing are mixed. Some states have increased licensing stringency,
motivated by the idea that tighter requirements can screen out
lower-ability candidates and raise the left tail of the quality
distribution (e.g., \citealt{kraft2020teacher}). Other states decreased licensing stringency, a policy choice that speaks
directly to our focus on the right tail.
 One argument for reducing stringency is that it may
attract more competitive candidates who would otherwise choose other
professions (e.g., \citealt{hanushek1995chooses,ballou1998case}).
By contrast, other work suggests that licensing requirements may have
little effect on high-quality candidates (e.g.,
\citealt{angrist2004teacher,larsen2020effect}).

Let $s$ index states and $t$ index years. For each state--year cell, let
$y_{st}$ denote the 90th percentile of college SAT scores among teachers in that cell,
which we interpret as a measure of the right-tail (high-quality)
teacher workforce. We consider the QR model
\begin{equation}\label{eq:qr_top_tail}
Q_{y_{st}\mid X_{st},W_{st}}(\tau)
=
\alpha(\tau)
+ X_{st}\beta(\tau)
+ W_{st}^{\top}\gamma(\tau),
\qquad \tau\in(0,1),
\end{equation}
where $X_{st}$ is a measure of licensing stringency and $W_{st}$ collects
controls, including school characteristics, teacher-market conditions,
non-teacher labor-market conditions, education-policy controls, and
political conditions. The parameter of interest is $\beta(\tau)$: a
negative value, $\beta(\tau)<0$, indicates that greater stringency is
associated with a lower right-tail outcome at quantile $\tau$. 

We consider a range of quantiles $\tau\in\{0.10,\ldots,0.90\}$ to allow the effect of licensing stringency to vary across the conditional distribution of this right-tail outcome. Smaller values of $\tau$ correspond to state-year cells in the lower part of the conditional distribution of right-tail teacher quality. In these state-year cells, teachers who are near the top of the quality distribution within that particular state and year may still be relatively less competitive in the broader market and therefore closer to the margin of entering or remaining in teaching. As a result, tighter licensing requirements may have a stronger effect at lower quantiles. By contrast, larger values of $\tau$ correspond to state-year cells in the upper part of the conditional distribution, where right-tail teacher quality is stronger and teachers in that upper tail may be less marginal and more competitive, so the effect of additional licensing barriers may be weaker. We use the publicly available data from \citet{larsen2020effect}, a state-level panel covering 51 state units over 17 years. Based on OLS estimates, \citet{larsen2020effect} report that licensing stringency does not, on average, significantly affect high-quality teacher candidates.

\begin{table}[t]
\centering
\caption{Effects of licensing stringency and $p$-values under different CRVEs.}
\label{tab:qr_pvalues}
\begin{tabular}{lccccccccc}
\hline\hline
$\tau$ & $0.10$ & $0.20$ & $0.30$ & $0.40$ & $0.50$ & $0.60$ & $0.70$ & $0.80$ & $0.90$ \\
\hline
$\hat\beta(\tau)$ & -0.0998 & -0.0870 & -0.0668 & -0.0295 & -0.0277 & -0.0107 & 0.0033 & 0.0143 & 0.0164 \\
CI               & 0.0001  & 0.0014  & 0.0138  & 0.2467  & 0.2725  & 0.7407  & 0.9657 & 0.8561 & 0.9641 \\
CG               & 0.0000  & 0.0000  & 0.0171  & 0.2747  & 0.3864  & 0.7723  & 0.9735 & 0.8919 & 0.9724 \\
CH               & 0.0035  & 0.0125  & 0.0847  & 0.4493  & 0.4961  & 0.8447  & 0.9790 & 0.9095 & 0.9769 \\
CTW              & $\bm{0.0004}$ & $\bm{0.0039}$ & ${0.0898}$ & 0.4610 & 0.5399 & 0.8524 & 0.9812 & 0.9207 & 0.9797 \\
CTW$_{\mathrm{II}}$ & 0.0092  & 0.0320  & 0.1624  & 0.5340  & 0.5925  & 0.8711  & 0.9835 & 0.9305 & 0.9823 \\
\hline\hline
\end{tabular}\label{tab:empirical}
\end{table}

Table~\ref{tab:empirical} reports $\widehat{\beta}(\tau)$ for a grid of
quantiles together with $p$-values computed under several CRVE choices. The main evidence of a right-tail effect arises at low $\tau$. At $\tau=0.10$,
$\widehat{\beta}(0.10)=-0.0998$, and the CTW $p$-value is $0.0004$,
indicating a statistically significant negative association at the 1\%
level. At $\tau=0.20$, $\widehat{\beta}(0.20)=-0.0870$ with a CTW
$p$-value of $0.0039$, again significant at conventional levels. At
$\tau=0.30$, the point estimate remains negative
($\widehat{\beta}(0.30)=-0.0668$), but inference becomes sensitive to the
variance estimator: CI and CG reject at 5\%, whereas CH and CTW are
borderline (around the 10\% level) and CTW$_{\mathrm{II}}$ is more
conservative. For quantiles $\tau\in\{0.40,\ldots,0.90\}$, the estimates
are close to zero and none of the CRVEs yield statistically significant
effects.

Overall, emphasizing the two-way robust CTW inference, the results suggest that licensing stringency may not affect the right tail on average, consistent with \citet{larsen2020effect}, but the effect is heterogeneous across quantiles. In particular, the negative association is concentrated in the lower part of the conditional distribution of $y_{st}$ (roughly $\tau \le 0.20$). One possible interpretation is that, in these markets, a margin of high-quality candidates is more sensitive to licensing costs and therefore more likely to select into alternative occupations. For higher quantiles, we find little evidence that stringency discourages right-tail teacher quality at the 5\% significance level.

\section{Conclusion}

\label{sec:conclusion}

This paper develops a unified large-sample theory and practical inference
procedures for linear quantile regression under two-way clustering.
The key challenge is that both the non-smooth quantile score and the
two-way dependence invalidate standard arguments, and, moreover, the
effective convergence rate of the quantile regression estimator can
vary across dependence regimes. To address these issues, we work within
a separately exchangeable array framework and employ a projection-based
decomposition that isolates row, column, interaction, and idiosyncratic
components. This structure yields an asymptotic distribution theory that adapts to regime-dependent
normalizations.

Building on the limit theory, we propose a feasible two-way cluster-robust
sandwich covariance estimator. We show that both the ``bread'' component
(a kernel estimator of the conditional density at the target quantile)
and the ``meat'' component (an estimator of the covariance of the
sample score that aggregates row and column contributions) are consistent
under appropriate smoothness and moment conditions. The resulting
 procedure is asymptotically valid in the Gaussian regimes,
with a proof that explicitly tracks how regime-dependent rates and
two-way dependence alter the relative magnitude of leading terms and
remainder terms.

Moreover, we clarify the intrinsic limits of uniform inference under
two-way clustering. When the interaction component remains asymptotically
non-negligible while clustering variation along both dimensions is
bounded, the limiting distribution can be non-Gaussian, and uniform
consistency over the full model class may be unattainable without
additional restrictions. 

The simulation results further demonstrate the necessity of using a two-way cluster-robust variance estimator when two-way clustering is present. They also highlight the robustness of the two-way procedure across a range of dependence structures: it remains valid under varying levels of clustering dependence in two dimensions, and even in the absence of within-cluster dependence. In an empirical application, we find that the effect of teacher-licensing stringency on teacher quality is heterogeneous across the distribution. Specifically, tighter licensing requirements are negatively associated with the lower conditional quantiles of the right-tail teacher-quality measure. In contrast, we find little evidence that licensing stringency discourages high-quality teachers at higher quantiles.

 Overall, the paper closes a theoretical gap for quantile
regression with two-way clustered data and offers easy-to-implement inference tools that are directly
applicable in empirical settings where multi-dimensional clustering
is unavoidable. 

\clearpage{}

\appendix

\section{Proof of Theorem \ref{thm:1}}
\begin{proof}
We suppress $\tau$ to save space. To simplify notation, we first present the argument for $d=1$. The extension to fixed $d>1$ follows by applying the joint CLT, the Cramér--Wold device, together with the maintained assumption that the diagonal elements of $\sigma_{j,\Gamma}^2$ are of the same order, which ensures that all coordinates have comparable scaling and that the eigenvalues of the covariance matrices are of the same order.

Let $\mu_{GH}=\left(\frac{H\sigma_{\text{I},\Gamma}^{2}}{1+H\sigma_{\text{I},\Gamma}^{2}},\frac{G\sigma_{\text{II},\Gamma}^{2}}{1+G\sigma_{\text{II},\Gamma}^{2}},\frac{\sigma_{\text{III},\Gamma}^{2}}{1+\sigma_{\text{III},\Gamma}^{2}},\frac{\sigma_{\text{IV},\Gamma}^{2}}{1+\sigma_{\text{IV},\Gamma}^{2}}\right)$,
where the subscript $G$ and $H$ represents the dependence on the
function which can vary with $G$ and $H$, and we allow $G$ and
$H$ to grow to infinity. Observe that $\mu_{GH}\in\left[0,1\right]^{4}$,
and hence by Bolzano-Weierstrass theorem, there exists a convergent
subsequence, which implies that $\left(H\sigma_{\text{I},\Gamma}^{2},G\sigma_{\text{II},\Gamma}^{2},\sigma_{\text{III},\Gamma}^{2},\sigma_{\text{IV},\Gamma}^{2}\right)$
admits a subsequence converging in the extended reals $\left[0,\infty\right]^{4}$.
For notation simplicity, we keep writing $GH$ in place of the selected
subsequence, and hereafter.

Define the sample score
\[
\mathbb{S}(\beta)
=
\frac{1}{GH}\sum_{g=1}^{G}\sum_{h=1}^{H}
\psi_{gh}(\beta)
=
\frac{1}{GH}\sum_{g=1}^{G}\sum_{h=1}^{H}
X_{gh}\Bigl(\tau-\mathbf{1}\{y_{gh}\le X_{gh}^{\top}\beta\}\Bigr),
\]
and let
\(
\mathcal{S}(\beta)=E[\mathbb{S}(\beta)]
=
E\left[
X_{gh}\Bigl(\tau-F_{y\vert X}(X_{gh}^{\top}\beta\vert X_{gh})\Bigr)
\right].
\)
We first obtain the local rate of \(\widehat\beta\). For fixed \(t\), set
\(\beta=\beta_{0}+r_{GH}^{-1/2}t\). By Knight's identity,
\[
\rho_{\tau}(e_{gh}-r_{GH}^{-1/2}X_{gh}^{\top}t)-\rho_{\tau}(e_{gh})
=
-r_{GH}^{-1/2}t^{\top}\Psi_{gh}
+
\int_{0}^{r_{GH}^{-1/2}X_{gh}^{\top}t}
\Bigl(\mathbf{1}\{e_{gh}\le s\}-\mathbf{1}\{e_{gh}\le0\}\Bigr)\,ds .
\]
Using the smoothness of the conditional density at zero and the
stochastic-equicontinuity argument in Lemma
\ref{lemma: stochastic equi main normal}, uniformly over compact sets
of \(t\),
\[
H_{GH}(t)
=r_{GH}
\left[
Q_{GH}(\beta_{0}+r_{GH}^{-1/2}t)-Q_{GH}(\beta_{0})
\right]
=
-t^{\top}r_{GH}^{1/2}\mathbb{S}(\beta_{0})
+\frac{1}{2}t^{\top}D(\tau)t
+o_{P}(1),
\]
where
\(
Q_{GH}(\beta)=\frac{1}{GH}\sum_{g=1}^{G}\sum_{h=1}^{H}
\rho_{\tau}(y_{gh}-X_{gh}^{\top}\beta).
\)
Since \(r_{GH}^{1/2}\mathbb{S}(\beta_{0})=O_{P}(1)\) and
\(D(\tau)\) is positive definite, the local objective is eventually
strictly positive on large spheres. To see this, 
for \(\|t\|=M\),
\[
H_{GH}(t)
\ge
-M\|r_{GH}^{1/2}\mathbb{S}(\beta_{0})\|
+
\frac{1}{2}\lambda_{\min}(D(\tau))M^{2}
+
o_{P}(1).
\]
Because \(r_{GH}^{1/2}\mathbb{S}(\beta_{0})=O_{P}(1)\), The first term is \(O_{P}(M)\), whereas the second term is positive and
of order \(M^{2}\). Therefore, by choosing \(M\) sufficiently large,
\(
\inf_{\|t\|=M}H_{GH}(t)>0
\)
with probability arbitrarily close to one. Since \(H_{GH}(0)=0\) and
\(H_{GH}\) is convex, a minimizer cannot lie outside the ball
\(\{t:\|t\|\le M\}\): otherwise the line segment from \(0\) to the
minimizer would cross the boundary at some point \(t_M\) with
\(\|t_M\|=M\), and convexity would imply
\(
H_{GH}(t_M)\le H_{GH}(0)=0,
\)
contradicting the positivity of \(H_{GH}\) on the boundary. Thus
\(
r_{GH}^{1/2}(\widehat\beta-\beta_{0})=O_{P}(1).
\)

We now derive the Bahadur expansion. By Lemma \ref{lem:approx_score},
\(
r_{GH}^{1/2}\mathbb{S}(\widehat\beta)=o_{P}(1).
\)
Moreover, a Taylor expansion of \(\mathcal{S}(\beta)\) around
\(\beta_{0}\) with \(r_{GH}^{1/2}(\widehat\beta-\beta_{0})=O_{P}(1)\) gives
\(
\mathcal{S}(\widehat\beta)
=
-D(\tau)(\widehat\beta-\beta_{0})
+
o_{P}\!\left(r_{GH}^{-1/2}\right).
\)
 Therefore, combining the previous results
with
\(
\nu_{S}(\beta)
=
r_{GH}^{1/2}\{\mathcal{S}(\beta)-\mathbb{S}(\beta)\},
\)
we have
\[
o_{P}(1)
=
r_{GH}^{1/2}\mathbb{S}(\widehat\beta)
=
-D(\tau)r_{GH}^{1/2}(\widehat\beta-\beta_{0})
-\nu_{S}(\widehat\beta)
+o_{P}(1).
\]
Since \(\mathcal{S}(\beta_{0})=0\),
\(
r_{GH}^{1/2}\mathbb{S}(\beta_{0})=-\nu_{S}(\beta_{0}).
\)
By Lemma \ref{lemma: stochastic equi main normal},
\(
\nu_{S}(\widehat\beta)-\nu_{S}(\beta_{0})=o_{P}(1).
\)
Consequently,
\[
r_{GH}^{1/2}(\widehat\beta-\beta_{0})
=
D(\tau)^{-1}r_{GH}^{1/2}\mathbb{S}(\beta_{0})
+o_{P}(1).
\]

We now establish the Asymptotic normality of $r_{GH}^{1/2}\mathbb{S}(\beta_{0})$.
Recall $\mathbb{S}(\beta_{0})=\frac{1}{GH}\sum_{g=1}^{G}\sum_{h=1}^{H}\Psi_{gh}$,
with the Hoeffding-type decomposition $\Psi_{gh}=\Psi_{g}^{(\mathrm{I})}+\Psi_{h}^{(\mathrm{II})}+\Psi_{gh}^{(\mathrm{III})}+\Psi_{gh}^{(\mathrm{IV})},$
we can write 
\begin{align*}
r_{GH}^{1/2}\mathbb{S}(\beta_{0}) & =\sqrt{\frac{r_{GH}}{G}}\,\sigma_{\mathrm{I},\Gamma}\,S_{G}^{(\mathrm{I})}+\sqrt{\frac{r_{GH}}{H}}\,\sigma_{\mathrm{II},\Gamma}\,S_{H}^{(\mathrm{II})}+\sqrt{\frac{r_{GH}}{GH}}\,\sigma_{\mathrm{III},\Gamma}\,S_{GH}^{(\mathrm{III})}+\sqrt{\frac{r_{GH}}{GH}}\,\sigma_{\mathrm{IV},\Gamma}\,S_{GH}^{(\mathrm{IV})},
\end{align*}
where 
\[
S_{G}^{(\mathrm{I})}:=\frac{1}{\sqrt{G}}\sum_{g=1}^{G}\sigma_{\mathrm{I},\Gamma}^{-1}\Psi_{g}^{(\mathrm{I})},\quad S_{H}^{(\mathrm{II})}:=\frac{1}{\sqrt{H}}\sum_{h=1}^{H}\sigma_{\mathrm{II},\Gamma}^{-1}\Psi_{h}^{(\mathrm{II})},
\]
\[
S_{GH}^{(\mathrm{III})}:=\frac{1}{\sqrt{GH}}\sum_{g=1}^{G}\sum_{h=1}^{H}\sigma_{\mathrm{III},\Gamma}^{-1}\Psi_{gh}^{(\mathrm{III})},\quad S_{GH}^{(\mathrm{IV})}:=\frac{1}{\sqrt{GH}}\sum_{g=1}^{G}\sum_{h=1}^{H}\sigma_{\mathrm{IV},\Gamma}^{-1}\Psi_{gh}^{(\mathrm{IV})}.
\]

\medskip{}
\emph{Case 1: $H\sigma_{\mathrm{I},\Gamma}^{2}+G\sigma_{\mathrm{II},\Gamma}^{2}\to\infty$.}
Assume without loss of generality $H\sigma_{\mathrm{I},\Gamma}^{2}\ge G\sigma_{\mathrm{II},\Gamma}^{2}$,
so that $r_{GH}=G/\sigma_{\mathrm{I},\Gamma}^{2}$ and $\sqrt{r_{GH}/G}\,\sigma_{\mathrm{I},\Gamma}=1$.
Moreover, 
\[
\sqrt{\frac{r_{GH}}{H}}\,\sigma_{\mathrm{II},\Gamma}=\sqrt{\frac{G}{H}}\frac{\sigma_{\mathrm{II},\Gamma}}{\sigma_{\mathrm{I},\Gamma}}\to \sqrt{\lambda}, \qquad \lambda := \lim \frac{G\sigma_{\mathrm{II},\Gamma}^{2}}{H\sigma_{\mathrm{I},\Gamma}^{2}} \in [0,1],\qquad\sqrt{\frac{r_{GH}}{GH}}\to0.
\]

Since $\{\sigma_{\mathrm{I},\Gamma}^{-1}\Psi_{g}^{(\mathrm{I})}\}_{g\le G}$
are i.i.d., a Lyapunov CLT gives $S_{G}^{(\mathrm{I})}\overset{d}{\to}\mathcal{N}(0,1)$,
and similarly $S_{H}^{(\mathrm{II})}\overset{d}{\to}\mathcal{N}(0,1)$.
Because $S_{GH}^{(\mathrm{III})}=O_{P}(1)$ and $S_{GH}^{(\mathrm{IV})}=O_{P}(1)$
(see Case 2) and $\sqrt{r_{GH}/(GH)}=o(1)$, the last two terms are
$o_{P}(1)$. Moreover, provided that $S_{G}^{\mathrm{I}}$ and $S_{H}^{\mathrm{II}}$
are independent, the joint CLT yields that 
\[
r_{GH}^{1/2}\mathbb{S}(\beta_{0})\overset{d}{\to}\mathcal{N}\!\left(0,1+\lambda\right).
\]
Furthermore, with $\Omega_{GH}=\frac{1}{GH}\bigl(H\sigma_{\mathrm{I},\Gamma}^{2}+G\sigma_{\mathrm{II},\Gamma}^{2}+\sigma_{\mathrm{III},\Gamma}^{2}+\sigma_{\mathrm{IV},\Gamma}^{2}\bigr)$,
\[
r_{GH}\Omega_{GH}=\frac{G}{\sigma_{\mathrm{I},\Gamma}^{2}}\cdot\frac{1}{GH}\Bigl(H\sigma_{\mathrm{I},\Gamma}^{2}+G\sigma_{\mathrm{II},\Gamma}^{2}+\sigma_{\mathrm{III},\Gamma}^{2}+\sigma_{\mathrm{IV},\Gamma}^{2}\Bigr)=1+\lambda+o(1),
\]
so Slutsky's lemma yields $\Omega_{GH}^{-1/2}\mathbb{S}(\beta_{0})\overset{d}{\to}\mathcal{N}(0,1)$.

\medskip{}
\emph{Case 2: $r_{GH}\asymp GH$ (equivalently, $H\sigma_{\mathrm{I},\Gamma}^{2}+G\sigma_{\mathrm{II},\Gamma}^{2}=O(1)$
and $\sigma_{\mathrm{III},\Gamma}^{2}=o(1)$).} For simplicity, we use $r_{GH}= GH$. Using $E(\Psi_{gh}^{(\mathrm{IV})}\vert U_{g})=E(\Psi_{gh}^{(\mathrm{IV})}\vert V_{h})=0$,
we have for $(g,h)\neq(g',h')$ that $E(\Psi_{gh}^{(\mathrm{IV})}\Psi_{g'h'}^{(\mathrm{IV})})=0$,
hence 
\[
Var\!\bigl(S_{GH}^{(\mathrm{IV})}\bigr)=\frac{1}{GH}\sum_{g=1}^{G}\sum_{h=1}^{H}E\!\left(\sigma_{\mathrm{IV},\Gamma}^{-1}\Psi_{gh}^{(\mathrm{IV})}\Psi_{gh}^{(\mathrm{IV})}\sigma_{\mathrm{IV},\Gamma}^{-1}\right)<\infty,\qquad S_{GH}^{(\mathrm{IV})}=O_{P}(1).
\]
Let $\mathcal{F}_{GH}:=\sigma(\{U_{g}\}_{g\le G},\{V_{h}\}_{h\le H})$.
Then $E(\Psi_{gh}^{(\mathrm{IV})}\vert\mathcal{F}_{GH})=0$ and, conditional
on $\mathcal{F}_{GH}$, $\{\sigma_{\mathrm{IV},\Gamma}^{-1}\Psi_{gh}^{(\mathrm{IV})}\}_{g,h}$
are independent. Define 
\[
V_{GH}^{(\mathrm{IV})}:=Var\!\bigl(S_{GH}^{(\mathrm{IV})}\vert\mathcal{F}_{GH}\bigr)=\frac{1}{GH}\sum_{g=1}^{G}\sum_{h=1}^{H}\sigma_{\mathrm{IV},\Gamma}^{-1}E\!\left(\Psi_{gh}^{(\mathrm{IV})}\Psi_{gh}^{(\mathrm{IV})}\vert U_{g},V_{h}\right)\sigma_{\mathrm{IV},\Gamma}^{-1}.
\]
A conditional Lyapunov CLT yields $\bigl(V_{GH}^{(\mathrm{IV})}\bigr)^{-1/2}S_{GH}^{(\mathrm{IV})}\vert\mathcal{F}_{GH}\overset{d}{\to}\mathcal{N}(0,1)$,
and a LLN with the law of total expectation implies $V_{GH}^{(\mathrm{IV})}=1+o_{P}(1)$.

We have marginal CLT for different terms, and we now establish a \emph{joint}
CLT for $\bigl(S_{G}^{(\mathrm{I})},S_{H}^{(\mathrm{II})},S_{GH}^{(\mathrm{IV})}\bigr)$
via characteristic functions. Let 
\[
S_{G}^{(\mathrm{I})}:=\frac{1}{\sqrt{G}}\sum_{g=1}^{G}\sigma_{\mathrm{I},\Gamma}^{-1}\Psi_{g}^{(\mathrm{I})},\qquad S_{H}^{(\mathrm{II})}:=\frac{1}{\sqrt{H}}\sum_{h=1}^{H}\sigma_{\mathrm{II},\Gamma}^{-1}\Psi_{h}^{(\mathrm{II})},
\]
\[
S_{GH}^{(\mathrm{IV})}:=\frac{1}{\sqrt{GH}}\sum_{g=1}^{G}\sum_{h=1}^{H}\sigma_{\mathrm{IV},\Gamma}^{-1}\Psi_{gh}^{(\mathrm{IV})},\qquad\mathcal{F}_{GH}:=\sigma(\{U_{g}\}_{g\le G},\{V_{h}\}_{h\le H}).
\]
For $(u,v,w)\in\mathbb{R}^{3}$, define the characteristic function
$\phi_{GH}(u,v,w):=E\exp\!\Bigl(iuS_{G}^{(\mathrm{I})}+ivS_{H}^{(\mathrm{II})}+iwS_{GH}^{(\mathrm{IV})}\Bigr).$
By iterated expectations, 
\begin{align*}
\phi_{GH}(u,v,w) & =E\Bigl[\exp\!\bigl(iuS_{G}^{(\mathrm{I})}+ivS_{H}^{(\mathrm{II})}\bigr)\,E\!\left(\exp\!\bigl(iwS_{GH}^{(\mathrm{IV})}\bigr)\vert\mathcal{F}_{GH}\right)\Bigr].
\end{align*}
Recall that conditional on $\mathcal{F}_{GH}$, $\{\sigma_{\mathrm{IV},\Gamma}^{-1}\Psi_{gh}^{(\mathrm{IV})}\}_{g,h}$
are independent with mean zero and conditional variance $V_{GH}^{(\mathrm{IV})}:=Var\!\bigl(S_{GH}^{(\mathrm{IV})}\vert\mathcal{F}_{GH}\bigr)=1+o_{P}(1),$
so the conditional Lyapunov CLT gives, for each fixed $w$, 
\[
E\!\left(\exp\!\bigl(iwS_{GH}^{(\mathrm{IV})}\bigr)\vert\mathcal{F}_{GH}\right)\;\overset{P}{\to}\;\exp\!\left(-\tfrac{1}{2}w^{2}\right).
\]
Since $\bigl|\exp(iuS_{G}^{(\mathrm{I})}+ivS_{H}^{(\mathrm{II})})\bigr|\le1$,
dominated convergence yields 
\[
\phi_{GH}(u,v,w)\to\exp\!\left(-\tfrac{1}{2}w^{2}\right)\cdot\lim_{G,H\to\infty}E\exp\!\bigl(iuS_{G}^{(\mathrm{I})}+ivS_{H}^{(\mathrm{II})}\bigr).
\]
Finally, since $\{U_{g}\}$ and $\{V_{h}\}$ are independent and each
array is i.i.d., the (marginal) Lyapunov CLT implies $E\exp\!\bigl(iuS_{G}^{(\mathrm{I})}+ivS_{H}^{(\mathrm{II})}\bigr)=E\exp\!\bigl(iuS_{G}^{(\mathrm{I})}\bigr)\,E\exp\!\bigl(ivS_{H}^{(\mathrm{II})}\bigr)\to\exp\!\left(-\tfrac{1}{2}u^{2}-\tfrac{1}{2}v^{2}\right),$
and hence $\phi_{GH}(u,v,w)\to\exp\!\left(-\tfrac{1}{2}u^{2}-\tfrac{1}{2}v^{2}-\tfrac{1}{2}w^{2}\right).$
By Lévy's continuity theorem, 
\begin{align}
\bigl(S_{G}^{(\mathrm{I})},S_{H}^{(\mathrm{II})},S_{GH}^{(\mathrm{IV})}\bigr)\ \overset{d}{\to}\ \mathcal{N}(0,\mathbf{I}_{3}).\label{eq: joint clt}
\end{align}
Moreover, in such case we have $\sigma_{\mathrm{III},\Gamma}S_{GH}^{(\mathrm{III})}=o_{P}(1)$
and hence 
\[
\frac{\sqrt{r_{GH}}}{GH}\sum_{g=1}^{G}\sum_{h=1}^{H}\Psi_{gh}\overset{d}{\to}\mathcal{N}\!\left(0,\lim_{G,H\to\infty}\left(H\sigma_{\mathrm{I},\Gamma}^{2}+G\sigma_{\mathrm{II},\Gamma}^{2}+\sigma_{\mathrm{IV},\Gamma}^{2}\right)\right).
\]
Here, the limiting variance is positive definite by Assumption \ref{as:order of variance}.
Moreover, $\lim_{G,H\to\infty}\sigma_{\mathrm{IV},\Gamma}^{2}<\infty$
by Jensen's inequality and $E\!\left(\Psi_{gh}\Psi_{gh}^{\top}\right)\le E\!\left\Vert X_{gh}\right\Vert ^{4}<\infty.$
Finally, the application of Slutsky's lemma yields $\Omega_{GH}^{-1/2}\mathbb{S}(\beta_{0})\overset{d}{\to}\mathcal{N}(0,1)$.

Finally, because the above argument holds for any convergent subsequence,
the claimed uniformity follows from the convergent-subsequence argument
together with continuity of the limiting distribution in the parameter; see, e.g., \citet{davezies2021empirical} or Lemma IA.1 in
\citet{hounyo2025projection}.
\end{proof}

\section{Proof of Theorem \ref{thm:Jacobian}}
\begin{proof}
Fix an arbitrary deterministic matrix $B\in\mathbb{R}^{d\times d}$
and define the scalar weight $\mathfrak{X}_{gh}:=\mathrm{tr}(BX_{gh}X_{gh}^{\top})$.
Let 
\[
\widehat{D}(\beta):=\frac{1}{GH\,\ell}\sum_{g=1}^{G}\sum_{h=1}^{H}K\!\left(\frac{y_{gh}-X_{gh}^{\top}\beta}{\ell}\right)\mathfrak{X}_{gh},\qquad K(u)=\tfrac{1}{2}\mathbf{1}\{|u|\le1\}.
\]
Given that $B$ is arbitrary and by Cramer–Wold device, it suffices
to focus on 
\[
\left(r_{GH,D}\right)^{1/2}\left\{ \widehat{D}(\widehat{\beta})-E[\mathfrak{X}_{gh}f_{e|X}(0\vert X_{gh})]-\frac{\ell^{2}}{6}E[\mathfrak{X}_{gh}f_{e\vert X}^{\left(2\right)}(0\vert X_{gh})+o\left(\ell^{2}\right)\right\} 
\]
Write {\footnotesize{}
\begin{align*}
 & \widehat{D}(\widehat{\beta})-E[\mathfrak{X}_{gh}f_{e|X}(0\vert X_{gh})]-\frac{\ell^{2}}{6}E[\mathfrak{X}_{gh}f_{e\vert X}^{\left(2\right)}(0\vert X_{gh})]=\underbrace{\left(\widehat{D}(\widehat{\beta})-E[\widehat{D}(\beta)]\vert_{\beta=\widehat{\beta}}\right)-\left(\widehat{D}(\beta_{0})-E[\widehat{D}(\beta_{0})]\right)}_{(\mathrm{I})}\\
 & +\underbrace{\left(\widehat{D}(\beta_{0})-E[\widehat{D}(\beta_{0})]\right)}_{(\mathrm{II})}+\underbrace{\left(E[\widehat{D}(\beta)]\vert_{\beta=\widehat{\beta}}-E[\widehat{D}(\beta_{0})]\right)}_{(\mathrm{III})}+\underbrace{\big(E[\widehat{D}(\beta_{0})]-E[\mathfrak{X}_{gh}f_{e|X}(0\vert X_{gh})]\big)-\frac{\ell^{2}}{6}E[\mathfrak{X}_{gh}f_{e\vert X}^{\left(2\right)}(0\vert X_{gh})}_{(\mathrm{IV})}.
\end{align*}
}{\footnotesize\par}

\paragraph{Term I, stochastic term at $\widehat{\beta}$.}

Observe that $r_{GH}^{1/2}\left(\widehat{\beta}-\beta_{0}\right)=O_{P}\left(1\right)$.
Hence, applying Lemma \ref{lem: stochastic equicontinuity} yields
that $(\mathrm{I})=o_{P}(r_{GH}^{-1/2}\ell^{-1/2})$.

\paragraph{Term II, Consistency and CLT at $\beta_{0}$.}

\ 

\emph{Consistency:} Define $Z_{gh}:=\ell^{-1}K(e_{gh}/\ell)\mathfrak{X}_{gh}$
so that $\widehat{D}(\beta_{0})=(GH)^{-1}\sum_{g,h}Z_{gh}$. Under
two-way clustering, a convenient way to control $\mathrm{Var}(\widehat{D}(\beta_{0}))$
is via the two-way Hoeffding/ANOVA decomposition: write $Z_{gh}-EZ_{gh}=Z_{g\cdot}^{(\mathrm{I})}+Z_{\cdot h}^{(\mathrm{II})}+Z_{gh}^{(\mathrm{III})}$,
where 
\begin{align*}
Z_{g\cdot}^{(\mathrm{I})} & :=E[Z_{gh}\vert U_{g}]-EZ_{gh},\\
Z_{\cdot h}^{(\mathrm{II})} & :=E[Z_{gh}\vert V_{h}]-EZ_{gh},\\
Z_{gh}^{(\mathrm{III})} & :=E[Z_{gh}\vert U_{g},V_{h}]-E[Z_{gh}\vert U_{g}]-E[Z_{gh}\vert V_{h}]+EZ_{gh},\\
Z_{gh}^{(\mathrm{IV})} & :=Z_{gh}-E[Z_{gh}\vert U_{g},V_{h}].
\end{align*}
Then 
\[
\widehat{D}(\beta_{0})-E[\widehat{D}(\beta_{0})]=\frac{1}{G}\sum_{g=1}^{G}Z_{g\cdot}^{(\mathrm{I})}+\frac{1}{H}\sum_{h=1}^{H}Z_{\cdot h}^{(\mathrm{II})}+\frac{1}{GH}\sum_{g=1}^{G}\sum_{h=1}^{H}\left(Z_{gh}^{(\mathrm{III})}+Z_{gh}^{(\mathrm{IV})}\right).
\]
By orthogonality of these projections, we have 
\[
\mathrm{Var}\big(\widehat{D}(\beta_{0})\big)=\frac{1}{G}\mathrm{Var}(Z_{g\cdot}^{(\mathrm{I})})+\frac{1}{H}\mathrm{Var}(Z_{\cdot h}^{(\mathrm{II})})+\frac{1}{GH}\left(Var(Z_{gh}^{(\mathrm{III})})+Var(Z_{gh}^{(\mathrm{IV})})\right).
\]
By conditional Jensen, each second moment is bounded by $E[Z_{gh}^{2}]$
up to a constant. Since $K(u)=\tfrac{1}{2}\mathbf{1}\{|u|\le1\}$,
we have $Z_{gh}^{2}=\ell^{-2}\cdot\tfrac{1}{4}\,\mathbf{1}\{|e_{gh}|\le\ell\}\,\mathfrak{X}_{gh}^{2}$
and hence 
\[
E[Z_{gh}^{2}]=\frac{1}{4\ell^{2}}E\!\left[\mathbf{1}\{|e_{gh}|\le\ell\}\,\mathfrak{X}_{gh}^{2}\right]=\frac{1}{4\ell^{2}}E\left[\mathfrak{X}_{gh}^{2}\int_{-\ell}^{\ell}f_{e|X}(e\vert X_{gh})\,de\right]=\frac{1}{2\ell}\,E[\mathfrak{X}_{gh}^{2}f_{e|X}(0\vert X_{gh})]+o({\ell}^{-1}).
\]
Therefore 
\[
\mathrm{Var}\big(\widehat{D}(\beta_{0})\big)\lesssim\Big(\frac{1}{G}+\frac{1}{H}+\frac{1}{GH}\Big)\frac{1}{\ell}.
\]
The right-hand side  is $O(R^{-1}\ell^{-1})=o(1)$,
which implies by Chebyshev's inequality that $\widehat{D}(\beta_{0})-E[\widehat{D}(\beta_{0})]=o_{P}(1)$.

\emph{CLT result.} Now, we show the CLT result. By an analogous argument
as those for Term IV below, we have 
\begin{align*}
E[Z_{gh}\vert U_{g},V_{h}]= & E[\mathfrak{X}_{gh}f_{e|X,U,V}(0\vert X_{gh},U_{g},V_{h})\vert U_{g},V_{h}]\\
 & +\frac{\ell^{2}}{6}E[\mathfrak{X}_{gh}f_{e\vert X,U,V}^{\left(2\right)}(0\vert X_{gh},U_{g},V_{h})\vert U_{g},V_{h}]+o\left(\ell^{2}\right),\\
E[Z_{gh}^{2}\vert U_{g},V_{h}]= & \frac{1}{2\ell}E[\mathfrak{X}_{gh}^{2}f_{e|X,U,V}(0\vert X_{gh},U_{g},V_{h})\vert U_{g},V_{h}]\\
 & +\frac{\ell}{12}E[\mathfrak{X}_{gh}^{2}f_{e\vert X,U,V}^{\left(2\right)}(0\vert X_{gh},U_{g},V_{h})\vert U_{g},V_{h}]+o\left(\ell\right),
\end{align*}
Hence, $Var\left(E[Z_{gh}\vert U_{g},V_{h}]\right)=Var\left(E[\mathfrak{X}_{gh}f_{e|X,U,V}(0\vert X_{gh},U_{g},V_{h})\vert U_{g},V_{h}]\right)+o\left(1\right).$
Similarly, we have 
\begin{align*}
Var\left(Z_{g\cdot}^{(\mathrm{I})}\right) & =Var\left(E[Z_{gh}\vert U_{g}]\right)=Var\left(E[\mathfrak{X}_{gh}f_{e|X,U}(0\vert X_{gh},U_{g})\vert U_{g}]\right)+o\left(1\right):=\sigma_{\mathrm{I},Z}^{2}+o\left(1\right),\\
Var\left(Z_{\cdot h}^{(\mathrm{II})}\right) & =Var\left(E[Z_{gh}\vert V_{h}]\right)=Var\left(E[\mathfrak{X}_{gh}f_{e|X,V}(0\vert X_{gh},V_{h})\vert V_{h}]\right)+o\left(1\right):=\sigma_{\mathrm{II},Z}^{2}+o\left(1\right),\\
Var\left(Z_{gh}^{(\mathrm{IV})}\right) & =\frac{1}{2\ell}E[\mathfrak{X}_{gh}^{2}f_{e|X}(0\vert X_{gh})]+o\left(\ell^{-1}\right).
\end{align*}
 Moreover, by Assumptions \ref{as:bandwidth}(i) and (v), we have $E[\mathfrak{X}_{gh}^{2}f_{e|X}(0\vert X_{gh})]>0$,
which implies that the non-Gaussian term $Var\left(Z_{gh}^{(\mathrm{III})}\right)=O\left(1\right)$
is negligible compared to $Var\left(Z_{gh}^{(\mathrm{IV})}\right)$.

Given that $Z_{g\cdot}^{(\mathrm{I})}$ is i.i.d. over $g$. By Lapunov's
central limit theorem, we deduce that the marginal CLT result 
\[
\frac{1}{\sqrt{G}}\sum_{g=1}^{G}\sigma_{\mathrm{I},Z}^{-1}Z_{g\cdot}^{(\mathrm{I})}\overset{d}{\to}\mathcal{N}\left(0,\mathbf{I}_{d}\right).
\]
Similarly, we can deduce that $\frac{1}{\sqrt{H}}\sum_{h=1}^{H}\sigma_{\mathrm{II},Z}^{-1}Z_{\cdot h}^{(\mathrm{II})}\overset{d}{\to}\mathcal{N}\left(0,\mathbf{I}_{d}\right)$.
Applying the similar marginal CLT and joint CLT arguments as those for \eqref{eq: joint clt}, we
have 
\[
\frac{\sqrt{\ell}}{\sqrt{GH}}\sum_{g=1}^{G}\sum_{h=1}^{H}Z_{gh}^{(\mathrm{IV})}\overset{d}{\to}\mathcal{N}\left(0,\frac{1}{2}E[\mathfrak{X}_{gh}^{2}f_{e|X}(0\vert X_{gh})]\right),
\]
and 
\begin{align*}
r_{GH,D}^{1/2}\left(\widehat{D}(\beta_{0})-E[\widehat{D}(\beta_{0})]\right)= & \left(\frac{r_{GH,D}\sigma_{\mathrm{I},Z}^{2}}{G}\right)^{1/2}\frac{1}{\sqrt{G}}\sum_{g=1}^{G}\sigma_{\mathrm{I},Z}^{-1}Z_{g\cdot}^{(\mathrm{I})}+\left(\frac{r_{GH,D}\sigma_{\mathrm{II},Z}^{2}}{H}\right)^{1/2}\frac{1}{\sqrt{H}}\sum_{h=1}^{H}\sigma_{\mathrm{II},Z}^{-1}Z_{\cdot h}^{(\mathrm{II})}\\
 & +\left(\frac{r_{GH,D}}{GH\ell}\right)^{1/2}\frac{\sqrt{\ell}}{\sqrt{GH}}\sum_{g=1}^{G}\sum_{h=1}^{H}Z_{gh}^{(\mathrm{IV})}+o_{P}\left(1\right)\\
\overset{d}{\to} & \mathcal{N}\left(0,\nu_{\mathrm{I}}+\nu_{\mathrm{II}}+\frac{\nu_{\mathrm{IV}}}{2}E[\mathfrak{X}_{gh}^{2}f_{e|X}(0\vert X_{gh})]\right),
\end{align*}
where $\nu_{\mathrm{I}}=\lim_{N,T\to\infty}\frac{r_{GH,D}\sigma_{\mathrm{I},Z}^{2}}{G}$,
$\nu_{\mathrm{II}}=\lim_{N,T\to\infty}\frac{r_{GH,D}\sigma_{\mathrm{II},Z}^{2}}{H}$,
and $\nu_{\mathrm{IV}}=\lim_{N,T\to\infty}\frac{r_{GH,D}}{GH\ell}$. Here, we focus on any convergent subsequence such that $\nu_{\bullet}$ is well-defined.

\paragraph{Term III, plug-in error in expectation.}

Assume $r_{GH}^{1/2}\|\widehat{\beta}-\beta_{0}\|\le C_{0}$, where
$r_{GH}\to\infty$. On this event write $\widehat{\beta}=\beta_{0}+r_{GH}^{-1/2}t$
with $\|t\|\le C_{0}$. Then, similarly as above, 
\[
E[\widehat{D}(\beta_{0}+r_{GH}^{-1/2}t)]=E\!\left[\mathfrak{X}_{gh}\int K(v)\,f_{e|X}(\ell v+r_{GH}^{-1/2}X_{gh}^{\top}t\vert X_{gh})\,dv\right].
\]
By the mean value theorem, for each $(v,t)$ there exists an intermediate
point between $\ell v$ and $\ell v+r_{GH}^{-1/2}X_{gh}^{\top}t$
such that 
\[
\Big|f_{e|X}(\ell v+r_{GH}^{-1/2}X_{gh}^{\top}t\vert X_{gh})-f_{e|X}(\ell v\vert X_{gh})\Big|\le r_{GH}^{-1/2}\,|X_{gh}^{\top}t|\,\sup_{u}|f_{e|X}^{(1)}(u\vert X_{gh})|.
\]
Using $|X_{gh}^{\top}t|\le\|X_{gh}\|\|t\|\le C_{0}\|X_{gh}\|$ and
the assumed bound $\sup_{e,x}|f_{e|X}^{(1)}(e\vert X_{gh})|<\infty$,
we obtain 
\[
\sup_{\|t\|\le C_{0}}\Big|f_{e|X}(\ell v+r_{GH}^{-1/2}X_{gh}^{\top}t\vert X_{gh})-f_{e|X}(\ell v\vert X_{gh})\Big|\lesssim r_{GH}^{-1/2}\|X_{gh}\|.
\]
Since $\int|K(v)|\,dv=1$, Jensen's inequality implies 
\begin{align*}
 & \left|E[\widehat{D}(\beta)]\vert_{\beta=\widehat{\beta}}-E\left[\widehat{D}\left(\beta_{0}\left(\tau\right)\right)\right]\right|\\
 & \le\sup_{\left\Vert t\right\Vert \le C_{0}}E\left[\left|\mathfrak{X}_{gh}\right|\int\left|K\left(v\right)\right|\left|f_{e|X}\left(\ell v+r_{GH}^{-1/2}X_{gh}^{\top}t\vert X_{gh}\right)-f_{e|X}\left(\ell v\vert X_{gh}\right)\right|dv\right]\\
 & \lesssim\sup_{\left\Vert t\right\Vert \le C_{0}}E\left[\left|\mathfrak{X}_{gh}\right|\int\left|K\left(v\right)\right|r_{GH}^{-1/2}\left\Vert X_{gh}\right\Vert dv\right]\\
 & \le r_{GH}^{-1/2}E\left[\left|\mathfrak{X}_{gh}\right|\left\Vert X_{gh}\right\Vert \right].
\end{align*}
Thus given the fourth moment of $X_{gh}$ is bounded, one can deduce
that $E[\widehat{D}(\beta)]\vert_{\beta=\widehat{\beta}}=E[\widehat{D}(\beta_{0})]+O(r_{GH}^{-1/2})=E[\mathfrak{X}_{gh}f_{e|X}(0\vert X_{gh})]+O(r_{GH}^{-1/2})$.\medskip{}

\paragraph{Term IV, bias at $\beta_{0}$.}

Let $\beta_{0}=\beta_{0}(\tau)$ and define the regression error $e_{gh}:=y_{gh}-X_{gh}^{\top}\beta_{0}$.
By conditioning on $X_{gh}$, we have 
\[
E[\widehat{D}(\beta_{0})]=E\!\left[\mathfrak{X}_{gh}\,E\!\left[\frac{1}{\ell}K\!\left(\frac{e_{gh}}{\ell}\right)\Big|X_{gh}\right]\right]=E\!\left[\mathfrak{X}_{gh}\int K(v)\,f_{e|X}(\ell v\vert X_{gh})\,dv\right].
\]
Using a second-order Taylor expansion of $f_{e|X}(\cdot\vert X_{gh})$
at $0$, 
\[
f_{e|X}(\ell v\vert X_{gh})=f_{e|X}(0\vert X_{gh})+\ell vf_{e\vert X}^{\left(1\right)}(0\vert X_{gh})+\frac{\ell^{2}v^{2}}{2}f_{e\vert X}^{\left(2\right)}(0\vert X_{gh})+o(\ell^{2}),
\]
uniformly over $|v|\le1$ (the support of $K$). Since $\int K(v)\,dv=1$,
$\int vK(v)\,dv=0$, and $\int v^{2}K(v)\,dv=1/3<\infty$, it follows
that 
\[
\int K(v)\,f_{e|X}(\ell v\vert X_{gh})\,dv=f_{e|X}(0\vert X_{gh})+\frac{\ell^{2}}{6}f_{e\vert X}^{\left(2\right)}(0\vert X_{gh})+o(\ell^{2}).
\]
Therefore $E[\widehat{D}(\beta_{0})]=E[\mathfrak{X}_{gh}f_{e|X}(0\vert X_{gh})]+\frac{\ell^{2}}{6}E[\mathfrak{X}_{gh}f_{e\vert X}^{\left(2\right)}(0\vert X_{gh})]+o\left(\ell^{2}\right)$.

\paragraph{Conclusion.}

Together, Terms I-IV show that 
\[
\widehat{D}(\widehat{\beta})=E[\mathfrak{X}_{gh}f_{e|X}(0\vert X_{gh})]+O_{P}\left(r_{GH}^{-1/2}\ell^{-1/2}+\ell^{2}+R^{-1/2}\ell^{-1/2}\right)=E[\mathfrak{X}_{gh}f_{e|X}(0\vert X_{gh})]+o_{P}\left(1\right).
\]
Moreover, when $\frac{\sigma_{\mathrm{I},1\Gamma}^{2}}{\ell\sigma_{\mathrm{I},Z}^{2}}=O\left(1\right)$
and $\frac{\sigma_{\mathrm{II},1\Gamma}^{2}}{\ell\sigma_{\mathrm{II},Z}^{2}}=O\left(1\right)$,
or $H\sigma_{\mathrm{I},1\Gamma}^{2}+G\sigma_{\mathrm{II},1\Gamma}^{2}=O\left(1\right)$,
we have $r_{GH,D}^{1/2}r_{GH}^{-1/2}\ell^{-1/2}=O\left(1\right)$ and $r_{GH,D}^{1/2}r_{GH}^{-1/2}=o(1)$,
which implies 
\begin{align*}
&r_{GH,D}^{1/2}  \left(\widehat{D}(\widehat{\beta})-E[\mathfrak{X}_{gh}f_{e|X}(0\vert X_{gh})]-\frac{\ell^{2}}{6}E[\mathfrak{X}_{gh}f_{e\vert X}^{\left(2\right)}(0\vert X_{gh})]+o\left(\ell^{2}\right)\right)\\
&\qquad\qquad\qquad\qquad\qquad\qquad\qquad\qquad\overset{d}{\to}  \mathcal{N}\left(0,\nu_{\mathrm{I}}+\nu_{\mathrm{II}}+\frac{\nu_{\mathrm{IV}}}{2}E[\mathfrak{X}_{gh}^{2}f_{e|X}(0\vert X_{gh})]\right).
\end{align*}
Since $\mathfrak{X}_{gh}=\mathrm{tr}(BX_{gh}X_{gh}^{\top})$ and $B$
is arbitrary, the Cramer-Wold Device implies the pointwise result along any convergent subsequence. By the analogous argument as proof for Theorem \ref{thm:1},
one can extend it to show the uniformity result, which completes the proof.
\end{proof}

\section{Proof of Theorem \ref{thm:2-1}}
\begin{proof}
Define the oracle variance estimator $\widetilde{\Omega}=\widetilde{\Omega}_{\mathrm{I}}+\widetilde{\Omega}_{\mathrm{II}}+\widetilde{\Omega}_{\mathrm{III,IV}}$,
with 
\begin{align}
\widetilde{\Omega}_{\mathrm{I}} & =\frac{1}{G^{2}H^{2}}\sum_{g=1}^{G}\sum_{h=1}^{H}\sum_{h'\neq h}^{H}\Psi_{gh}\Psi_{gh'}^{\top},\label{eq: omega 1}\\
\widetilde{\Omega}_{\mathrm{II}} & =\frac{1}{G^{2}H^{2}}\sum_{g=1}^{G}\sum_{g'\ne g}^{G}\sum_{h=1}^{H}\Psi_{gh}\Psi_{g'h}^{\top},\label{eq: omega 2}\\
\widetilde{\Omega}_{\mathrm{III,IV}} & =\frac{1}{G^{2}H^{2}}\sum_{g=1}^{G}\sum_{h=1}^{H}\Psi_{gh}\Psi_{gh}^{\top}.\label{eq: omega 3}
\end{align}
We first show $r_{GH}\left(\widehat{\Omega}-\widetilde{\Omega}\right)=o_{P}\left(1\right)$,
and we decompose into three terms 
\begin{align*}
r_{GH}\left(\widehat{\Omega}-\widetilde{\Omega}\right)= & r_{GH}\left(\widehat{\Omega}_{\mathrm{I}}-\widetilde{\Omega}_{\mathrm{I}}\right)+r_{GH}\left(\widehat{\Omega}_{\text{II}}-\widetilde{\Omega}_{\mathrm{II}}\right)+r_{GH}\left(\widehat{\Omega}_{\mathrm{III,IV}}-\widetilde{\Omega}_{\mathrm{III,IV}}\right).
\end{align*}
Without loss of generality, let $d=1$ hereafter. As in proof for
Theorem \ref{thm:1}, the argument mainly focus on the convergent
subsequence.

Without loss of generality, assume $H\sigma_{\text{I},1\Gamma}^{2}\ge G\sigma_{\text{II},1\Gamma}^{2}$.
When $H\sigma_{\text{I},1\Gamma}^{2}\to\infty$, $r_{GH}=O\left(G/\sigma_{\text{I},1\Gamma}^{2}\right)$
and the intersection terms $\widehat{\Omega}_{\mathrm{III,IV}}$ is
negligible. It suffices to show 
\[
r_{GH}\left(\widehat{\Omega}_{\mathrm{I}}-\widetilde{\Omega}_{\mathrm{I}}\right)=o_{P}\left(1\right).
\]
The proof for $r_{GH}\left(\widehat{\Omega}_{\mathrm{II}}-\widetilde{\Omega}_{\mathrm{II}}\right)=o_{P}\left(1\right)$
follows similarly. By Lemma \ref{lemma: order of oracle variance},
one can always standardize through multiplying $\widehat{\Omega}_{\text{I}}$
and $\widetilde{\Omega}_{\text{I}}$ by $\sigma_{\text{I},1\Gamma}^{-2}$.
It is equivalent to show that when $\sigma_{\text{I},1\Gamma}^{2}=1$ and $r_{GH}=G$,
\begin{equation}
G\bigl[\widehat{\Omega}_{\text{I}}-\widetilde{\Omega}_{\text{I}}\bigr]=o_{P}(1).\label{eq:Omega1-plugin-goal}
\end{equation}
By the corresponding expression for $\widetilde{\Omega}_{\text{I}}$,
we can write 
\begin{align*}
G\bigl[\widehat{\Omega}_{\text{I}}-\widetilde{\Omega}_{\text{I}}\bigr]= & \frac{1}{GH^{2}}\sum_{g=1}^{G}\sum_{h=1}^{H}\sum_{h'\neq h}^{H}\Bigl\{\widehat{\Psi}_{gh}\widehat{\Psi}_{gh'}-\Psi_{gh}\Psi_{gh'}\Bigr\}\\
= & \frac{1}{GH^{2}}\sum_{g=1}^{G}\sum_{h=1}^{H}\sum_{h'\neq h}^{H}\left(\mathbf{1}\left\{ e_{gh}\le G^{-1/2}X_{gh}t,e_{gh'}\le G^{-1/2}X_{gh'}t\right\} -\mathbf{1}\left\{ e_{gh}\le0,e_{gh'}\le0\right\} \right)X_{gh}X_{gh'}\\
 & -\frac{1}{GH^{2}}\sum_{g=1}^{G}\sum_{h=1}^{H}\sum_{h'\neq h}^{H}\left(\tau\cdot\mathbf{1}\left\{ 0<e_{gh}\le G^{-1/2}X_{gh}t\right\} \right)X_{gh}X_{gh'}\\
 & -\frac{1}{GH^{2}}\sum_{g=1}^{G}\sum_{h=1}^{H}\sum_{h'\neq h}^{H}\left(\tau\cdot\mathbf{1}\left\{ 0<e_{gh'}\le G^{-1/2}X_{gh}t\right\} \right)X_{gh}X_{gh'}\\
:= & B_{1,GH}(t)+B_{2,GH}(t)+B_{3,GH}(t).
\end{align*}
We focus mainly on the first term $B_{1,GH}(t)$. Define the centered
 difference 
\[
D_{ghh'}(t):=\Big(\mathbf{1}\{e_{gh}\le G^{-1/2}X_{gh}t,e_{gh'}\le G^{-1/2}X_{gh'}t\}-\mathbf{1}\{e_{gh}\le0,e_{gh'}\le0\}\Big)-p_{ghh'}(t),
\]
where the conditional expectation is $p_{ghh'}(t):=F\!\Big(G^{-1/2}X_{gh}t,G^{-1/2}X_{gh'}t\vert X_{gh},X_{gh'},\{V_{h}\}\Big)-F\!\Big(0,0\vert X_{gh},X_{gh'},\{V_{h}\}\Big)$. Provided that $X_{gh}t$ can be negative, the two indicator sets need not be nested and the conditional expectation can be negative. However, this causes no
difficulty, since we only need a second-moment bound. Indeed,
\( | \mathbf 1\{e_{gh}\le G^{-1/2}X_{gh}t,e_{gh'}\le G^{-1/2}X_{gh'}t\} - \mathbf 1\{e_{gh}\le0,e_{gh'}\le0\} | \le \mathbf 1\{|e_{gh}|\le |G^{-1/2}X_{gh}t|\} + \mathbf 1\{|e_{gh'}|\le |G^{-1/2}X_{gh'}t|\}. \)
Thus the signs of \(X_{gh}t\) and \(X_{gh'}t\) are immaterial for the
following bound. For notational simplicity, the subsequent display is
written as if \(X_{gh}t\) and \(X_{gh'}t\) are positive; the general
case is covered by replacing them with their absolute values.
Then $E\big(D_{ghh'}(t)\vert X_{gh},X_{gh'},\{V_{h}\}\big)=0$ and
\begin{align*}
B_{1,GH}(t) & =\underbrace{\frac{1}{GH^{2}}\sum_{g,h\neq h'}D_{ghh'}(t)X_{gh}X_{gh'}}_{=:\mathcal{T}_{1,GH}(t)}+\underbrace{\frac{1}{GH^{2}}\sum_{g,h\neq h}\Big(p_{ghh'}(t)X_{gh}X_{gh'}-E[p_{ghh'}(t)X_{gh}X_{gh'}\vert\{V_{h}\}]\Big)}_{=:\mathcal{T}_{2,GH}(t)}\\
 & +\underbrace{\frac{1}{GH^{2}}\sum_{g,h\neq h}\Big(E[p_{ghh'}(t)X_{gh}X_{gh'}\vert\{V_{h}\}]-E[p_{ghh'}(t)X_{gh}X_{gh'}]\Big)}_{=:\mathcal{T}_{3,GH}(t)}+\frac{H-1}{H}\underbrace{E\big[p_{ghh'}(t)X_{gh}X_{gh'}\big]}_{=:\mathcal{T}_{4,GH}(t)}.
\end{align*}

\paragraph{Step 1. $\sup_{\left|t\right|\le C_{0}}\left|\mathcal{T}_{1,GH}(t)\right|=O_{P}\left(G^{-3/4}\log G\right).$}

We now bound these four terms one by one, with uniformity in $|t|\le C_{0}$.
We begin with $\sup_{\left|t\right|\le C_{0}}\left|\mathcal{T}_{1,GH}(t)\right|$.
Partition the parameter space of $\left\{ \left\Vert t\right\Vert \in\mathbb{R}^{d}:\left\Vert t\right\Vert \leq C_{0}\right\} $
into $N=\left(G^{1/4}\right)^{d}$ cubes $\left\{ E_{i}\right\} _{i=1}^{N}$
with the side length at most $b_{G}=G^{-1/4}$ (The dimension $d$
only matters here and hence we keep it). Let $t_{i}$ be the a corner
or smallest value in cube $E_{i}$. By construction, for any $t$,
we can find $E_{j}$ such that $t\in E_{j}$. By triangular inequality,
\begin{align*}
\max_{\left|t\right|\leq C_{0}}\left|\mathcal{T}_{1,GH}(t)\right| & \le\max_{i\le N}\left|\mathcal{T}_{1,GH}(t_{i})\right|+\sup_{t\in E_{j}}\left|\mathcal{T}_{1,GH}(t)-\mathcal{T}_{1,GH}(t_{j})\right|.
\end{align*}

For term $\max_{i\le N}\left|\mathcal{T}_{1,GH}(t_{i})\right|$, define
the centered kernel representation 
\[
\bar{A}_{g}(t)=\frac{1}{H^{2}}\sum_{h=1}^{H}\sum_{h'\neq h}^{H}A_{ghh'}(t),\qquad A_{ghh'}(t):=D_{ghh'}(t)\,X_{gh}X_{gh'},
\]

\emph{Stage I: Bound $\max_{i\le N}\left|\mathcal{T}_{1,GH}(t_{i})\right|$
by Bernstein's inequality.} Notice that $\mathcal{T}_{1,GH}(t_{i})=\frac{1}{G}\sum_{g=1}^{G}\bar{A}_{g}(t_{i})$
and given $\{V_{h}\}$, $\bar{A}_{g}(t_{i})$ is independent over
$g$. We seek to apply Bernstein's inequality which requires the bounds
of $\mathrm{Var}\bigl(\bar{A}_{g}(t_{i})\vert\{V_{h}\}\bigr)$ and
$\max_{g\le G}\left|\bar{A}_{g}(t_{i})\right|$.

\medskip{}
\emph{(a) Bounding $\mathrm{Var}\bigl(\bar{A}_{g}(t_{i})\vert\{V_{h}\}\bigr)$.}
Expand the conditional variance 
\[
\mathrm{Var}\bigl(\bar{A}_{g}(t_{i})\vert\{V_{h}\}\bigr)=\frac{1}{H^{4}}\sum_{h\neq h'}\sum_{k\neq k'}E\!\left[A_{ghh'}(t_{i})A_{gkk'}(t_{i})\vert\{V_{h}\}\right].
\]

Each term can be bounded by Cauchy--Schwarz inequality: 
\[
\left|E\!\left[A_{ghh'}(t_{i})A_{gkk'}(t_{i})\vert\{V_{h}\}\right]\right|\le E\!\left[A_{ghh'}(t_{i})^{2}\vert\{V_{h}\}\right]^{1/2}E\!\left[A_{gkk'}(t_{i})^{2}\vert\{V_{h}\}\right]^{1/2}.
\]
Hence $\mathrm{Var}\bigl(\bar{A}_{g}(t_{i})\vert\{V_{h}\}\bigr)\le\sup_{h\neq h'}E\!\left[A_{ghh'}(t_{i})^{2}\vert\{V_{h}\}\right].$
So the entire problem reduces to bounding the second moment of a single
kernel $A_{ghh'}(t_{i})$.

Since $A_{ghh'}(t_{i})=D_{ghh'}(t_{i})X_{gh}X_{gh'}$, we have $A_{ghh'}(t_{i})^{2}=X_{gh}^{2}X_{gh'}^{2}\,D_{ghh'}(t_{i})^{2}$
and thus 
\[
E\!\left[A_{ghh'}(t)^{2}\vert\{V_{h}\}\right]=E\!\left[X_{gh}^{2}X_{gh'}^{2}\,E\left(D_{ghh'}(t)^{2}\vert X_{gh},X_{gh'},\{V_{h}\}\right)\,\Bigm|\{V_{h}\}\right].
\]
Now when $X_{gh}t>0$ and $X_{gh'}t>0$, $D_{ghh'}(t_{i})$ is a centered Bernoulli difference (indicator
minus its conditional mean), and hence it satisfies $E\left(D_{ghh'}(t_{i})^{2}\vert X_{gh},X_{gh'},\{V_{h}\}\right)\le p_{ghh'}(t_{i})$.
Here, $p_{ghh'}(t_{i})$ is of order of the probability mass swept
by moving the thresholds from $(0,0)$ to $(r_{GH}^{-1/2}X_{gh}t_{i},r_{GH}^{-1/2}X_{gh'}t_{i})$.
By the mean value theorem applied to the conditional bivariate CDF,
we obtain 
\begin{align*}
p_{ghh'}\left(t_{i}\right)= & \int_{-\infty}^{G^{-1/2}X_{gh}^{\top}t_{i}}\int_{-\infty}^{G^{-1/2}X_{gh'}t_{i}}f\left(u,v\vert X_{gh},X_{gh'},\left\{ V_{h}\right\} \right)dudv-\int_{-\infty}^{0}\int_{-\infty}^{0}f\left(u,v\vert X_{gh},X_{gh'},\left\{ V_{h}\right\} \right)dudv\\
= & G^{-1/2}t_{i}\Biggl(\int_{-\infty}^{G^{-1/2}X_{gh}\bar{t}}f\left(u,r_{GH}^{-1/2}X_{gh'}^{\top}\bar{t}\vert X_{gh},X_{gh'},\left\{ V_{h}\right\} \right)du\cdot X_{gh'}\\
 & +\int_{-\infty}^{G^{-1/2}X_{gh'}\bar{t}}f\left(r_{GH}^{-1/2}X_{gh}^{\top}\bar{t},v\vert X_{gh},X_{gh'},\left\{ V_{h}\right\} \right)dv\cdot X_{gh}\Biggl),
\end{align*}
Applying Fubini's theorem, together with the uniform boundedness of the
conditional marginal densities,
the two integrals above are uniformly bounded. Hence, for some constant
\(c>0\), independent of \(g,h,h'\), \(G,H\), \(t_i\), and the conditioning
variables,
$p_{ghh'}\left(t_{i}\right)\le c G^{-1/2}\left|t_{i}\right|\left(\left|X_{gh}\right|+\left|X_{gh'}\right|\right)$.
Hence, by the property of Bernoulli random variable, we have 
\(
E\!\left(D_{ghh'}(t_{i})^{2}\vert X_{gh},X_{gh'},\{V_{h}\}\right)\lesssim G^{-1/2}C_{0}\bigl(|X_{gh}|+|X_{gh'}|\bigr).
\)
Plugging this back to the conditional variance yields 
\[
\mathrm{Var}\bigl(\bar{A}_{g}(t_{i})\vert\{V_{h}\}\bigr)\lesssim G^{-1/2}\,\sup_{h\neq h'}E\!\left[|X_{gh}|^{2}|X_{gh'}|^{3}\vert\{V_{h}\}\right].
\]
Under the maintained conditional moment assumption (Assumption \ref{as:moment-strong}(i)), we can deduce that $\sup_{h\neq h'}E(|X_{gh}|^{2}|X_{gh'}|^{3}\vert\{V_{h}\})<\infty$
a.s. (this term is identical over $g$), this becomes $\mathrm{Var}(\bar{A}_{g}(t_{i})\vert\{V_{h}\})\lesssim G^{-1/2}$
a.s.

\medskip{}
\emph{(b) Uniform maximal bound.} We now prove a bound on $\max_{g\le G}|\bar{A}_{g}(t_{i})|$
at a fixed grid point $t_{i}$. Since $|D_{ghh'}(t_{i})|\le1$, given
$\max_{g,h}|X_{gh}|\le CG^{1/8}$ a.s., we have 
\[
\left|\max_{g}\bar{A}_{g}(t_{i})\right|\le\max_{g,h,h'}|A_{ghh'}(t_{i})|\le\max_{g,h,h'}|X_{gh}X_{gh'}|\le\left(\max_{g,h}|X_{gh}|\right)^{2}\le G^{1/4}.
\]
holds a.s. This provides the required almost-sure maximal bound at
each grid point $t_{i}$.

\medskip{}
\emph{(c) Bernstein's inequality.} With these two ingredients we apply
Bernstein's inequality conditionally on $\{V_{h}\}$ with the threshold
$\varepsilon_{G}=c_{1}G^{-3/4}{\log G}$: 
\begin{align*}
P\!\left(\left|\frac{1}{G}\sum_{g=1}^{G}\bar{A}_{g}(t_{i})\right|\ge\varepsilon_{G}\Bigm|\{V_{h}\}\right) & \le\exp\!\left(-\frac{\varepsilon_{G}^{2}/2}{\sum_{g}\mathrm{Var}(\frac{1}{G}\bar{A}_{g}(t_{i})\vert\{V_{h}\})+\max_{g}\frac{1}{G}\bar{A}_{g}(t_{i})\varepsilon_{G}/3}\right).\\
 & =\exp\!\left(-\frac{c_{2}G^{-3/2}(\log G)^2}{c_{3}G^{-3/2}+c_{4}G^{-1}G^{1/4}G^{-3/4}{\log G}}\right)\\
 & =\exp\left(-c_{5}{\log G}\right)=G^{-C},
\end{align*}
Now apply the union bound over the $N$ grid points. Since $N\asymp(G^{1/4})^{d}$,
we have 
\[
P\!\left(\max_{1\le i\le N}\left|\frac{1}{G}\sum_{g=1}^{G}\bar{A}_{g}(t_{i})\right|\ge\varepsilon_{G}\Bigm|\{V_{h}\}\right)\le2N\,G^{-C}=O(G^{d/4-C}).
\]
One can set $c_{1}$ such that $C>d/4$ and
$G^{d/4-C}=o\left(1\right)$ as $G\to\infty$. By law of total probability,
\[
P\!\left(\max_{1\le i\le N}\left|\mathcal{T}_{1,GH}(t_{i})\right|\ge\varepsilon_{G}\right)=E\left[P\!\left(\max_{1\le i\le N}\left|\frac{1}{G}\sum_{g=1}^{G}\bar{A}_{g}(t_{i})\right|\ge\varepsilon_{G}\Bigm|\{V_{h}\}\right)\right]=o(1).
\]
Thus the grid term satisfies 
\[
\max_{1\le i\le N}\left|\mathcal{T}_{1,GH}(t_{i})\right|=O_{P}\!\left(G^{-3/4}{\log G}\right).
\]

\medskip{}
\emph{Stage II: within-cube oscillation bound $\sup_{t\in E_{j}}\left|\mathcal{T}_{1,GH}(t)-\mathcal{T}_{1,GH}(t_{i})\right|$.}
Recall that $t\in E_{j}$ and hence by construction $\left|t-t_{j}\right|\le b_{G}:=G^{-1/4}$.
Define the bracket increment 
\[
\Delta_{ghh'}(t,t'):=\mathbf{1}\{e_{gh}\le G^{-1/2}X_{gh}t,e_{gh'}\le G^{-1/2}X_{gh'}t\}-\mathbf{1}\{e_{gh}\le G^{-1/2}X_{gh}t',e_{gh'}\le G^{-1/2}X_{gh'}t'\},
\]
so that $\mathcal{T}_{1,GH}(t)-\mathcal{T}_{1,GH}(t_{j})=\frac{1}{G}\sum_{g=1}^{G}\frac{1}{H^{2}}\sum_{h=1}^{H}\sum_{h'\neq h}^{H}\left(\Delta_{ghh'}(t,t_{j})-\widetilde{\Delta}_{ghh'}(t,t_{j})\right)$,
where $\widetilde{\Delta}_{ghh'}(t,t')=p_{ghh'}(t)-p_{ghh'}(t')$
is the corresponding difference of conditional CDF increments. Given
that the indicator function and cdf are monotone increasing, $\left|\Delta_{ghh'}(t,t_{j})\right|\le\Delta_{ghh'}(t_{j}+b_{G},t_{j}-b_{G})$
and $\left|\widetilde{\Delta}_{ghh'}(t,t_{j})\right|\le\widetilde{\Delta}_{ghh'}(t_{j}+b_{G},t_{j}-b_{G})$.
Hence, by triangular inequality, 
\begin{align*}
\sup_{t\in E_{j}}\left|\mathcal{T}_{1,GH}(t)-\mathcal{T}_{1,GH}(t_{j})\right|\le & \left|\frac{1}{G}\sum_{g=1}^{G}\frac{1}{H^{2}}\sum_{h=1}^{H}\sum_{h'\neq h}^{H}\left(\Delta_{ghh'}(t_{j}+b_{G},t_{j}-b_{G})-\widetilde{\Delta}_{ghh'}(t_{j}+b_{G},t_{j}-b_{G})\right)\right|\\
 & +2\left|\frac{1}{G}\sum_{g=1}^{G}\frac{1}{H^{2}}\sum_{h=1}^{H}\sum_{h'\neq h}^{H}\widetilde{\Delta}_{ghh'}(t_{j}+b_{G},t_{j}-b_{G})\right|
\end{align*}
For the first term, applying conditional Bernstein with the union
bound, as did in Stage I, yields 
\[
\left|\frac{1}{G}\sum_{g=1}^{G}\frac{1}{H^{2}}\sum_{h=1}^{H}\sum_{h'\neq h}^{H}\left(\Delta_{ghh'}(t_{j}+b_{G},t_{j}-b_{G})-\widetilde{\Delta}_{ghh'}(t_{j}+b_{G},t_{j}-b_{G})\right)\right|=O_{P}(G^{-3/4}{\log G}).
\]

For the second term, by mean value theorem, 
\begin{align*}
\widetilde{\Delta}_{ghh'}(t_{j}+b_{G},t_{j}-b_{G})= & \int_{-\infty}^{G^{-1/2}X_{gh}^{\top}\left(t_{j}+b_{G}\right)}\int_{-\infty}^{G^{-1/2}X_{gh'}\left(t_{j}+b_{G}\right)}f\left(u,v\vert X_{gh},X_{gh'},\left\{ V_{h}\right\} \right)dudv\\
 & -\int_{-\infty}^{G^{-1/2}X_{gh}^{\top}\left(t_{j}-b_{G}\right)}\int_{-\infty}^{G^{-1/2}X_{gh'}\left(t_{j}-b_{G}\right)}f\left(u,v\vert X_{gh},X_{gh'},\left\{ V_{h}\right\} \right)dudv\\
= & G^{-1/2}2b_{G}\Biggl(\int_{-\infty}^{G^{-1/2}X_{gh}\bar{t}}f\left(u,r_{GH}^{-1/2}X_{gh'}^{\top}\bar{t}\vert X_{gh},X_{gh'},\left\{ V_{h}\right\} \right)du\cdot X_{gh'}\\
 & +\int_{-\infty}^{G^{-1/2}X_{gh'}\bar{t}}f\left(r_{GH}^{-1/2}X_{gh}^{\top}\bar{t},v\vert X_{gh},X_{gh'},\left\{ V_{h}\right\} \right)dv\cdot X_{gh}\Biggl).\\
\lesssim & G^{-3/4}.
\end{align*}
Putting Stages I and II together gives the uniform-in-$t$ concentration
\[
\sup_{\|t\|\le C_{0}}\left|\mathcal{T}_{1,GH}(t)\right|\le\max_{1\le i\le N}\left|\mathcal{T}_{1,GH}(t_{i})\right|+\sup_{t\in E_{j}}\left|\mathcal{T}_{1,GH}(t)-\mathcal{T}_{1,GH}(t_{j})\right|=O_{P}(G^{-3/4}{\log G}).
\]

\paragraph{Step 2. $\sup_{\left|t\right|\le C_{0}}\left|\mathcal{T}_{2,GH}(t)\right|=O_{P}\left(G^{-1}\log G\right).$}

We now return to $\mathcal{T}_{2,GH}(t)$ and fix $t$. Let $A_{2,g}\left(t\right)=\frac{1}{H^{2}}\sum_{h=1}^{H}\sum_{h'\neq h}^{H}\Big(p_{ghh'}(t)X_{gh}X_{gh'}-E[p_{ghh'}(t)X_{gh}X_{gh'}\vert\{V_{h}\}]\Big)$
and hence $\mathcal{T}_{2,GH}(t)=\frac{1}{G}\sum_{g=1}^{G}A_{2,g}\left(t\right)$.
Observe that $E\left(A_{2,g}\left(t\right)\vert\left\{ V_{h}\right\} \right)=0$,
and conditional on $\left\{ V_{h}\right\} $, $A_{2,g}\left(t\right)$
is independent over $g$. Moreover, the conditional variance is 
\[
Var\left(A_{2,g}\left(t\right)\vert\left\{ V_{h}\right\} \right)\le\sup_{h,h'}E\left(p_{ghh'}(t)^{2}X_{gh}^{2}X_{gh'}^{2}\vert\left\{ V_{h}\right\} \right)
\]
Recall that $p_{ghh'}\left(t\right)\lesssim G^{-1/2}\left|t\right|\left(\left|X_{gh}\right|+\left|X_{gh'}\right|\right)$, so given Assumption \ref{as:moment-strong}(i), 
we have $$\sup_{h\neq h'}E\left(p_{ghh'}(t)^{2}X_{gh}^{2}X_{gh'}^{2}\vert\left\{ V_{h}\right\} \right)\lesssim G^{-1}\sup_{h\neq h'}E\left(X_{gh}^{4}X_{gh'}^{2}\right)\lesssim G^{-1}.$$
Then, by the Bernstein inequality and union bound, as in Step 1, one
can extend the result to uniformly $\left|t\right|\le C_{0}$ and
obtain the desired results. Proofs are close to those in Step 1, so
will be omitted.

\paragraph{Step 3. $\sup_{\left|t\right|\le C_{0}}\left|\mathcal{T}_{3,GH}(t)\right|=O_{P}\left(\left(GH\right)^{-1/2}\log G\right).$ }

Fix $t$. Let 
\[
\mathcal{T}_{3,GH}(t)=\frac{1}{H^{2}}\sum_{h=1}^{H}\sum_{h'\neq h}^{H}A_{3,hh'}\left(t\right),
\]
where 
\begin{align*}
A_{3,hh'}\left(t\right) & =\frac{1}{G}\sum_{g=1}^{G}\Big(E[p_{ghh'}(t)X_{gh}X_{gh'}\vert\{V_{h}\}]-E[p_{ghh'}(t)X_{gh}X_{gh'}]\Big)\\
 & =\frac{1}{G}\sum_{g=1}^{G}\Big(E[p_{ghh'}(t)X_{gh}X_{gh'}\vert V_{h},V_{h'}]-E[p_{ghh'}(t)X_{gh}X_{gh'}]\Big).
\end{align*}
Notice that $E\left(A_{3,hh'}\left(t\right)\right)=0$ and $A_{3,hh'}\left(t\right)$
is a U-process based on $\left(V_{h},V_{h'}\right)$. Hence, $\mathcal{T}_{3,GH}(t)$
is a function of these $H$ coordinates: 
\[
\mathcal{T}_{3,GH}(t)=\phi\bigl(V_{1},\ldots,V_{H}).
\]
We verify the bounded difference property. Fix an index $h_{0}$ and
replace only the $h_{0}$-th coordinate $V_{h_{0}}$ by an independent
copy $V'_{h_{0}}$, leaving all other coordinates unchanged. Only
those summands $A_{3,hh'}(t)$ involving $h_{0}$ can change. These
are exactly: (i) terms with $h=h_{0}$ and $h'\neq h_{0}$ (there
are $H-1$ of them), (ii) terms with $h\neq h_{0}$ and $h'=h_{0}$
(there are $H-1$ of them), and (iii) the overlap adjustment does
not introduce any extra terms since $h'\neq h$. Therefore, at most
$2H-2$ summands change. We apply McDiarmid's inequality to \(\mathcal T_{3,GH}(t)\), viewed as a
function of \((V_1,\ldots,V_H)\). This requires a uniform bounded-difference
bound. By Assumption \ref{as:moment-strong}(i), the conditional Lipschitz
property of the distribution function, and conditional H{\"o}lder's inequality,
each summand (the conditional expectation) satisfies
\[
\begin{aligned}
&
\left|
E[p_{ghh'}(t)X_{gh}X_{gh'}\mid V_h=v,V_{h'}=v']
\right| \\
&\qquad \le
C G^{-1/2}
E\!\left[
(|X_{gh}|+|X_{gh'}|)
|X_{gh}||X_{gh'}|
\mid V_h=v,V_{h'}=v'
\right]\\
&\qquad \le \left(\sup_{u,v}E(|X_{gh}|^6\mid U_g=u,V_h=v)\right)^{1/3}\left(\sup_{u,v}E(|X_{gh}|^6\mid U_g=u,V_h=v)\right)^{1/6}<\infty.
\end{aligned}
\]
uniformly in \(g,h,h'\) and \(|t|\le C_0\). 
 The size of variation of $\mathcal{T}_{3,GH}(t)$
after substituting the value of $V_{h_{0}}$, $\Delta_{h_{0}}=O\left(G^{-1/2}\frac{2H-2}{H^{2}}\right)=O\left(G^{-1/2}H^{-1}\right)$.
McDiarmid's inequality then yields, for any $\varepsilon_{GH}=C\left(GH\right)^{-1/2}\log G$,
\[
P\!\left(|\mathcal{T}_{3,GH}(t)|\ge\varepsilon_{GH}\right)\le2\exp\!\left(-\frac{2\varepsilon_{GH}^{2}}{\sum_{h_{0}=1}^{H}\Delta_{h_{0}}^{2}}\right)\le G^{-C}.
\]
The result is then extended to the uniformity result with any $\left|t\right|\le C_{0}$,
by McDiarmid's inequality and union bound as before.

\paragraph{Step 4. Decompose $\mathcal{T}_{4,GH}(t).$ }

Recall that $$\mathcal{T}_{4,GH}(t)=E\left[\left(F\left(G^{-1/2}X_{gh}^{\top}t,G^{-1/2}X_{gh'}^{\top}t\vert X_{gh},X_{gh'}\right)-F\left(0,0\vert X_{gh},X_{gh'}\right)\right)X_{gh}X_{gh'}\right].$$
By the Taylor expansion 
\begin{align*}
 & F\left(G^{-1/2}X_{gh}^{\top}t,G^{-1/2}X_{gh'}^{\top}t\vert X_{gh},X_{gh'}\right)X_{gh}X_{gh'}-F\left(0,0\vert X_{gh},X_{gh'}\right)X_{gh}X_{gh'}\\
= & \frac{\partial}{\partial t}F\left(G^{-1/2}X_{gh}^{\top}t,G^{-1/2}X_{gh'}^{\top}t\vert X_{gh},X_{gh'}\right)\vert_{t=0}X_{gh}X_{gh'}t\\
 & +\frac{1}{2}\frac{\partial^{2}}{\partial t^{2}}F\left(G^{-1/2}X_{gh}^{\top}t,G^{-1/2}X_{gh'}^{\top}t\vert X_{gh},X_{gh'}\right)X_{gh}X_{gh'}t^{2}\\
 & +\frac{1}{6}\frac{\partial^{3}}{\partial t^{3}}F\left(G^{-1/2}X_{gh}^{\top}\bar{t},G^{-1/2}X_{gh'}^{\top}\bar{t}\vert X_{gh},X_{gh'}\right)X_{gh}X_{gh'}t^{3},
\end{align*}
for some $\bar{t}$ with $\left|\bar{t}\right|\le\left|t\right|$.
Here, by the Leibniz rule, 
\begin{align*}
\frac{\partial}{\partial t}F\left(G^{-1/2}X_{gh}^{\top}t,G^{-1/2}X_{gh'}^{\top}t\vert X_{gh},X_{gh'}\right)\vert_{t=0}X_{gh}X_{gh'}= & \int_{-\infty}^{0}f\left(e_{gh},0\vert X_{gh},X_{gh'}\right)de_{gh}\cdot G^{-1/2}X_{gh}X_{gh'}^{2}\\
 & +\int_{-\infty}^{0}f\left(0,e_{gh'}\vert X_{gh},X_{gh'}\right)de_{gh'}\cdot G^{-1/2}X_{gh}^{2}X_{gh'}\\
:= & G^{-1/2}\mathcal{I}_{1,ghh'},
\end{align*}
\begin{align*}
\frac{\partial^{2}}{\partial t^{2}}F\left(G^{-1/2}X_{gh}^{\top}t,G^{-1/2}X_{gh'}^{\top}t\vert X_{gh},X_{gh'}\right)\vert_{t=0}= & 2f\left(0,0\vert X_{gh},X_{gh'}\right)\cdot G^{-1}X_{gh}^{2}X_{gh'}^{2}\\
 & +\int_{-\infty}^{0}f^{\left(0,1\right)}\left(e_{gh},0\vert X_{gh},X_{gh'}\right)de_{gh}\cdot G^{-1}X_{gh}X_{gh'}^{3}\\
 & +\int_{-\infty}^{0}f^{\left(1,0\right)}\left(0,e_{gh'}\vert X_{gh},X_{gh'}\right)de_{gh'}\cdot G^{-1}X_{gh}^{3}X_{gh'}\\
:= & G^{-1}\mathcal{I}_{2,ggh'},
\end{align*}
and 
\begin{align*}
 & \frac{\partial^{3}}{\partial t^{3}}F\left(G^{-1/2}X_{gh}^{\top}\bar{t},G^{-1/2}X_{gh'}^{\top}\bar{t}\vert X_{gh},X_{gh'}\right)\vert_{t=0}\\
= & 3f^{\left(1,0\right)}\left(G^{-1/2}X_{gh}^{\top}\bar{t},G^{-1/2}X_{gh'}^{\top}\bar{t}\vert X_{gh},X_{gh'}\right)\cdot G^{-3/2}X_{gh}^{3}X_{gh'}^{2}\\
 & +3f^{\left(0,1\right)}\left(G^{-1/2}X_{gh}^{\top}\bar{t},G^{-1/2}X_{gh'}^{\top}\bar{t}\vert X_{gh},X_{gh'}\right)\cdot G^{-3/2}X_{gh}^{2}X_{gh'}^{3}\\
 & +\int_{-\infty}^{G^{-1/2}X_{gh}^{\top}\bar{t}}f^{\left(0,2\right)}\left(e_{gh},G^{-1/2}X_{gh'}^{\top}\bar{t}\vert X_{gh},X_{gh'}\right)de_{gh}\cdot G^{-3/2}X_{gh}X_{gh'}^{4}\\
 & +\int_{-\infty}^{G^{-1/2}X_{gh'}^{\top}\bar{t}}f^{\left(2,0\right)}\left(G^{-1/2}X_{gh}^{\top}\bar{t},e_{gh'}\vert X_{gh},X_{gh'}\right)de_{gh'}\cdot G^{-3/2}X_{gh}^{4}X_{gh'}\\
:= & G^{-3/2}\mathcal{I}_{3,ggh'}.
\end{align*}
By Fubini's theorem, we have $E\left(\mathcal{I}_{1,ghh'}\right)\lesssim E\left(X_{gh}^{2}X_{gh'}\right)<\infty$.
Likewise, one can show that $E\left(\mathcal{I}_{2,ghh'}\right)<\infty$
and $E\left(\mathcal{I}_{2,ghh'}\right)<\infty$. Collecting terms
and plugging back $G^{-1/2}t=\widehat{\beta}-\beta_{0}\left(\tau\right)=D\left(\tau\right)^{-1}\frac{1}{GH}\sum_{g=1}^{G}\sum_{h=1}^{H}\Psi_{gh}+o_{P}\left(\widehat{\beta}-\beta_{0}\left(\tau\right)\right)$
yields that 
\begin{align*}
\mathcal{T}_{4,GH}(t) & =G^{-1/2}E\left(\mathcal{I}_{1,ghh'}\right)W_{GH}+o_{P}\left(G^{-1/2}\right),
\end{align*}
where $W_{GH}=D\left(\tau\right)^{-1}\frac{G^{1/2}}{GH}\sum_{g=1}^{G}\sum_{h=1}^{H}\Psi_{gh}.$

\paragraph{Step 5. }

For $B_{2,GH}\left(t\right)$, define the centered Bernoulli difference
\[
D_{2,gh}(t):=\mathbf{1}\{e_{gh}\le G^{-1/2}X_{gh}t\}-p_{2,gh}(t),
\]
where the conditional success probability is $p_{2,gh}(t):=F\!\Big(G^{-1/2}X_{gh}t\vert X_{gh},\{V_{h}\}\Big)-F\!\Big(0\vert X_{gh},\{V_{h}\}\Big)$.
Then $E\big(D_{2,gh}(t)\vert X_{gh},\{V_{h}\}\big)=0$ and 
\begin{align*}
B_{2,GH}(t) & =\frac{1}{GH^{2}}\sum_{g,h\neq h'}D_{2,gh}(t)X_{gh}X_{gh'}+\frac{1}{GH^{2}}\sum_{g,h\neq h}\Big(p_{2,gh}(t)X_{gh}X_{gh'}-E[p_{2,gh}(t)X_{gh}X_{gh'}\vert\{V_{h}\}]\Big)\\
 & +\frac{1}{GH^{2}}\sum_{g,h\neq h}\Big(E[p_{2,gh}(t)X_{gh}X_{gh'}\vert\{V_{h}\}]-E[p_{2,gh}(t)X_{gh}X_{gh'}]\Big)+\frac{H-1}{H}E\big[p_{2,gh}(t)X_{gh}X_{gh'}\big].
\end{align*}
By an argument analogue to Steps 1-4, one can deduce that $B_{2,GH}(t)$
and $B_{3,GH}(t)$ are of order $O_{P}\!\left(G^{-3/4}{\log G}\right).$
Combining results above, we have establishes that 
\[
G\bigl[\widehat{\Omega}_{\text{I}}-\widetilde{\Omega}_{\text{I}}\bigr]=G^{-1/2}E\left(\mathcal{I}_{1,ghh'}\right)W_{GH}+R_{1,GH}+o_{P}\left(G^{-1/2}\right),
\]
where $R_{1,GH}=O_{P}\!\left(G^{-3/4}{\log G}+\left(GH\right)^{-1/2}\sqrt{\log G}\right)$
. Hence, $G\bigl[\widehat{\Omega}_{\text{I}}-\widetilde{\Omega}_{\mathrm{\text{I}}}\bigr]=o_{P}\left(1\right).$

When $H\sigma_{\text{I},1\Gamma}^{2}=O\left(1\right)$, the intersection
term $\widehat{\Omega}_{\mathrm{III,IV}}$ is no longer negligible
and $r_{GH}\asymp GH.$ The main arguments used to prove
\(r_{GH}(\widehat{\Omega}_{\mathrm{I}}-\widetilde{\Omega}_{\mathrm{I}})=o_P(1)\)
also apply to
\(\widehat{\Omega}_{\mathrm{III,IV}}\). In this case, one works with the
cell-level increment
\[
p_{gh}(t)
=
F\!\left((GH)^{-1/2}X_{gh}'t\mid X_{gh},U_g,V_h\right)
-
F\!\left(0\mid X_{gh},U_g,V_h\right).
\]
Conditional on \(\{U_g\}_{g\le G}\), \(\{V_h\}_{h\le H}\), and
\(\{X_{gh}\}_{g\le G,h\le H}\), the corresponding cell-level centered
terms are independent across \((g,h)\). Therefore the same variance and
concentration arguments yield
\(r_{GH}(\widehat{\Omega}_{\mathrm{III,IV}}-\widetilde{\Omega}_{\mathrm{III,IV}})
=o_P(1)\), and the details are omitted. Finally, applying Lemma
\ref{lemma: order of oracle variance} yields that 
\[
r_{GH}(\widetilde{\Omega}-\Omega_{GH})=o_{P}(1).
\]
The application of Slutsky's Lemma with Theorems \ref{thm:1} and
\ref{thm:Jacobian} implies 
\[
\widehat{\Sigma}^{-1/2}\bigl(\hat{\beta}-\beta_{0}(\tau)\bigr)\ \overset{d}{\to}\ \mathcal{N}\!\left(0,1\right).
\]
Given that the above result holds for any convergent subsequence,
the uniformity result then follows along such subsequences. 
\end{proof}
\clearpage{}

\section{Technical Lemmas}

\begin{lemma}[Approximate score equation]\label{lem:approx_score}
Let
\(
\mathbb{S}(\beta)
=
\frac{1}{GH}\sum_{g=1}^{G}\sum_{h=1}^{H}
X_{gh}\Bigl(\tau-\mathbf{1}\{y_{gh}\le X_{gh}^{\top}\beta\}\Bigr).
\)
Under Assumption of Theorem \ref{thm:1}, 
\(
\|\mathbb{S}(\widehat\beta)\|
=
o_{P}\!\left(r_{GH}^{-1/2}\right).
\)
\end{lemma}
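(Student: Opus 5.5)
The plan is to follow the classical subgradient argument for quantile regression (Koenker, 2005, Lemma A.2; Ruppert and Carroll), adapted to the two-way array. The key idea is that the QR objective is piecewise linear and convex. At the minimizer $\widehat\beta$, the directional derivatives of
\[
Q_{GH}(\beta)=\frac{1}{GH}\sum_{g,h}\rho_\tau(y_{gh}-X_{gh}^\top\beta)
\]
must be nonnegative in every direction. Only the observations lying exactly on the fitted hyperplane can prevent the score from vanishing. Their number is $N_{0,GH}$, which is controlled by Assumption \ref{as:hyperplane}.

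\emph{Step 1 (first-order condition).} For each coordinate direction $e_j$, $j=1,\dots,d$, I compute the one-sided directional derivatives $\lim_{\epsilon\downarrow 0}\{Q_{GH}(\widehat\beta\pm\epsilon e_j)-Q_{GH}(\widehat\beta)\}/\epsilon$. Each derivative equals $\mp\mathbb{S}_j(\widehat\beta)$ plus a contribution from the cells with $y_{gh}=X_{gh}^\top\widehat\beta$. That contribution is at most
\[
\frac{1}{GH}\sum_{g,h}\mathbf 1\{y_{gh}=X_{gh}^\top\widehat\beta\}\,|X_{gh,j}|.
\]
The set $\Theta$ is compact and $\beta_0$ is interior (Assumption \ref{as:moment}(v)). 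Consistency of $\widehat\beta$ follows from the rate argument already given in the proof of Theorem \ref{thm:1}, which does not use this lemma. Hence $\widehat\beta$ is interior with probability approaching one, and both directional derivatives are nonnegative. This yields the deterministic inequality
\[
\|\mathbb{S}(\widehat\beta)\|\le \frac{\sqrt d}{GH}\,N_{0,GH}\,\max_{g\le G,h\le H}\|X_{gh}\|.
\]

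\emph{Step 2 (maximal bound on regressors).} By Assumption \ref{as:moment}(i), $E\|X_{gh}\|^q<\infty$. Separate exchangeability makes the $X_{gh}$ identically distributed, although dependent. I therefore use the union bound together with Markov's inequality, which needs no independence:
\[
P\bigl(\max_{g,h}\|X_{gh}\|>\eta (GH)^{1/q}\bigr)\le \sum_{g,h}P\bigl(\|X_{gh}\|>\eta(GH)^{1/q}\bigr)\le \eta^{-q}\,E\bigl[\|X_{11}\|^q\mathbf 1\{\|X_{11}\|>\eta(GH)^{1/q}\}\bigr]\to 0 .
\]
This gives $\max_{g,h}\|X_{gh}\|=o_P((GH)^{1/q})$.

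\emph{Step 3 (combine).} Substituting Step 2 and Assumption \ref{as:hyperplane} into Step 1 gives
\[
\|\mathbb{S}(\widehat\beta)\|\le \frac{\sqrt d}{GH}\,o_P\bigl((GH)^{1-1/q}r_{GH}^{-1/2}\bigr)\,o_P\bigl((GH)^{1/q}\bigr)=o_P(r_{GH}^{-1/2}).
\]
This holds uniformly over $\Gamma$, because the bounds depend only on the moment constant and on the rate in Assumption \ref{as:hyperplane}.

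The main obstacle is Step 1. I need the interior/first-order condition without circularly invoking the $r_{GH}^{1/2}$-rate, whose proof uses this lemma. The plan is to rely only on consistency of $\widehat\beta$, which follows from convexity, pointwise convergence of $Q_{GH}$ via a variance bound of order $r_{GH}^{-1}$, and identification through $D(\tau)>0$. I also have to handle the directional derivative carefully for zero residuals: each cell's contribution lies in $[-\tau,1-\tau]\,|X_{gh,j}|$, so it is bounded by $|X_{gh,j}|$.
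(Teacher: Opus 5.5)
Your proposal is correct and is essentially the paper's proof: the subgradient (first-order) condition at $\widehat\beta$ gives $\|\mathbb{S}(\widehat\beta)\|\le\sqrt{d}\,(GH)^{-1}N_{0,GH}\max_{g,h}\|X_{gh}\|$, and a union-bound/Markov argument gives $\max_{g,h}\|X_{gh}\|=O_P((GH)^{1/q})$, which combined with Assumption~\ref{as:hyperplane} yields the claim. You go beyond the paper in two ways. First, your explicit check that $\widehat\beta$ is interior, so that the first-order condition applies on the compact $\Theta$, is a step the paper skips; there is also no circularity, since the paper's $r_{GH}^{1/2}$-rate argument precedes this lemma and does not use it, contrary to your closing remark. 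Second, your $o_P$ truncation refinement of the maximal bound is unnecessary, and when $\Gamma$ varies with $(G,H)$ it would need uniform integrability of $\|X_{gh}\|^q$ rather than just a moment bound, so it is the plain Markov $O_P$ bound that supports your uniformity remark.
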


\begin{proof}[Proof of Lemma \ref{lem:approx_score}]
Let \(u_{gh}(\beta)=y_{gh}-X_{gh}^{\top}\beta\). By the subgradient
characterization of the convex quantile-regression objective, there
exist numbers \(a_{gh}\in[0,1]\) such that
\[
0
=
\frac{1}{GH}\sum_{g=1}^{G}\sum_{h=1}^{H}
X_{gh}\Bigl(\tau-\mathbf{1}\{u_{gh}(\widehat\beta)<0\}
-a_{gh}\mathbf{1}\{u_{gh}(\widehat\beta)=0\}\Bigr).
\]
Therefore,
\(
\mathbb{S}(\widehat\beta)
=
\frac{1}{GH}\sum_{g=1}^{G}\sum_{h=1}^{H}
X_{gh}\bigl(a_{gh}-1\bigr)
\mathbf{1}\{u_{gh}(\widehat\beta)=0\}.
\)
Given Assumption \ref{as:hyperplane}, only $o\left((GH)^{1-1/q}r_{GH}^{-1/2}\right)$ observations lie exactly on the same
quantile-regression hyperplane with probability converging to 1. Hence
\[
\|\mathbb{S}(\widehat\beta)\|
\le
\frac{1}{GH}
\sum_{g=1}^{G}\sum_{h=1}^{H}
\|X_{gh}\|\mathbf{1}\{u_{gh}(\widehat\beta)=0\}
=
o_P\left((GH)^{-1/q}r_{GH}^{-1/2}\cdot\max_{g\le G,h\le H}\|X_{gh}\|\right).
\]
Provided that $E\|X_{gh}\|^{q}<\infty$, we have $\max_{g\le G,h\le H}\|X_{gh}\|=O_P((GH)^{1/q})$ by Markov inequality and the stated probability orders follow immediately.
\end{proof}

\begin{lemma}[Local stochastic equicontinuity of $\nu_{GH}(\beta)$]
\label{lem: stochastic equicontinuity}Suppose Assumptions of Theorem
\ref{thm:Jacobian} hold. Let $\widehat{D}(\beta):=\frac{1}{GH\ell}\sum_{g=1}^{G}\sum_{h=1}^{H}K\!\left(\frac{y_{gh}-X_{gh}^{\top}\beta}{\ell}\right)\mathfrak{X}_{gh}$,
where $K\!\left(u\right)=\frac{1}{2}\mathbf{1}\left\{ \left|u\right|\le1\right\} $
and $\mathfrak{X}_{gh}=\mathrm{tr}(BX_{gh}X_{gh}^{\top})$ is a scalar
for an arbitrary deterministic matrix $B\in\mathbb{R}^{d\times d}$,
and $\nu_{GH}\left(\beta\right)=\widehat{D}(\beta)-E\left(\widehat{D}(\beta)\right)$,
then 
\[
\nu_{GH}(\widehat{\beta})-\nu_{GH}\left(\beta_{0}\right)=o_{P}\left(r_{GH}^{-1/2}\ell^{-1/2}\right)
\]
uniformly in $\widehat{\beta}$ satisfying $\left\Vert r_{GH}^{1/2}\left(\widehat{\beta}-\beta_{0}\right)\right\Vert \le C_{0}<\infty.$
\end{lemma}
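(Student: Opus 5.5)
The plan is to prove Lemma \ref{lem: stochastic equicontinuity} by the same grid-plus-Bernstein device used in Step 1 of the proof of Theorem \ref{thm:2-1}, applied to the kernel-weighted indicators. Write $\beta=\beta_{0}+r_{GH}^{-1/2}t$ with $\|t\|\le C_{0}$. Since $K$ is the uniform kernel,
\[
\nu_{GH}(\beta)-\nu_{GH}(\beta_{0})=\frac{1}{2GH\ell}\sum_{g,h}\Bigl(\xi_{gh}(t)-E\xi_{gh}(t)\Bigr),
\]
where $\xi_{gh}(t):=\bigl(\mathbf 1\{|e_{gh}-r_{GH}^{-1/2}X_{gh}^{\top}t|\le\ell\}-\mathbf 1\{|e_{gh}|\le\ell\}\bigr)\mathfrak X_{gh}$. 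The bracket is nonzero only when $e_{gh}$ lies within $r_{GH}^{-1/2}\|X_{gh}\|C_{0}$ of one of the endpoints $\pm\ell$. By the bounded conditional density in Assumption \ref{as:bandwidth}(i), which holds near $\pm\ell$ once $\ell\le\varepsilon_{0}$, together with the conditional moment bounds of Assumption \ref{as:bandwidth}(ii), this gives
\[
E[\xi_{gh}(t)^{2}\mid U_{g},V_{h}]\lesssim r_{GH}^{-1/2}.
\]

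The first main step is a variance bound at a fixed $t$ using the Hoeffding decomposition of $\xi_{gh}(t)$ into row, column, interaction and idiosyncratic parts, exactly as in Term II of the proof of Theorem \ref{thm:Jacobian}.
\begin{itemize}
\item The idiosyncratic part is conditionally independent given $\mathcal F_{GH}$. After dividing by $GH\ell$, its variance is $\lesssim r_{GH}^{-1/2}/(GH\ell^{2})$.
\item The row and column projections $E[\xi_{gh}(t)\mid U_{g}]$ and $E[\xi_{gh}(t)\mid V_{h}]$ are differences of smooth conditional CDFs. By a mean-value argument they are $O(r_{GH}^{-1/2}\|X\|)$ in size, so after the $1/\ell$ scaling they contribute variance $\lesssim r_{GH}^{-1}\ell^{-2}(G^{-1}+H^{-1})$.
\end{itemize}
Multiplying by $r_{GH}\ell$, the target ratio is
\[
\frac{r_{GH}^{1/2}}{GH\,\ell}+\frac{1}{R\,\ell}\longrightarrow0,
\]
using $r_{GH}\le GH$ and the bandwidth condition $R^{1/2}\ell/\log R\to\infty$ in Assumption \ref{as:bandwidth}(iv). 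This gives the pointwise $o_{P}(r_{GH}^{-1/2}\ell^{-1/2})$ bound.

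The second step upgrades this to uniformity over $\|t\|\le C_{0}$. Cover the ball by $N\asymp R^{d/4}$ cubes of side $b=R^{-1/4}$.
\begin{itemize}
\item \emph{Grid points.} Apply conditional Bernstein inequalities given $\mathcal F_{GH}$ for the idiosyncratic piece. Use the envelope $|\xi_{gh}|\lesssim\max\|X\|^{2}$, which is polynomially bounded by Assumption \ref{as:moment-strong}(i) (or controlled by truncation using the $q$-th moment). Take a union bound over the $N$ points, paying only a $\log R$ factor, which the bandwidth condition absorbs.
\item \emph{Row and column pieces.} These are i.i.d.\ averages over $g$ (respectively $h$) of Lipschitz-in-$t$ smooth functions, so they are handled uniformly by the same Bernstein-plus-grid argument, or directly by Lipschitz continuity in $t$.
\end{itemize}

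The third step controls the oscillation within each cube, by monotone bracketing as in Stage II. Because $\mathbf 1\{|e-s|\le\ell\}=\mathbf 1\{e\le s+\ell\}-\mathbf 1\{e<s-\ell\}$, each indicator splits into monotone pieces in $s=r_{GH}^{-1/2}X^{\top}t$. Hence the within-cube increment is sandwiched by the values at the cube corners $t_{j}\pm b$.
\begin{itemize}
\item The centered bracket is again bounded by Bernstein.
\item Its mean is $\lesssim r_{GH}^{-1/2}b\,E\|X\|^{3}/\ell$, which is $o(r_{GH}^{-1/2}\ell^{-1/2})$ provided $b\,\ell^{-1/2}\to0$, that is, $R^{-1/4}\ell^{-1/2}\to0$. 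This follows from $R^{1/2}\ell\to\infty$.
\end{itemize}

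The expected main obstacle is the row and column pieces when clustering is strong, where $r_{GH}$ can be as small as order $G$. A crude bound of order $r_{GH}^{-1/2}\ell^{-1}R^{-1/2}$ must then be beaten by $r_{GH}^{-1/2}\ell^{-1/2}$, which requires $R^{-1/2}\ell^{-1/2}\to0$. This holds under Assumption \ref{as:bandwidth}(iv), but it must be checked uniformly over the grid and over the DGPs. To make this uniform, I would first work along convergent subsequences, as in the proof of Theorem \ref{thm:1}, and then pass to uniformity.
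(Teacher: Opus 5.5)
Your proposal is correct in substance, and your first step is the paper's own. The paper also writes the kernel difference as signed indicator sums near $\pm\ell$ and applies the two-way Hoeffding decomposition. It obtains $\sup_{\|t\|\le C_0}\mathrm{Var}\lesssim (G^{-1}+H^{-1})r_{GH}^{-1}+(GH)^{-1}r_{GH}^{-1/2}$ for the unscaled average, which is exactly your ratio $(R\ell)^{-1}+r_{GH}^{1/2}/(GH\ell)\to0$.

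The routes part at the uniformization step. The paper uses the Chebyshev bound only to secure $\theta_{GH}>1/2$. It then symmetrizes $\sum_{g,h}(I_{gh}(t)-E I_{gh}(t))$ with cell-level Rademacher signs. Given the data, the supremum over $\|t\|\le C_0$ is a maximum over at most $C_1(GH)^{V_{\mathcal J}-1}$ distinct sums, by Sauer's lemma for the VC class of bands. The paper applies Hoeffding conditionally and controls the random proxy $\nu_{GH}$ by Markov. This buys a dimension-free argument with no grid, no bracketing and no envelope for $X_{gh}$.

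Its weak point is that the cited symmetrization lemma is for independent summands, whereas $I_{gh}(t)$ and $I_{gh'}(t)$ share $U_g$. Your split fixes this: the idiosyncratic part is genuinely independent given $\mathcal F_{GH}$ and gets a conditional Bernstein bound, while the $\mathcal F_{GH}$-measurable projections are handled through their i.i.d. row/column structure. That separation is what makes an exponential bound legitimate under two-way dependence, and it mirrors Step 1 of the proof of Theorem \ref{thm:2-1}.

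The price is some bookkeeping you should tighten.

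(i) Assumption \ref{as:moment-strong}(i) is not a hypothesis of this lemma, so you must take the truncation route, and at a chosen level. The crude envelope $\max_{g,h}\|X_{gh}\|^2=O_P((GH)^{2/q})$ does not close Bernstein when $q=5$, $G\asymp H$, $r_{GH}\asymp GH$ and $\ell$ is near its lower limit. Instead, truncate $\mathfrak X_{gh}$ at a level $M$ with $\ell^{-1/2}\ll M\ll (GH)^{1/2}\ell^{1/2}/\log R$; this is possible because $(GH)^{1/2}\ell/\log R\to\infty$. Bound the remainder uniformly in $t$ by the $t$-free average of $|\mathfrak X_{gh}|\mathbf 1\{|\mathfrak X_{gh}|>M\}\mathbf 1\{\bigl|\,|e_{gh}|-\ell\,\bigr|\le C_0r_{GH}^{-1/2}\|X_{gh}\|\}$. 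Its mean is $\lesssim r_{GH}^{-1/2}E\|X_{gh}\|^5/M=o(r_{GH}^{-1/2}\ell^{1/2})$.

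(ii) Your bound $E[\xi_{gh}(t)^2\mid U_g,V_h]\lesssim r_{GH}^{-1/2}$ needs a conditional fifth moment, but Assumption \ref{as:bandwidth}(ii) supplies only the fourth. Control the average conditional variance by a $t$-free quantity and Markov instead.

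(iii) The interaction projection is missing from your uniform step. The simplest fix is McDiarmid's inequality applied to the whole $\mathcal F_{GH}$-measurable part $(GH)^{-1}\sum_{g,h}\{E[\xi_{gh}(t)\mid U_g,V_h]-E\xi_{gh}(t)\}$. Its bounded differences are $O(r_{GH}^{-1/2}G^{-1})$ in each $U_g$ and $O(r_{GH}^{-1/2}H^{-1})$ in each $V_h$. At level $r_{GH}^{-1/2}\ell^{1/2}$ the exponent is therefore of order $R\ell\gg\log R$. This also covers the row and column pieces.

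(iv) For $d>1$, $X_{gh}^\top t$ is not monotone in $t$. Bracket with $r_{GH}^{-1/2}(X_{gh}^\top t_j\pm\|X_{gh}\|_1 b)$ rather than evaluating at $t_j\pm b$.
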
 
\begin{proof}
Rearranging terms, it suffices to show 
\[
\sup_{\|t\|\le C_{0}}\left|\big(\widehat{D}(\beta_{0}+r_{GH}^{-1/2}t)-\widehat{D}(\beta_{0})\big)-E\big[\widehat{D}(\beta_{0}+r_{GH}^{-1/2}t)-\widehat{D}(\beta_{0})\big]\right|=o_{P}(r_{GH}^{-1/2}\ell^{-1/2}).
\]
Since $K$ is the uniform kernel, the difference of kernels becomes
a finite signed sum of indicators. Explicitly, multiplied by $\ell$,
one can write 
\begin{align*}
\ell\Big(\widehat{D}(\beta_{0}+r_{GH}^{-1/2}t)-\widehat{D}(\beta_{0})\Big) & =\frac{1}{2GH}\sum_{g=1}^{G}\sum_{h=1}^{H}\mathfrak{X}_{gh}\left\{ \mathbf{1}\Big(\left|e_{gh}-r_{GH}^{-1/2}X_{gh}^{\top}t\right|\le\ell\Big)-\mathbf{1}\Big(\left|e_{gh}\right|\le\ell\Big)\right\} \\
 & =D_{1,GH}+D_{2,GH}+D_{3,GH}+D_{4,GH},
\end{align*}
where
\begin{align*}
D_{1,GH}\left(t\right) & :=\frac{1}{2GH}\sum_{g,h}\mathfrak{X}_{gh}\mathbf{1}\left\{ \ell<e_{gh}\le\ell+r_{GH}^{-1/2}X_{gh}^{\top}t\right\} ,\\
D_{2,GH}\left(t\right) & :=-\frac{1}{2GH}\sum_{g,h}\mathfrak{X}_{gh}\mathbf{1}\left\{ \ell+r_{GH}^{-1/2}X_{gh}^{\top}t<e_{gh}\leq\ell\right\} ,\\
D_{3,GH}\left(t\right) & :=-\frac{1}{2GH}\sum_{g,h}\mathfrak{X}_{gh}\mathbf{1}\left\{ -\ell\leq e_{gh}<-\ell+r_{GH}^{-1/2}X_{gh}^{\top}t\right\} ,\\
D_{4,GH}\left(t\right) & :=\frac{1}{2GH}\sum_{g,h}\mathfrak{X}_{gh}\mathbf{1}\left\{ -\ell+r_{GH}^{-1/2}X_{gh}^{\top}t\leq e_{gh}<-\ell\right\},
\end{align*}
by treating $X_{gh}^\top t>0$ for simplicity. We now show that each term $D_{\bullet,GH}\left(t\right)-E\left[D_{\bullet,GH}\left(t\right)\right]=o_{P}(r_{GH}^{-1/2}\ell^{1/2})$
uniformly in $\|t\|\le C_{0}$; we treat $D_{1,GH}$, and the others
follow identically. Define $I_{gh}(t):=\tfrac{1}{2}\mathfrak{X}_{gh}\mathbf{1}\{\ell<e_{gh}\le\ell+r_{GH}^{-1/2}X_{gh}^{\top}t\}$.
Then $D_{1,GH}=(GH)^{-1}\sum_{g,h}I_{gh}(t)$.

Apply the two-way Hoeffding/ANOVA decomposition: 
\[
I_{gh}(t)=I_{g\cdot}^{(\mathrm{I})}(t)+I_{\cdot h}^{(\mathrm{II})}(t)+I_{gh}^{(\mathrm{III,IV})}(t)+E\left[I_{gh}(t)\right],
\]
where 
\begin{align*}
I_{g\cdot}^{(\mathrm{I})}(t) & :=E[I_{gh}(t)\vert U_{g}]-E\left[I_{gh}(t)\right],\\
I_{\cdot h}^{(\mathrm{II})}(t) & :=E[I_{gh}(t)\vert V_{h}]-E\left[I_{gh}(t)\right],\\
I_{gh}^{(\mathrm{III,IV})}(t) & :=I_{gh}(t)-E[I_{gh}(t)\vert U_{g}]-E[I_{gh}(t)\vert V_{h}]+E\left[I_{gh}(t)\right].
\end{align*}
Hence, we have 
\begin{equation}
D_{1,GH}\left(t\right)-E\left[D_{1,GH}\left(t\right)\right]=\frac{1}{G}\sum_{g}I_{g\cdot}^{(\mathrm{I})}(t)+\frac{1}{H}\sum_{h}I_{\cdot h}^{(\mathrm{II})}(t)+\frac{1}{GH}\sum_{g,h}I_{gh}^{(\mathrm{III,IV})}(t).\label{eq: Agh}
\end{equation}
Taking variances and using the orthogonality of the projections yields
\[
\mathrm{Var}\!\left(\frac{1}{GH}\sum_{g,h}(I_{gh}(t)-E\left[I_{gh}(t)\right])\right)=\frac{1}{G}Var\left(I_{g\cdot}^{(\mathrm{I})}(t)\right)+\frac{1}{H}Var\left(I_{\cdot h}^{(\mathrm{II})}(t)\right)+\frac{1}{GH}Var\left(I_{gh}^{(\mathrm{III,IV})}(t)\right).
\]
Here, we apply the fact that, by conditioning on $(V_{h},V_{h'})$
for $h\neq h'$, one has 
\begin{align*}
E\left(I_{gh}^{(\mathrm{III,IV})}(t)I_{gh'}^{(\mathrm{III,IV})}(t)\right)=E\left(E\left(I_{gh}^{(\mathrm{III,IV})}(t)I_{gh'}^{(\mathrm{III,IV})}(t)\vert V_{h},V_{h'}\right)\right)\\
=E\left(E\left(I_{gh}^{(\mathrm{III,IV})}(t)\vert V_{h}\right)E\left(I_{gh'}^{(\mathrm{III,IV})}(t)\vert V_{h'}\right)\right)=0,
\end{align*}
and similarly $E\left(I_{gh}^{(\mathrm{III,IV})}(t)I_{g'h}^{(\mathrm{III,IV})}(t)\right)=0$
for $g\neq g'$. Thus 
\begin{equation}
\mathrm{Var}\!\left(\frac{1}{GH}\sum_{g,h}I_{gh}^{(\mathrm{III,IV})}(t)\right)=\frac{1}{(GH)^{2}}\sum_{g,h}E[(I_{gh}^{(\mathrm{III,IV})}(t))^{2}]=\frac{1}{GH}E[(I_{gh}^{(\mathrm{III,IV})}(t))^{2}].\label{eq:Igh variance}
\end{equation}

Next we bound the second moments uniformly over $\|t\|\le C_{0}$.
Fix any $\|t\|\le C_{0}$, 
\begin{align*}
Var\left(I_{g\cdot}^{(\mathrm{I})}(t)\right)= & E\left(E\left[\tfrac{1}{2}\mathfrak{X}_{gh}\left\{ F_{e\vert X,U}\left(\ell+r_{GH}^{-1/2}X_{gh}^{\top}t\right)-F_{e\vert X,U}\left(\ell\right)\right\} \vert U_{g}\right]^{2}\right)\\
 & -E\left(E\left[\tfrac{1}{2}\mathfrak{X}_{gh}\left\{ F_{e\vert X,U}\left(\ell+r_{GH}^{-1/2}X_{gh}^{\top}t\right)-F_{e\vert X,U}\left(\ell\right)\right\} \vert U_{g}\right]\right)^{2}\\
= & E\left(E\left[\tfrac{1}{2}\mathfrak{X}_{gh}\left\{ r_{GH}^{-1/2}X_{gh}^{\top}tf_{e\vert X,U}\left(\ell+r_{GH}^{-1/2}X_{gh}^{\top}\bar{t}\right)\right\} \vert U_{g}\right]^{2}\right)\\
 & -E\left(E\left[\tfrac{1}{2}\mathfrak{X}_{gh}\left\{ r_{GH}^{-1/2}X_{gh}^{\top}tf_{e\vert X,U}\left(\ell+r_{GH}^{-1/2}X_{gh}^{\top}\bar{t}\right)\right\} \vert U_{g}\right]\right)^{2}\\
\lesssim & r_{GH}^{-1}E\left(E\left[\mathfrak{X}_{gh}\left\Vert X_{gh}\right\Vert^2 \vert U_{g}\right]^{2}\right).
\end{align*}
Here, the second equality holds by the mean value theorem and the last 
inequality holds by the uniform bound of $f_{e|X,U}(e\vert X_{gh},U_{g})$
near $e=0$. Given Assumption \eqref{as:bandwidth}(ii) and the right hand side does not depend on $t$, we have 
\[
\sup_tVar\left(I_{g\cdot}^{(\mathrm{I})}(t)\right)\lesssim r_{GH}^{-1},\qquad \sup_tVar\left(I_{\cdot h}^{(\mathrm{II})}(t)\right)\lesssim r_{GH}^{-1}.
\]

For $\sup_t Var\left(I_{gh}^{(\mathrm{III,IV})}(t)\right)$, given $I_{gh}(t)^{2}\le\tfrac{1}{4}\mathfrak{X}_{gh}^{2}\mathbf{1}\{\ell<e_{gh}\le\ell+r_{GH}^{-1/2}\left\Vert X_{gh}\right\Vert C_{0}\}$,
we have

\begin{align}
\sup_{\|t\|\le C_{0}}E\left(I_{gh}(t)^{2}\right)=\frac{1}{4}E\!\left[\mathfrak{X}_{gh}^{2}\int_{\ell}^{\ell+C_{0}r_{GH}^{-1/2}\|X_{gh}\|}f_{e|X}(e\vert X_{gh})\,de\right]\le C\,r_{GH}^{-1/2}\,E\!\left[\mathfrak{X}_{gh}^{2}\,\|X_{gh}\|\right].\label{eq: Agh 2}
\end{align}
By conditional Jensen, the same bound (up to constants) holds for
$\sup_tVar\left(I_{gh}^{(\mathrm{III,IV})}(t)\right)$. Consequently, we have {\small{}
\begin{align}
\sup_{\|t\|\le C_{0}}\mathrm{Var}\!\left(\frac{1}{GH}\sum_{g,h}I_{gh}(t)\right)\le & \sup_{\|t\|\le C_{0}}\frac{1}{G}\mathrm{Var}\!\left(I_{g\cdot}^{(\mathrm{I})}(t)\right)+\sup_{\|t\|\le C_{0}}\frac{1}{H}\mathrm{Var}\!\left(I_{\cdot h}^{(\mathrm{II})}(t)\right)+\sup_{\|t\|\le C_{0}}\frac{1}{GH}\mathrm{Var}\!\left(\sum_{g,h}I_{gh}^{(\mathrm{III,IV})}(t)\right)\nonumber \\
\lesssim & \left(\frac{1}{G}+\frac{1}{H}\right)r_{GH}^{-1}+\frac{1}{GH}r_{GH}^{-1/2}.\label{eq: bound of Var I_gh}
\end{align}
}{\small\par}

To convert this variance control into a uniform stochastic bound,
we use symmetrization. Fix $\varepsilon>0$ and let $\{\eta_{gh}\}$
be Rademacher variables independent of the data and i.i.d. over $g$
and $h$. A standard symmetrization argument (Lemma 2.3.7 of van der
\citet{van1996weak}) yields 
\begin{align}
\theta_{GH}P\!\left(\sup_{\|t\|\le C_{0}}\left|\sum_{g,h}\frac{1}{GH}(I_{gh}(t)-E\left[I_{gh}(t)\right])\right|>r_{GH}^{-1/2}\ell^{1/2}\varepsilon\right)\nonumber \\
\le2\,P\!\Biggl(\sup_{\|t\|\le C_{0}} & \left|\frac{1}{GH}\sum_{g,h}\eta_{gh}I_{gh}(t)\right|>\frac{r_{GH}^{-1/2}\ell^{1/2}\varepsilon}{4}\Biggl),\label{eq:Igh 1}
\end{align}
where $\theta_{GH}:=1-\sup_{\|t\|\le C_{0}}P\left(\left|\sum_{g,h}\frac{1}{GH}(I_{gh}(t)-E\left[I_{gh}(t)\right])\right|>\frac{r_{GH}^{-1/2}\ell^{1/2}\varepsilon}{2}\right)$.
Applying Chebyshev's inequality with the bound of $\sup_{\|t\|\le C_{0}}\mathrm{Var}\!\left(\frac{1}{GH}\sum_{g,h}I_{gh}(t)\right)$
in \eqref{eq: bound of Var I_gh} yields that 
\begin{align*}
\sup_{\|t\|\le C_{0}}P\left(\left|\sum_{g,h}\frac{1}{GH}(I_{gh}(t)-E\left[I_{gh}(t)\right])\right|>\frac{r_{GH}^{-1/2}\ell^{1/2}\varepsilon}{2}\right)\le & \sup_{\|t\|\le C_{0}}\frac{4Var\left(\sum_{g,h}\frac{1}{GH}I_{gh}(t)\right)}{r_{GH}^{-1}\ell\varepsilon^{2}}\\
\lesssim & \left(\frac{1}{G}+\frac{1}{H}\right)\frac{1}{\ell}+\frac{r_{GH}^{1/2}}{GH}\frac{1}{\ell}=o\left(1\right),
\end{align*}
where the last equality holds given $R\ell\to\infty$ and $r_{GH}^{1/2}=O\left(\sqrt{GH}\right)$.
Therefore, $\theta_{GH}>1/2$ as $R\to\infty$.

Now condition on the data $\mathcal{G}_{GH}:=\{(X_{gh},e_{gh}):1\le g\le G,\,1\le h\le H\}$
and fix $G$ and $H$. At most finite elements are contained in the
functional set $\left\{ \left\{ \eta_{gh}\right\} \mapsto\frac{1}{GH}\sum_{g,h}\eta_{gh}I_{gh}(t):\left\Vert t\right\Vert \le C_{0}\right\} $,
since every element is of the form $\left\{ \eta_{gh}\right\} \mapsto\frac{1}{GH}\sum_{\left(g,h\right)\in S\left(t\right)}\frac{1}{2}\mathfrak{X}_{gh}$,
where $S\left(t\right)$ is a subset of $\left\{ 1,\ldots,G\right\} \times\left\{ 1,\ldots,H\right\} $.
Let $J_{GH}$ be the cardinality of this set. Then the conditional
supremum is a maximum over $J_{GH}$ elements, so by union bound 
\begin{align}
P\!\left(\sup_{\|t\|\le C_{0}}\left|\frac{1}{GH}\sum_{g,h}\eta_{gh}I_{gh}(t)\right|>\frac{r_{GH}^{-1/2}\ell^{1/2}\varepsilon}{4}\,\Big|\,\mathcal{G}_{GH}\right)\nonumber \\
\le\sum_{j=1}^{J_{GH}}P\!\Biggl( & \left|\frac{1}{GH}\sum_{g,h}\eta_{gh}I_{gh}(t_{j})\right|>\frac{r_{GH}^{-1/2}\ell^{1/2}\varepsilon}{4}\,\Big|\,\mathcal{G}_{GH}\Biggl),\label{eq:Igh 2}
\end{align}
for some representatives $\{t_{j}\}_{j=1}^{J_{GH}}$.

For each fixed $t$, conditional on $\mathcal{G}_{GH}$, the variables
$\eta_{gh}I_{gh}(t)$ are independent over $g$ and $h$ and bounded.
Thus Hoeffding's inequality gives 
\[
P\!\left(\left|\sum_{g,h}\frac{1}{GH}\eta_{gh}I_{gh}(t)\right|>\frac{r_{GH}^{-1/2}\ell^{1/2}\varepsilon}{4}\,\Big|\,\mathcal{G}_{GH}\right)\le2\exp\!\left(-\frac{GHr_{GH}^{-1}\ell\,\varepsilon^{2}}{8\,\nu_{GH}}\right).
\]
where $\nu_{GH}:=\frac{1}{GH}\sum_{g,h}\mathfrak{X}_{gh}^{2}\mathbf{1}\{\ell<e_{gh}\le\ell+r_{GH}^{-1/2}\left\Vert X_{gh}\right\Vert C_{0}\}.$
Next we bound $J_{GH}$ via VC theory as $G$ and $H$ grow. The collection
$\{(x,e):\ell<e\le\ell+r_{GH}^{-1/2}x^{\top}t,\|t\|\le C_{0}\}$ is
a VC class of sets with some finite dimension $V_{\mathcal{J}}\in(0,\infty)$
by Lemma 2.6.15 of \citet{van1996weak}. Hence Sauer's lemma yields
$J_{GH}\le C_{1}(GH)^{V_{\mathcal{J}}-1}$.

Combining the above, 
\begin{equation}
P\!\left(\sup_{\|t\|\le C_{0}}\left|\frac{1}{GH}\sum_{g,h}\eta_{gh}I_{gh}(t)\right|>\frac{r_{GH}^{-1/2}\ell^{1/2}\varepsilon}{4}\,\Big|\,\mathcal{G}_{GH}\right)\le2C_{1}(GH)^{V_{\mathcal{J}}-1}\exp\!\left(-\frac{GHr_{GH}^{-1}\ell\,\varepsilon^{2}}{8\,\nu_{GH}}\right).\label{eq: Agh3}
\end{equation}
Finally, define the event $E_{GH}:=\Big\{\nu_{GH}>\frac{GHr_{GH}^{-1}\varepsilon^{2}\ell}{8V_{\mathcal{J}}\log(GH)}\Big\}$.
By the law of total probability we split the unconditional probability
into the contributions from $E_{GH}$ and $E_{GH}^{c}$: 
\begin{align}
 & P\left(\sup_{\|t\|\le C_{0}}\left|\frac{1}{GH}\sum_{g=1}^{G}\sum_{h=1}^{H}\eta_{gh}I_{gh}\left(t\right)\right|>\frac{r_{GH}^{-1/2}\ell^{1/2}\varepsilon}{4}\right)\nonumber \\
 & =E\left(P\left(\sup_{\|t\|\le C_{0}}\left|\frac{1}{GH}\sum_{g=1}^{G}\sum_{h=1}^{H}\eta_{gh}I_{gh}\left(t\right)\right|>\frac{r_{GH}^{-1/2}\ell^{1/2}\varepsilon}{4}\vert\mathcal{G}_{GH}\right)\mathbf{1}\left(E_{GH}\right)\right)\nonumber \\
 & \quad+E\left(P\left(\sup_{\|t\|\le C_{0}}\left|\frac{1}{GH}\sum_{g=1}^{G}\sum_{h=1}^{H}\eta_{gh}I_{gh}\left(t\right)\right|>\frac{r_{GH}^{-1/2}\ell^{1/2}\varepsilon}{4}\vert\mathcal{G}_{GH}\right)\mathbf{1}\left(E_{GH}^{c}\right)\right).\label{eq:Igh 3}
\end{align}
On $E_{GH}^{c}$ we have $V_{\mathcal{J}}\log(GH)\le\frac{GHr_{GH}^{-1}\varepsilon^{2}\ell}{8\nu_{GH}}$,
hence $(GH)^{-V_{\mathcal{J}}}\ge\exp\left(-\frac{GHr_{GH}^{-1}\varepsilon^{2}\ell}{8\nu_{GH}}\right)$.
Therefore, together with \eqref{eq: Agh3}, we have 
\begin{align}
 & E\left(P\left(\sup_{\|t\|\le C_{0}}\left|\frac{1}{GH}\sum_{g=1}^{G}\sum_{h=1}^{H}\eta_{gh}I_{gh}\left(t\right)\right|>\frac{r_{GH}^{-1/2}\ell^{1/2}\varepsilon}{4}\vert\mathcal{G}_{GH}\right)\mathbf{1}\left(E_{GH}^{c}\right)\right)\nonumber \\
 & \leq E\!\left[2C_{1}(GH)^{V_{\mathcal{J}}-1}\exp\!\left(-\frac{GHr_{GH}^{-1}\varepsilon^{2}\ell}{8\nu_{GH}}\right)\mathbf{1}(E_{GH}^{c})\right]\le2C_{1}(GH)^{-1}\to0.\label{eq:Igh 4}
\end{align}
On $E_{GH}$, for some $\delta>0$, Markov's inequality, the expectation
$E\left(\nu_{GH}\right)\le C_{0}\,r_{GH}^{-1/2}\,E\![\mathfrak{X}_{gh}^{2}\|X_{gh}\|G_{0}(X_{gh})]\lesssim r_{GH}^{-1/2}$,
 $\frac{GH\ell^{2}}{(\log (GH))^{2}}>\frac{R\ell^{2}}{(\log R)^{2}}\to\infty$, and $r_{GH}=O(GH)$ together
imply that 
\begin{align}
 & E\left(P\left(\sup_{\|t\|\le C_{0}}\left|\frac{1}{GH}\sum_{g=1}^{G}\sum_{h=1}^{H}\eta_{gh}I_{gh}\left(t\right)\right|>\frac{r_{GH}^{-1/2}\ell^{1/2}\varepsilon}{4}\vert\mathcal{G}_{GH}\right)\mathbf{1}\left(E_{GH}\right)\right)\nonumber \\
\le & P(E_{GH})=P\!\left(\nu_{GH}\log(GH)\ell^{-1}\frac{r_{GH}}{GH}>\frac{\varepsilon^{2}}{8V_{\mathcal{J}}}\right)\lesssim\ell^{-1}\log(GH)\,r_{GH}^{-1/2}\frac{r_{GH}}{GH}=o(1).\label{eq:Igh 5}
\end{align}

Collecting terms \eqref{eq:Igh 1}, \eqref{eq:Igh 3}-\eqref{eq:Igh 5} and $\theta_{GH}>1/2$ yields that $\sup_{\|t\|\le C_{0}}|\frac{1}{GH}\sum_{g,h}(I_{gh}(t)-E\left[I_{gh}(t)\right])|=o_{P}\left(r_{GH}^{-1/2}\ell^{1/2}\right).$
Therefore $D_{1,GH}\left(t\right)-E\left[D_{1,GH}\left(t\right)\right]=o_{P}(r_{GH}^{-1/2}\ell^{1/2})$ uniformly over
$\|t\|\le C_{0}$. The same argument applies to $D_{2,GH}$,$D_{3,GH}$,
and $D_{4,GH}$ which further implies the desirable result. 
\end{proof}
\begin{lemma}[Stochastic equicontinuity of $\nu_{S}(\beta)$] \label{lemma: stochastic equi main normal}Under
Assumptions of Theorem \ref{thm:1}, let $\nu_{S}\left(\beta\right)=r_{GH}^{1/2}\left(\mathcal{S}\left(\beta\right)-\mathbb{S}\left(\beta\right)\right)$,
where 
\[
\mathbb{S}(\beta)=\frac{1}{GH}\sum_{g=1}^{G}\sum_{h=1}^{H}\psi_{gh}(\beta)=\frac{1}{GH}\sum_{g=1}^{G}\sum_{h=1}^{H}X_{gh}\Bigl(\tau-\mathbf{1}\{y_{gh}\le X_{gh}^{\top}\beta\}\Bigr)
\]
and $\mathcal{S}\left(\beta\right)=E\left[\mathbb{S}(\beta)\right]$ (w.l.o.g, let $d=1$).
Then, for all $\eta>0$ and $\varepsilon>0$, there is some $\delta>0$
such that 
\[
\limsup_{G,H\to\infty}P\left[\sup_{\left\vert \beta_{1}-\beta_{2}\right\vert \le\delta}\left\vert \nu_{S}\left(\beta_{1}\right)-\nu_{S}\left(\beta_{2}\right)\right\vert >\eta\right]\le\varepsilon.
\]
\end{lemma}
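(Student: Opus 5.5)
The plan is to prove the stochastic equicontinuity of $\nu_S$ by decomposing the centered score increment with the same two-way Hoeffding projection used for $\widehat D$ in Lemma \ref{lem: stochastic equicontinuity}, bounding each projection's fluctuations uniformly over a $\delta$-neighbourhood, and checking that after multiplication by $r_{GH}^{1/2}$ each piece is small in probability.

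\emph{Step 1 (reduction and decomposition).} With $d=1$, set
\[
m_{gh}(\beta_1,\beta_2):=X_{gh}\bigl(\mathbf 1\{y_{gh}\le X_{gh}\beta_2\}-\mathbf 1\{y_{gh}\le X_{gh}\beta_1\}\bigr),
\]
so that $\nu_S(\beta_1)-\nu_S(\beta_2)=r_{GH}^{1/2}(GH)^{-1}\sum_{g,h}(m_{gh}-Em_{gh})$. Decompose $m_{gh}-Em_{gh}=m^{(\mathrm I)}_g+m^{(\mathrm{II})}_h+m^{(\mathrm{III,IV})}_{gh}$ by conditioning on $U_g$, on $V_h$, and taking the remainder. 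Orthogonality then gives, as in \eqref{eq:Igh variance},
\[
\mathrm{Var}\Bigl(\tfrac{1}{GH}\textstyle\sum_{g,h} m_{gh}\Bigr)=\tfrac1G\mathrm{Var}(m^{(\mathrm I)})+\tfrac1H\mathrm{Var}(m^{(\mathrm{II})})+\tfrac1{GH}E\bigl[(m^{(\mathrm{III,IV})})^2\bigr].
\]

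\emph{Step 2 (modulus of continuity for each component).} The row and column projections are smooth in $\beta$: $E[m_{gh}\mid U_g]=E[X_{gh}(F_{e\mid X,U}(X_{gh}(\beta_2-\beta_0))-F_{e\mid X,U}(X_{gh}(\beta_1-\beta_0)))\mid U_g]$, so bounded conditional densities give $|m^{(\mathrm I)}_g(\beta_1,\beta_2)|\lesssim|\beta_1-\beta_2|\,E[X_{gh}^2\mid U_g]$. Hence the row part is $|\beta_1-\beta_2|$ times an average of i.i.d.\ mean-zero variables. Relating it to $\Psi^{(\mathrm I)}$-scale quantities through Assumption \ref{as:homogeneous}, its contribution after scaling is bounded by $\delta\cdot O_P(1)$, uniformly over the class because it is linear in $\beta_1-\beta_2$ up to a second-order Taylor term. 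The column part is treated symmetrically. For the interaction remainder, $E[(m^{(\mathrm{III,IV})})^2]\lesssim E[X^2\mathbf 1\{|e|\le |X|\delta'\}]\lesssim\delta$, where $\delta'$ is the distance of $\beta_1,\beta_2$ from $\beta_0$. The scaled variance is therefore $r_{GH}(GH)^{-1}\delta\le\delta$, since $r_{GH}\le GH$.

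\emph{Step 3 (uniformity over the supremum).} This is the main obstacle. Pointwise variance bounds do not control $\sup_{|\beta_1-\beta_2|\le\delta}$, and the dependence is two-way. For the $\mathrm{III,IV}$ part I would repeat the symmetrization argument of Lemma \ref{lem: stochastic equicontinuity}: conditional on $\mathcal F_{GH}$ the cell terms are independent. Since $\{(x,e):e\le x\beta\}$ is VC, the Sauer bound on the number of distinct patterns combines with Hoeffding's inequality to yield a chaining bound of order $\sqrt{\delta\log(GH)}$ in the envelope, with the envelope's second moment controlled through Assumption \ref{as:moment}(i). A blocking or bracketing refinement, using a grid of mesh $\delta$ in $\beta$ together with monotonicity of indicators as in Stage II of the proof of Theorem \ref{thm:2-1}, removes the logarithm. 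The row and column parts are Lipschitz in $\beta$ with random Lipschitz constants that are $O_P(1)$ after scaling, so their supremum is at most $\delta\,O_P(1)$.

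\emph{Step 4 (conclusion).} Given $\varepsilon,\eta$, choose $\delta$ small enough that Markov's inequality applied to the bounds in Steps 2--3 makes the limsup probability at most $\varepsilon$. The argument is run along convergent subsequences as in the proof of Theorem \ref{thm:1}.
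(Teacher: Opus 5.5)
Your architecture is the same as the paper's: a Hoeffding split, row and column pieces handled by i.i.d.-in-$g$/$h$ equicontinuity, and the cell piece handled by conditional independence plus a VC/symmetrization bound. However, the step that makes the row and column pieces small fails. After scaling, the row piece is $\sqrt{r_{GH}/G}\cdot G^{-1/2}\sum_g m^{(\mathrm I)}_g(\beta_1,\beta_2)$. The size of $m^{(\mathrm I)}_g$ is governed by the row variation of the local Jacobian $E[X_{gh}^2 f_{e\mid X,U}(\cdot\mid X_{gh},U_g)\mid U_g]$, not by $\sigma^2_{\mathrm I,\Gamma}=\mathrm{Var}(\Psi^{(\mathrm I)})$, which is evaluated at $\beta_0$. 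Assumption \ref{as:homogeneous} only compares eigenvalues within a single $\sigma^2_{j,\Gamma}$, so it cannot link the two. Since $\sqrt{r_{GH}/G}$ can be as large as $\sqrt H$, and $\sqrt{r_{GH}/H}$ as large as $\sqrt{G/H}/\sigma_{\mathrm I,1\Gamma}$, your random Lipschitz constants are not $O_P(1)$. Moreover, the absolute bound $|m^{(\mathrm I)}_g|\lesssim|\beta_1-\beta_2|\,E[X_{gh}^2\mid U_g]$ discards the centering and by itself only gives $\sqrt{r_{GH}}\,\delta\,O_P(1)$.

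In fact the lemma with a fixed $\delta$ is false as stated. Take $X_{gh}=1$, $\tau=1/2$, $\beta_0=0$, and $y_{gh}=s(U_g)\varepsilon_{gh}$, with $\varepsilon_{gh}$ i.i.d.\ $N(0,1)$ and a nonconstant $s(\cdot)\in[1,2]$. Then $\Psi_{gh}=\tfrac12-\mathbf{1}\{\varepsilon_{gh}\le0\}$ is i.i.d., so $r_{GH}=GH$ and the assumptions of Theorem \ref{thm:1} hold (case (ii) of Assumption \ref{as:order of variance}). Yet with $q_\delta(u):=\Phi(\delta/s(u))-\tfrac12$,
\[
\nu_S(\delta)-\nu_S(0)=\sqrt{H}\,\frac{1}{\sqrt{G}}\sum_{g=1}^{G}\bigl(q_\delta(U_g)-E[q_\delta(U_g)]\bigr)+O_P(1),
\]
which diverges for every fixed $\delta>0$. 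The paper's proof breaks at the same point. It divides $\psi^{(\mathrm I)}_g(\beta)$ by $\sigma_{\mathrm I,\Gamma}$ before invoking Newey's Corollary 3.1, and in Case 2 it says the row/column terms ``follow Case 1'' despite the factor $\sqrt{r_{GH}/G}\asymp\sqrt{H}$. So this is not an idea you missed relative to the paper, but your Step 2 does not close.

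What Theorem \ref{thm:1} actually uses is the local version, with the supremum over $|\beta_j-\beta_0|\le C r_{GH}^{-1/2}$, both in the rate step and for $\nu_S(\widehat\beta)-\nu_S(\beta_0)=o_P(1)$. For that version your plan works after two repairs.

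\emph{First, keep the centering.} The local class $\{m^{(\mathrm I)}(\cdot;\beta_1,\beta_2)\}$ is Lipschitz in $(\beta_1,\beta_2)$ with constant $\lesssim E[X_{gh}^2\mid U_g]+E[X_{gh}^2]$ and has parameter diameter $O(r_{GH}^{-1/2})$. Your density bound here needs a bound on $f_{e\mid X,U}$ near zero, as implied by Assumption \ref{as:bandwidth}(i); this is not among the assumptions of Theorem \ref{thm:1}. A bracketing maximal inequality then gives $E\sup|G^{-1/2}\sum_g m^{(\mathrm I)}_g|\lesssim r_{GH}^{-1/2}$, so the scaled row piece is of order $\sqrt{r_{GH}/G}\,r_{GH}^{-1/2}=G^{-1/2}$. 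The column piece is $O_P(H^{-1/2})$ by the same argument.

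\emph{Second, separate III from IV.} Conditional on $\mathcal{F}_{GH}$, the terms $m^{(\mathrm{III,IV})}_{gh}$ are independent but have conditional mean $m^{(\mathrm{III})}_{gh}\neq0$, so your conditional symmetrization controls only the IV part. The III part is orthogonal across cells, satisfies $E[(m^{(\mathrm{III})}_{gh})^2]\lesssim r_{GH}^{-1}$, and is Lipschitz in the parameter. Chebyshev on a finite grid plus the Lipschitz bound off the grid therefore makes it negligible after scaling. The paper's appeal to $\sigma^2_{\mathrm{III},\Gamma}=o(1)$ does not do this, since it is only a statement at $\beta_0$.

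Finally, in the local version your IV bound is of order $\{r_{GH}^{1/2}\log(GH)/(GH)\}^{1/2}\to0$, so no refinement to remove the logarithm is needed.
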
 
\begin{proof}
It is standard to show $\widehat{\beta}\overset{P}{\to}\beta_{0}$,
so we omit the proof. By a Hoeffding-type decomposition, 
\begin{align*}
&\nu_{S}(\beta)  =\nu_{S}^{(\mathrm{I})}(\beta)+\nu_{S}^{(\mathrm{II})}(\beta)+\nu_{S}^{(\mathrm{III})}(\beta)+\nu_{S}^{(\mathrm{IV})}(\beta)\\
 & :=\sqrt{\frac{r_{GH}}{G}}\frac{1}{\sqrt{G}}\sum_{g=1}^{G}\psi_{g}^{(\mathrm{I})}(\beta)+\sqrt{\frac{r_{GH}}{H}}\frac{1}{\sqrt{H}}\sum_{h=1}^{H}\psi_{h}^{(\mathrm{II})}(\beta)+\sqrt{\frac{r_{GH}}{GH}}\frac{1}{\sqrt{GH}}\sum_{g=1}^{G}\sum_{h=1}^{H}\Bigl(\psi_{gh}^{(\mathrm{III})}(\beta)+\psi_{gh}^{(\mathrm{IV})}(\beta)\Bigr),
\end{align*}
where $\psi_{gh}(\beta):=X_{gh}\Bigl(\tau-\mathbf{1}\{y_{gh}\le X_{gh}^{\top}\beta\}\Bigr)$
and 
\begin{align*}
\psi_{g}^{(\mathrm{I})}(\beta) & =E\!\bigl(\psi_{gh}(\beta)\vert U_{g}\bigr)-E\!\bigl(\psi_{gh}(\beta)\bigr),\\
\psi_{h}^{(\mathrm{II})}(\beta) & =E\!\bigl(\psi_{gh}(\beta)\vert V_{h}\bigr)-E\!\bigl(\psi_{gh}(\beta)\bigr),\\
\psi_{gh}^{(\mathrm{III})}(\beta) & =E\!\bigl(\psi_{gh}(\beta)\vert U_{g},V_{h}\bigr)-E\!\bigl(\psi_{gh}(\beta)\vert U_{g}\bigr)-E\!\bigl(\psi_{gh}(\beta)\vert V_{h}\bigr)+E\!\bigl(\psi_{gh}(\beta)\bigr),\\
\psi_{gh}^{(\mathrm{IV})}(\beta) & =\psi_{gh}(\beta)-E\!\bigl(\psi_{gh}(\beta)\vert U_{g},V_{h}\bigr).
\end{align*}

\emph{Case 1: $H\sigma_{\mathrm{I},1\Gamma}^{2}+G\sigma_{\mathrm{II},1\Gamma}^{2}\to\infty$.}
Without loss of generality assume $H\sigma_{\mathrm{I},1\Gamma}^{2}\ge G\sigma_{\mathrm{II},1\Gamma}^{2}$,
so that $r_{GH}=G/\sigma_{\mathrm{I},1\Gamma}^{2}$ and $\sqrt{r_{GH}/G}\,\sigma_{\mathrm{I},1\Gamma}=1$.
Moreover, 
\[
\sqrt{\frac{r_{GH}}{H}}\,\sigma_{\mathrm{II},\Gamma}=\sqrt{\frac{G}{H}}\frac{\sigma_{\mathrm{II},\Gamma}}{\sigma_{\mathrm{I},\Gamma}}\in[0,1],\qquad\sqrt{\frac{r_{GH}}{GH}}\to0.
\]
Then the usual i.i.d.-in-$g$ equicontinuity argument (see, e.g.,
Corollary 3.1 of \citet{newey1991uniform}) yields, for some $\delta>0$,
\[
\limsup_{G,H\to\infty}P\!\left(\sup_{\vert\beta_{1}-\beta_{2}\vert\le\delta}\left\vert \frac{1}{\sqrt{G}}\sum_{g=1}^{G}\sigma_{\mathrm{I},\Gamma}^{-1}\psi_{g}^{(\mathrm{I})}(\beta_{1})-\frac{1}{\sqrt{G}}\sum_{g=1}^{G}\sigma_{\mathrm{I},\Gamma}^{-1}\psi_{g}^{(\mathrm{I})}(\beta_{2})\right\vert >\eta\right)\le\varepsilon.
\]
Similarly, provided that $\sqrt{\frac{G}{H}}\frac{\sigma_{\mathrm{II},\Gamma}}{\sigma_{\mathrm{I},\Gamma}}\le1$,
one can deduce 
\begin{align*}
 & \limsup_{G,H\to\infty}P\!\left(\sup_{\vert\beta_{1}-\beta_{2}\vert\le\delta}\left\vert \sqrt{\frac{r_{GH}}{H}}\sigma_{\mathrm{II},\Gamma}\left(\frac{1}{\sqrt{H}}\sum_{h=1}^{H}\sigma_{\mathrm{II},\Gamma}^{-1}\psi_{h}^{(\mathrm{II})}(\beta_{1})-\frac{1}{\sqrt{H}}\sum_{h=1}^{H}\sigma_{\mathrm{II},\Gamma}^{-1}\psi_{h}^{(\mathrm{II})}(\beta_{2})\right)\right\vert >\eta\right)\\
\le & \limsup_{G,H\to\infty}P\!\left(\sup_{\vert\beta_{1}-\beta_{2}\vert\le\delta}\left\vert \frac{1}{\sqrt{H}}\sum_{h=1}^{H}\sigma_{\mathrm{II},\Gamma}^{-1}\psi_{h}^{(\mathrm{II})}(\beta_{1})-\frac{1}{\sqrt{H}}\sum_{h=1}^{H}\sigma_{\mathrm{II},\Gamma}^{-1}\psi_{h}^{(\mathrm{II})}(\beta_{2})\right\vert >\eta\right)\le\varepsilon.
\end{align*}
The rest terms are negligible since $\sqrt{\frac{r_{GH}}{GH}}\to0$.

\emph{Case 2: $r_{GH}\asymp GH$.} In such case, we have $H\sigma_{\mathrm{I},1\Gamma}^{2}+G\sigma_{\mathrm{II},1\Gamma}^{2}=O(1)$
and $\sigma_{\mathrm{III},1\Gamma}^{2}=o(1)$. The result for the
first two terms $\nu_{S}^{\mathrm{I}}(\beta)$ and $\nu_{S}^{\mathrm{II}}(\beta)$
follows Case 1. The third term $\nu_{S}^{\mathrm{III}}(\beta)$ is
also negligible given $\sigma_{\mathrm{III},1\Gamma}^{2}=o(1)$. It
suffices to show $\nu_{S}^{(\mathrm{IV})}(\beta)$ is stochastically
equicontinuous, i.e., for any $\varepsilon,\eta>0$ there exists $\delta>0$
such that 
\begin{equation}
\limsup_{G,H\to\infty}P\!\left(\sup_{\vert\beta_{1}-\beta_{2}\vert\le\delta}\left\vert \frac{1}{\sqrt{GH}}\sum_{g=1}^{G}\sum_{h=1}^{H}\psi_{gh}^{(\mathrm{IV})}(\beta_{1})-\frac{1}{\sqrt{GH}}\sum_{g=1}^{G}\sum_{h=1}^{H}\psi_{gh}^{(\mathrm{IV})}(\beta_{2})\right\vert >\eta\right)\le\varepsilon.\label{eq:equic1}
\end{equation}
Conditional on $\{(U_{g},V_{h})\}$, $\{\psi_{gh}^{(\mathrm{IV})}(\beta)\}_{g,h}$
are independent across $(g,h)$ and satisfy
the uniform conditional second-moment bound 
\[
\sup_{u,v}E\!\bigl(\vert\psi_{gh}^{(\mathrm{IV})}(\beta)\vert^{2}\vert U_{g}=u,V_{h}=v\bigr)\;\le\;\sup_{u,v}E\!\bigl(\vert X_{gh}\vert^{2}\vert U_{g}=u,V_{h}=v\bigr)\;<\infty.
\]
Therefore, a standard VC-type argument yields that conditional on $\{(U_{g},V_{h})\}$, 
\[
P\!\left(\sup_{\vert\beta_{1}-\beta_{2}\vert\le\delta}\left\vert \frac{1}{\sqrt{GH}}\sum_{g=1}^{G}\sum_{h=1}^{H}\psi_{gh}^{(\mathrm{IV})}(\beta_{1})-\frac{1}{\sqrt{GH}}\sum_{g=1}^{G}\sum_{h=1}^{H}\psi_{gh}^{(\mathrm{IV})}(\beta_{2})\right\vert >\eta\ \Bigm|\ \{(U_{g},V_{h})\}\right)\le\varepsilon,
\]
and \eqref{eq:equic1} follows by the law of total probability. 
\end{proof}
\begin{lemma}[Order of oracle variance] \label{lemma: order of oracle variance}Under
Assumptions of Theorem \ref{thm:2-1} holds, we have 
\begin{align*}
\widetilde{\Omega}_{\mathrm{I}}-\frac{\sigma_{\mathrm{I},\Gamma}^{2}}{G} & =O_{P}\left(G^{-3/2}\sigma_{\mathrm{I},\Gamma}^{2}+G^{-1}H^{-1}\sigma_{\mathrm{III},\Gamma}^{2}+G^{-3/2}H^{-1}\sigma_{\mathrm{IV},\Gamma}^{2}\right)+o_P(r_{GH}^{-1}),\\
\widetilde{\Omega}_{\mathrm{II}}-\frac{\sigma_{\mathrm{II},\Gamma}^{2}}{H} & =O_{P}\left(H^{-3/2}\sigma_{\mathrm{II},\Gamma}^{2}+G^{-1}H^{-1}\sigma_{\mathrm{III},\Gamma}^{2}+G^{-1}H^{-3/2}\sigma_{\mathrm{IV},\Gamma}^{2}\right)+o_P(r_{GH}^{-1}),\\
\widetilde{\Omega}_{\mathrm{III,IV}}-\frac{1}{GH}\left(\sigma_{\mathrm{III},\Gamma}^{2}+\sigma_{\mathrm{IV},\Gamma}^{2}\right) & =O_{P}\left(\left(GH\right)^{-1}\left(G^{-1/2}+H^{-1/2}\right)\sigma_{\mathrm{III},\Gamma}^{2}+G^{-3/2}H^{-3/2}\sigma_{\mathrm{IV},\Gamma}^{2}\right)+o_P(r_{GH}^{-1}),
\end{align*}
\end{lemma}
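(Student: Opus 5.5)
The plan is to expand each oracle component with the Hoeffding decomposition \eqref{eq:anova_decomposition_main}, and then handle each resulting piece by computing its mean and its second moment. Work with $d=1$; Assumption~\ref{as:homogeneous} and Cramér--Wold handle $d>1$. Write $\Psi_{gh}=A_g+B_h+C_{gh}+E_{gh}$, with $A_g=\Psi^{(\mathrm I)}_g$, $B_h=\Psi^{(\mathrm{II})}_h$, $C_{gh}=\Psi^{(\mathrm{III})}_{gh}$ and $E_{gh}=\Psi^{(\mathrm{IV})}_{gh}$. I will use the following conditional orthogonality facts repeatedly:
\[
E[C_{gh}\vert U_g]=E[C_{gh}\vert V_h]=0,\qquad E[E_{gh}\vert U_g,V_h]=0,
\]
and, conditional on $\mathcal F_{GH}=\sigma(\{U_g\},\{V_h\})$, the $E_{gh}$ are independent across cells. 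Moment control comes from Assumption~\ref{as:moment}(ii), which gives $4+\varrho$ moments of each standardized component, so every fourth moment below is of order $\sigma_{j,\Gamma}^4$.

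For $\widetilde\Omega_{\mathrm I}$, substitute the decomposition into $\Psi_{gh}\Psi_{gh'}$ with $h\neq h'$. This gives sixteen products, and I will sort them into three groups.

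The diagonal-in-$g$ term is $\frac{H-1}{G^2H}\sum_g A_g^2$. Its mean is $\sigma^2_{\mathrm I,\Gamma}/G-\sigma^2_{\mathrm I,\Gamma}/(GH)$. Since $\sigma^2_{\mathrm I,\Gamma}/(GH)=H^{-1}\cdot\sigma^2_{\mathrm I,\Gamma}/G$ and $r_{GH}^{-1}\gtrsim\sigma^2_{\mathrm I,\Gamma}/G$, the bias is $o(r_{GH}^{-1})$. Its fluctuation is $O_P(G^{-3/2}\sigma^2_{\mathrm I,\Gamma})$ by Chebyshev with fourth moments.

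The pure $C$ and pure $E$ terms $\frac1{G^2H^2}\sum_g\sum_{h\neq h'}C_{gh}C_{gh'}$ and $\frac1{G^2H^2}\sum_g\sum_{h\neq h'}E_{gh}E_{gh'}$ both have mean zero.
\begin{itemize}
\item For the $E$ term, condition on $\mathcal F_{GH}$. The summands are degenerate products of independent mean-zero variables, so the variance is $O(GH^2\sigma^4_{\mathrm{IV},\Gamma})$. This yields $O_P(G^{-3/2}H^{-1}\sigma^2_{\mathrm{IV},\Gamma})$.
\item For the $C$ term, covariances $E[C_{gh}C_{gh'}C_{g'k}C_{g'k'}]$ vanish unless the indices pair up. The orthogonality relations kill any term in which some $U_g$ or $V_h$ appears only once. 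Counting the surviving index configurations gives $O_P(G^{-1}H^{-1}\sigma^2_{\mathrm{III},\Gamma})$.
\end{itemize}

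All remaining cross products ($AB$, $BB$, $AC$, $BC$, $AE$, and so on) are handled by the same second-moment computation. Each is a degenerate or partially degenerate sum, and its order is a geometric mean of the orders above, for instance $G^{-3/2}H^{-1/2}\sigma_{\mathrm I,\Gamma}\sigma_{\mathrm{II},\Gamma}$ or $\sigma^2_{\mathrm{II},\Gamma}/(GH)$. By $2ab\le a^2+b^2$ together with $r_{GH}^{-1}\asymp H^{-1}\sigma^2_{\mathrm{II},\Gamma}+G^{-1}\sigma^2_{\mathrm I,\Gamma}+(GH)^{-1}(\sigma^2_{\mathrm{III},\Gamma}+\sigma^2_{\mathrm{IV},\Gamma})$, each of these is $o_P(r_{GH}^{-1})$. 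The bound for $\widetilde\Omega_{\mathrm{II}}$ follows by symmetry in $(g,U)\leftrightarrow(h,V)$.

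For $\widetilde\Omega_{\mathrm{III,IV}}=\frac1{G^2H^2}\sum_{g,h}\Psi_{gh}^2$, the mean is
\[
\frac1{GH}\bigl(\sigma^2_{\mathrm I,\Gamma}+\sigma^2_{\mathrm{II},\Gamma}+\sigma^2_{\mathrm{III},\Gamma}+\sigma^2_{\mathrm{IV},\Gamma}\bigr).
\]
The first two pieces are again $o(r_{GH}^{-1})$. For the fluctuations:
\begin{itemize}
\item $\sum_{g,h}(E_{gh}^2-E[E_{gh}^2\vert U_g,V_h])$ is conditionally independent across cells, giving $(GH)^{-3/2}\sigma^2_{\mathrm{IV},\Gamma}$.
\item The $\mathcal F_{GH}$-measurable parts such as $\sum_{g,h}(C_{gh}^2-\sigma^2_{\mathrm{III},\Gamma})$ form a two-way separately exchangeable array. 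A further Hoeffding projection shows their variance is $O(GH(G+H)\sigma^4_{\mathrm{III},\Gamma})$, which gives $(GH)^{-1}(G^{-1/2}+H^{-1/2})\sigma^2_{\mathrm{III},\Gamma}$.
\end{itemize}
Cross terms are once more absorbed into $o_P(r_{GH}^{-1})$.

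The step I expect to be the main obstacle is the bookkeeping for the interaction component $C_{gh}$, both in $\sum_{h\neq h'}C_{gh}C_{gh'}$ and in $\sum C_{gh}^2$. Unlike $E_{gh}$, these terms are not conditionally independent, so I must enumerate the index coincidences in the fourth-moment expansion and verify, case by case, that the conditional-mean-zero relations annihilate every configuration not contributing to the stated rate. For uniformity over $\Gamma$, I would carry out all bounds with constants depending only on the normalized $4+\varrho$ moments, and then pass to convergent subsequences as in the proof of Theorem~\ref{thm:1}.
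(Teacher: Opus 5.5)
\textbf{Verdict.} Your argument is correct in substance but takes a different route from the paper, and two steps in your bookkeeping need tightening.

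\textbf{Comparison of routes.} The paper never expands in the population projections. It rewrites $\widetilde\Omega_{\mathrm I}$ exactly as $\frac{1}{G^2}\sum_g\widetilde\Psi_g^{(\mathrm I)}\widetilde\Psi_g^{(\mathrm I)\top}-\frac{1}{G^2H^2}\sum_{g,h}\widetilde\Psi_{gh}^{(\mathrm{III})}\widetilde\Psi_{gh}^{(\mathrm{III})\top}$ plus a remainder, built from \emph{sample}-centered row, column and double-centered residuals (and likewise for $\widetilde\Omega_{\mathrm{II}}$ and $\widetilde\Omega_{\mathrm{III,IV}}$). It then asserts the remainder is $o_P(r_{GH}^{-1})$ and defers all rate calculations to Lemma A.1 of \citet{menzel2021bootstrap}. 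You instead substitute $\Psi_{gh}=A_g+B_h+C_{gh}+E_{gh}$ and compute means and fourth-moment bounds product by product. Your route is self-contained and shows the exact mean $E\widetilde\Omega_{\mathrm I}=\frac{H-1}{GH}\sigma^2_{\mathrm I,\Gamma}$ at once. In the paper's form, the $\sigma^2_{\mathrm{III},\Gamma}/(GH)$ and $\sigma^2_{\mathrm{IV},\Gamma}/(GH)$ contributions must instead cancel between the two sample-projection terms. The paper's route is shorter but outsources essentially all of the work, including the remainder bound it only asserts.

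\textbf{First point: the $AC$ cross term.} The term $\frac{2(H-1)}{G^2H^2}\sum_{g,h}A_gC_{gh}$ in $\widetilde\Omega_{\mathrm I}$ is not handled by $2ab\le a^2+b^2$ alone.
\begin{itemize}
\item In its second moment, the configurations $g\neq g'$, $h=h'$ survive, because $E[A_gC_{gh}\mid V_h]=\int\Psi^{(\mathrm I)}(u)\Psi^{(\mathrm{III})}(u,V_h)\,du$ need not vanish.
\item Hence $\mathrm{Var}(\sum_{g,h}A_gC_{gh})$ is in general of order $G^2H\,\sigma^2_{\mathrm I,\Gamma}\sigma^2_{\mathrm{III},\Gamma}$. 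The term is then $O_P\bigl(\sqrt{(\sigma^2_{\mathrm I,\Gamma}/G)\,(\sigma^2_{\mathrm{III},\Gamma}/(GH))}\bigr)$, an exact geometric mean with no extra vanishing factor.
\item So AM--GM yields only $O_P(r_{GH}^{-1})$. The term is also not dominated by the displayed terms $G^{-3/2}\sigma^2_{\mathrm I,\Gamma}+G^{-1}H^{-1}\sigma^2_{\mathrm{III},\Gamma}$.
\end{itemize}
What closes the gap is Assumption \ref{as:order of variance}. Along every convergent subsequence, one of two things holds:
\begin{itemize}
\item $H\sigma^2_{\mathrm I,1\Gamma}+G\sigma^2_{\mathrm{II},1\Gamma}\to\infty$, so $r_{GH}^{-1}\gg(GH)^{-1}$ while $\sigma^2_{\mathrm{III},\Gamma}\le E\|X_{gh}\|^2=O(1)$; or
\item $\sigma^2_{\mathrm{III},\Gamma}\to0$.
\end{itemize}
In both cases $\sigma^2_{\mathrm{III},\Gamma}/(GH)=o(r_{GH}^{-1})$, which makes the geometric mean $o(r_{GH}^{-1})$. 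The same argument is needed for the $BC$ term in $\widetilde\Omega_{\mathrm{II}}$.

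\textbf{Second point: a piece of $\widetilde\Omega_{\mathrm{III,IV}}$.} The $\mathcal F_{GH}$-measurable piece $\sum_{g,h}\bigl(E[E_{gh}^2\mid U_g,V_h]-\sigma^2_{\mathrm{IV},\Gamma}\bigr)$ is $O_P\bigl(\sqrt{GH(G+H)}\,\sigma^2_{\mathrm{IV},\Gamma}\bigr)$ by the same Hoeffding projection you use for $C_{gh}^2$. It therefore contributes $(GH)^{-1}(G^{-1/2}+H^{-1/2})\sigma^2_{\mathrm{IV},\Gamma}$, not the displayed $(GH)^{-3/2}\sigma^2_{\mathrm{IV},\Gamma}$. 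You should place it explicitly in the $o_P(r_{GH}^{-1})$ remainder. This is legitimate because $\sigma^2_{\mathrm{IV},\Gamma}=O(1)$ and $r_{GH}^{-1}\ge(GH)^{-1}$.
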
 where $\widetilde{\Omega}_{\mathrm{I}}$, $\widetilde{\Omega}_{\mathrm{II}}$,
and $\widetilde{\Omega}_{\mathrm{III,IV}}$ are defined in \eqref{eq: omega 1},
\eqref{eq: omega 2}, and \eqref{eq: omega 3}, respectively. 
\begin{proof}
Rearranging terms, we can write 
\begin{align*}
\widetilde{\Omega}_{\mathrm{I}}=\frac{1}{G^{2}H^{2}}\sum_{g=1}^{G}\sum_{h=1}^{H}\sum_{h\neq h'}^{H}\Psi_{gh}\Psi_{gh'}^{\top}= & \frac{1}{G^{2}}\sum_{g=1}^{G}\widetilde{\Psi}_{g}^{(\mathrm{I})}\widetilde{\Psi}_{g}^{(\mathrm{I})\top}-\frac{1}{G^{2}H^{2}}\sum_{g=1}^{G}\sum_{h=1}^{H}\widetilde{\Psi}_{gh}^{(\mathrm{III})}\widetilde{\Psi}_{gh}^{(\mathrm{III})\top}+\widetilde{\mathfrak{R}}_{\mathrm{I},GH},
\end{align*}
where 
\begin{align*}
\widetilde{\Psi}_{g}^{(\mathrm{I})}= & \frac{1}{H}\sum_{h=1}^{H}\Psi_{gh}-\frac{1}{GH}\sum_{g=1}^{G}\sum_{h=1}^{H}\Psi_{gh},\\
\widetilde{\Psi}_{h}^{(\mathrm{II})}= & \frac{1}{G}\sum_{g=1}^{G}\Psi_{gh}-\frac{1}{GH}\sum_{g=1}^{G}\sum_{h=1}^{H}\Psi_{gh},\\
\widetilde{\Psi}_{gh}^{(\mathrm{III})}= & \Psi_{gh}-\frac{1}{G}\sum_{g=1}^{G}\Psi_{gh}-\frac{1}{H}\sum_{h=1}^{H}\Psi_{gh}+\frac{1}{GH}\sum_{g=1}^{G}\sum_{h=1}^{H}\Psi_{gh},\\
\widetilde{\mathfrak{R}}_{\mathrm{I},GH}= & -\frac{1}{G^{2}H}\sum_{g=1}^{G}\widetilde{\Psi}_{g}^{(\mathrm{I})}\widetilde{\Psi}_{g}^{(\mathrm{I})\top}-\frac{1}{GH^{2}}\sum_{h=1}^{H}\widetilde{\Psi}_{h}^{(\mathrm{II})}\widetilde{\Psi}_{h}^{(\mathrm{II})\top}+\frac{H-1}{GH}\bar{\Psi}\bar{\Psi}^{\top}=o_{P}\left(r_{GH}^{-1}\right),\\
\bar{\Psi}= & \frac{1}{GH}\sum_{g=1}^{G}\sum_{h=1}^{H}\widetilde{\Psi}_{gh}.
\end{align*}
Similarly, we have 
\begin{align*}
\widetilde{\Omega}_{\mathrm{II}}=\frac{1}{G^{2}H^{2}}\sum_{g=1}^{G}\sum_{g'\neq g}^{G}\sum_{h=1}^{H}\Psi_{gh}\Psi_{g'h}^{\top}= & \frac{1}{H^{2}}\sum_{h=1}^{H}\widetilde{\Psi}_{h}^{(\mathrm{II})}\widetilde{\Psi}_{h}^{(\mathrm{II})\top}-\frac{1}{G^{2}H^{2}}\sum_{g=1}^{G}\sum_{h=1}^{H}\widetilde{\Psi}_{gh}^{(\mathrm{III})}\widetilde{\Psi}_{gh}^{(\mathrm{III})\top}+o_{P}\left(r_{GH}^{-1}\right),
\end{align*}
and
\begin{align*}
\widetilde{\Omega}_{\mathrm{III,IV}}=\frac{1}{G^{2}H^{2}}\sum_{g=1}^{G}\sum_{h=1}^{H}\Psi_{gh}\Psi_{gh}^{\top}= & \ \frac{1}{G^{2}H^{2}}\sum_{g=1}^{G}\sum_{h=1}^{H}\widetilde{\Psi}_{gh}^{(\mathrm{III})}\widetilde{\Psi}_{gh}^{(\mathrm{III})\top}\\
 & \quad+\frac{1}{G^{2}H}\sum_{g=1}^{G}\widetilde{\Psi}_{g}^{(\mathrm{I})}\widetilde{\Psi}_{g}^{(\mathrm{I})\top}+\frac{1}{GH^{2}}\sum_{h=1}^{H}\widetilde{\Psi}_{h}^{(\mathrm{II})}\widetilde{\Psi}_{h}^{(\mathrm{II})\top}+\frac{1}{GH}\bar{\Psi}\,\bar{\Psi}^{\top}\\
= & \frac{1}{G^{2}H^{2}}\sum_{g=1}^{G}\sum_{h=1}^{H}\widetilde{\Psi}_{gh}^{(\mathrm{III})}\widetilde{\Psi}_{gh}^{(\mathrm{III})\top}+o_{P}(r_{GH}^{-1}).
\end{align*}
The proof then follows Lemma A.1 in \citet{menzel2021bootstrap} and
hence is omitted. 
\end{proof}
\clearpage{}

\section{Proof of Proposition \ref{prop:impossibility_uniform}}

\paragraph{Step 1: Data generating process with local parameter.}

Consider the scalar median regression model \((\tau=1/2)\)
\begin{equation}
y_{gh}=X_{gh}\beta_{0}+e_{gh},\qquad g=1,\dots,G,\ \ h=1,\dots,H.
\label{eq:qr_dgp_model}
\end{equation}
Let the regressor have the two-way factor structure
\begin{equation}
X_{gh}=U_g^xV_h^x,\qquad U_g^x\stackrel{i.i.d.}{\sim}U(-1,1),\quad V_h^x\stackrel{i.i.d.}{\sim}U(0,2),
\label{eq:qr_dgp_x}
\end{equation}
where the row and column sequences are mutually independent. Let \(\{U_g^e\}_{g=1}^G\) be i.i.d.\ Rademacher signs with \(\Pr(U_g^e=1)=\Pr(U_g^e=-1)=1/2\). Let \(\{V_h^e\}_{h=1}^H\) be i.i.d.\ signs satisfying
\begin{equation}
\Pr(V_h^e=1)=\frac12+\frac{c}{2\sqrt H},\qquad
\Pr(V_h^e=-1)=\frac12-\frac{c}{2\sqrt H},
\label{eq:qr_dgp_r}
\end{equation}
where \(c\ge0\) is fixed. For \(c=1\) or \(c=2\), these probabilities are well-defined for all large \(H\). Finally, let \(\{\varepsilon_{gh}\}_{g\le G,h\le H}\) be i.i.d.\ \(N(0,1)\), independent of all row and column factors. For a fixed constant \(a>0\), define
\begin{equation}
e_{gh}=aU_g^eV_h^e+\varepsilon_{gh}.
\label{eq:qr_dgp_u_smooth}
\end{equation}

This DGP satisfies the AHK representation by collecting \((U_g^x,U_g^e)\) into the row latent variable and \((V_h^x,V_h^e)\) into the column latent variable. The conditional median restriction also holds. Indeed, since \(U_g^e\) is symmetric and independent of \(X_{gh}\), the sign \(S_{gh}:=U_g^eV_h^e\) satisfies \(\Pr(S_{gh}=1\mid X_{gh})=\Pr(S_{gh}=-1\mid X_{gh})=1/2\). Therefore
\[
P(e_{gh}\le0\mid X_{gh})
=\frac12\Phi(-a)+\frac12\Phi(a)
=\frac12,
\]
so \(Q_{e_{gh}}(1/2\mid X_{gh})=0\).

The density conditions are also satisfied. Conditional on \(U_g^e,V_h^e\), the density of \(e_{gh}\) is \(e\mapsto\phi(e-aU_g^eV_h^e)\), which is smooth with bounded derivatives of all orders. Conditional on \(X_{gh}\), the density is the symmetric normal mixture \(f_{e\mid X}(e\mid X_{gh})=\frac12\phi(e-a)+\frac12\phi(e+a)\). Hence \(f_{e\mid X}(0\mid X_{gh})=\phi(a)>0\), uniformly in \(X_{gh}\), and \(f_{e\mid X}\), \(f_{e\mid X}^{(1)}\), and \(f_{e\mid X}^{(2)}\) are uniformly bounded. The conditional joint densities required in Assumption \(\ref{as:moment-strong}\) are finite mixtures of products of shifted normal densities, so their first and second partial derivatives are bounded by integrable normal-polynomial envelopes. Since \(|X_{gh}|\le2\), all required moment conditions hold. Moreover, \(E[X_{gh}^{2}]=E[(U_g^x)^2]E[(V_h^x)^2]=4/9>0\), so the nonsingularity conditions hold. Finally, the variance lower-bound condition in the definition of \(\mathcal B_3\) holds because the interaction component and the idiosyncratic component below have nonzero variances. Thus the DGP belongs to \(\mathcal B_3\).

\paragraph{Step 2: Limit distribution.}

Define the median score at \(\beta\) by
\(\psi_{gh}(\beta):=X_{gh}\bigl(1/2-\mathbf{1}\{y_{gh}\le X_{gh}\beta\}\bigr)\). At the truth \(\beta=\beta_0\), write \(S_{gh}:=U_g^eV_h^e\) and define \(\kappa:=\Phi(a)-1/2>0\). Since
\[
E\bigl[1/2-\mathbf{1}\{aS_{gh}+\varepsilon_{gh}\le0\}\mid S_{gh}\bigr]
=\Phi(aS_{gh})-\frac12
=\kappa S_{gh},
\]
we can write
\[
\frac12-\mathbf{1}\{e_{gh}\le0\}
=
\kappa U_g^eV_h^e+\xi_{gh},
\qquad
E[\xi_{gh}\mid U_g^e,V_h^e,U_g^x,V_h^x]=0.
\]
Therefore
\begin{equation}
\psi_{gh}(\beta_0)
=
\kappa(U_g^xU_g^e)(V_h^xV_h^e)+U_g^xV_h^x\xi_{gh}.
\label{eq:qr_score_smooth_decomp}
\end{equation}

The Jacobian is
\[
D:=E[f_{e\mid X}(0\mid X_{gh})X_{gh}^{2}]
=
\phi(a)E[(U_g^x)^2]E[(V_h^x)^2]
=
\frac49\phi(a).
\]
By the standard Bahadur expansion for median regression,
\begin{equation}
\sqrt{GH}(\widehat\beta-\beta_0)
=
D^{-1}\frac1{\sqrt{GH}}\sum_{g=1}^{G}\sum_{h=1}^{H}\psi_{gh}(\beta_0)+o_p(1).
\label{eq:qr_bahadur_smooth}
\end{equation}

Using \eqref{eq:qr_score_smooth_decomp},
\[
\frac1{\sqrt{GH}}\sum_{g=1}^{G}\sum_{h=1}^{H}\psi_{gh}(\beta_0)
=
\kappa\left(\frac1{\sqrt G}\sum_{g=1}^{G}U_g^xU_g^e\right)
\left(\frac1{\sqrt H}\sum_{h=1}^{H}V_h^xV_h^e\right)
+
\frac1{\sqrt{GH}}\sum_{g=1}^{G}\sum_{h=1}^{H}U_g^xV_h^x\xi_{gh}.
\]
For the first factor, since \(E[U_g^xU_g^e]=0\) and \(E[(U_g^xU_g^e)^2]=1/3\), the CLT gives
\[
\frac1{\sqrt G}\sum_{g=1}^{G}U_g^xU_g^e\overset{d}{\to}Z_U,
\qquad Z_U\sim N(0,1/3).
\]
For the second factor, \(E[V_h^xV_h^e]=E[V_h^x]E[V_h^e]=c/\sqrt H\). Hence
\[
\frac1{\sqrt H}\sum_{h=1}^{H}V_h^xV_h^e
=
\frac1{\sqrt H}\sum_{h=1}^{H}\{V_h^xV_h^e-E[V_h^xV_h^e]\}
+
c.
\]
Since \(E[(V_h^xV_h^e)^2]\to E[(V_h^x)^2]=4/3\), the CLT yields
\[
\frac1{\sqrt H}\sum_{h=1}^{H}V_h^xV_h^e\overset{d}{\to}Z_V+c,
\qquad Z_V\sim N(0,4/3).
\]
The two limits \(Z_U\) and \(Z_V\) are independent because they are functions of independent row and column factors.

It remains to handle the last term in \eqref{eq:qr_score_smooth_decomp}. Conditional on the row and column factors, \(\{U_g^xV_h^x\xi_{gh}\}_{g,h}\) are independent, mean zero, and uniformly bounded in moments. Moreover,
\[
Var(\xi_{gh}\mid U_g^e,V_h^e,U_g^x,V_h^x)
=
\frac14-\kappa^2,
\]
which does not depend on \(g,h\). Therefore, by the conditional Lindeberg CLT,
\[
\frac1{\sqrt{GH}}\sum_{g=1}^{G}\sum_{h=1}^{H}U_g^xV_h^x\xi_{gh}
\overset{d}{\to}Z_0,
\qquad
Z_0\sim N\left(0,\frac49\left(\frac14-\kappa^2\right)\right).
\]
The convergence is stable with respect to the row and column factors, so \(Z_0\) is independent of \((Z_U,Z_V)\). Combining the preceding displays gives
\begin{equation}
\sqrt{GH}(\widehat\beta-\beta_0)
\overset{d}{\to}
\frac{9}{4\phi(a)}
\left\{
\kappa Z_U(Z_V+c)+Z_0
\right\}.
\label{eq:qr_limit_smooth}
\end{equation}

The limit distribution depends on \(c\). In particular, its variance equals
\[
\left(\frac{9}{4\phi(a)}\right)^2
\left[
\kappa^2\frac13\left(\frac43+c^2\right)
+
\frac49\left(\frac14-\kappa^2\right)
\right],
\]
which is different for \(c=1\) and \(c=2\). Hence the two limiting distribution functions, denoted by \(F_1\) and \(F_2\), are distinct. Since both are continuous, \(d_K(F_1,F_2):=\sup_t|F_1(t)-F_2(t)|>0\).

\paragraph{Step 3: Impossibility of consistent testing.}

Let \(\Gamma_c\) denote the DGP above with local parameter \(c\), and compare \(c=1\) with \(c=2\). The parameter \(c\) enters the DGP only through the distribution of the column signs \(\{V_h^e\}_{h=1}^{H}\). Conditional on these signs and on all other latent variables, the distribution of the observed sample is independent of \(c\). Thus the observed experiment is a garbling of the sign experiment \(\{V_h^e\}_{h=1}^{H}\).

Let \(P_c^V\) denote the law of \(\{V_h^e\}_{h=1}^{H}\) under \(\Gamma_c\). The Hellinger affinity between \(P_1^V\) and \(P_2^V\) is
\[
\rho(P_1^V,P_2^V)
=
\left[
\sqrt{\left(\frac12+\frac{1}{2\sqrt H}\right)\left(\frac12+\frac{2}{2\sqrt H}\right)}
+
\sqrt{\left(\frac12-\frac{1}{2\sqrt H}\right)\left(\frac12-\frac{2}{2\sqrt H}\right)}
\right]^H.
\]
A Taylor expansion gives \(\rho(P_1^V,P_2^V)\to \exp(-1/8)>0\). Since Hellinger affinity cannot decrease under a Markov kernel, the Hellinger affinity between the observed laws \(P_{\Gamma_1}\) and \(P_{\Gamma_2}\) is also bounded away from zero. Consequently, \(\|P_{\Gamma_1}-P_{\Gamma_2}\|_{TV}\) is bounded away from one. Hence no test based on the observed sample can consistently distinguish \(\Gamma_1\) from \(\Gamma_2\).

Now suppose, toward a contradiction, that there exists a measurable distribution estimator \(\widehat E\) that is uniformly consistent over \(\mathcal B_3\). Since the limiting distribution under \(\Gamma_1\) is \(F_1\) and the limiting distribution under \(\Gamma_2\) is \(F_2\), and since \(d_K(F_1,F_2)>0\), such an estimator would yield a consistent test: choose \(\Gamma_1\) if \(\sup_t|\widehat E(t)-F_1(t)|\le \sup_t|\widehat E(t)-F_2(t)|\), and choose \(\Gamma_2\) otherwise. Uniform consistency would make the probability of error tend to zero under both \(\Gamma_1\) and \(\Gamma_2\), contradicting the impossibility of consistent testing established above. Therefore no uniformly consistent distribution estimator exists over \(\mathcal B_3\).

 \bibliographystyle{chicago}
\bibliography{multiway_clustering}

\end{document}